\documentclass[12pt,english]{article}
\usepackage{lmodern}
\usepackage{booktabs}
\usepackage[table]{xcolor}
\usepackage{float}
\usepackage{caption}
\usepackage{subcaption}
\definecolor{mixedrow}{gray}{0.92}
\usepackage[T1]{fontenc}
\usepackage[utf8]{inputenc}
\usepackage{geometry}
\geometry{verbose,tmargin=1in,bmargin=1in,lmargin=1in,rmargin=1in}
\usepackage{color}
\usepackage{babel}
\usepackage{mathtools}
\usepackage{amsmath}
\usepackage{amsthm}
\newcommand{\E}{\mathbb{E}}
\usepackage{mathrsfs}
\usepackage{amssymb}
\usepackage{enumitem}
\usepackage{graphicx}
\usepackage[authoryear]{natbib}
\usepackage[unicode=true,
 bookmarks=true,bookmarksnumbered=false,bookmarksopen=false,
 breaklinks=false,pdfborder={0 0 1},backref=false,colorlinks=true]
 {hyperref}
\hypersetup{
 pdfborderstyle=,pdfpagelayout=OneColumn,pdfnewwindow=true,pdfstartview=XYZ,plainpages=false,urlcolor=[rgb]{0.0430,0,0.5},linkcolor=[rgb]{0.0430,0,0.5},citecolor=[rgb]{0.0430,0,0.5},hypertexnames=false}
 \usepackage{setspace}

\allowdisplaybreaks[1]

\setlength{\textfloatsep}{10pt plus 2pt minus 2pt}
\setlength{\floatsep}{8pt plus 2pt minus 2pt}
\setlength{\intextsep}{8pt plus 2pt minus 2pt}

\usepackage[font=small,skip=4pt]{caption}

\setlength{\abovedisplayskip}{8pt}
\setlength{\belowdisplayskip}{8pt}
\setlength{\abovedisplayshortskip}{6pt}
\setlength{\belowdisplayshortskip}{6pt}

\usepackage{titlesec}
\titlespacing*{\section}{0pt}{1.2ex plus .2ex minus .2ex}{0.6ex plus .2ex}
\titlespacing*{\subsection}{0pt}{1.0ex plus .2ex minus .2ex}{0.4ex plus .2ex}
\titlespacing*{\subsubsection}{0pt}{0.8ex plus .2ex minus .2ex}{0.3ex plus .2ex}

\usepackage{microtype}
\setlist{itemsep=0pt, topsep=4pt, parsep=0pt, partopsep=0pt}

\makeatletter
\theoremstyle{plain}
\newtheorem{prop}{\protect\propositionname}
\theoremstyle{definition}
\newtheorem{defn}{\protect\definitionname}
\theoremstyle{plain}
\newtheorem{lem}{\protect\lemmaname}
\theoremstyle{plain}
\newtheorem{thm}{\protect\theoremname}
\newtheorem{cor}{\protect\corollaryname}

\usepackage{chngcntr}
\setcitestyle{round}
\usepackage{breakcites}
\usepackage[all]{hypcap}
\usepackage{dcolumn}

\providecommand{\corollaryname}{Corollary}
\providecommand{\definitionname}{Definition}

\providecommand{\lemmaname}{Lemma}
\providecommand{\propositionname}{Proposition}
\providecommand{\theoremname}{Theorem}
\begin{document}
\title{Learning from Viral Information\thanks{We thank Leonie Baumann, Michel Bena\"{i}m, Aislinn Bohren, Tommaso Denti, Glenn Ellison, Mira Frick, Drew Fudenberg, Co-Pierre Georg, Ben Golub, Benjamin Hebert, Ryota Iijima, Bart Lipman, George Mailath, Suraj Malladi, Chiara Margaria, Meg Meyer, Evan Sadler, Philipp Strack, Heidi Thysen, Fernando Vega-Redondo, Rakesh Vohra, Yu Fu Wong, numerous seminar participants, and the editor and anonymous referees for valuable comments and discussions. Byunghoon Kim, Luis Henrique Linhares, Matt Murphy, Stephan Xie, and Tyera Zweygardt provided excellent research assistance. We gratefully acknowledge financial support from NSF Grants SES-2214950 and SES-2215256.}}
\author{Krishna Dasaratha\thanks{Boston University. Email: \texttt{\protect\href{mailto:krishnadasaratha@gmail.com}{krishnadasaratha@gmail.com}}}
\and Kevin He\thanks{University of Pennsylvania. Email: \texttt{\protect\href{mailto:hesichao@gmail.com}{hesichao@gmail.com}}}}

\date{{\normalsize{}}%
\begin{tabular}{rl}
First version: & August 20, 2022\tabularnewline
This version: & June 25, 2026 \tabularnewline
\end{tabular}}
\maketitle
\begin{abstract}
{\normalsize{}\thispagestyle{empty}
\setcounter{page}{0}}{\normalsize\par}

Motivated by social media, we study an equilibrium model  of  agents  interacting with and learning from each other's signals. Rational agents arrive sequentially, observe a signal (corresponding to a news story) and a sample of predecessors' signals (corresponding to a news feed), and decide which of these signals to endorse. The observed sample is jointly determined by predecessors' endorsement behavior and  a sampling rule (capturing a platform algorithm). We focus on how often the sampling rule selects more viral (i.e., widely endorsed) signals. Showing agents viral signals can increase information aggregation, but it can also generate steady states where most endorsed signals are wrong. These misleading steady states self-perpetuate, as agents who observe wrong signals develop wrong beliefs, and thus rationally continue to endorse them. We highlight several consequences of our results for social-media platforms.

\bigskip{}

\noindent \textbf{Keywords}: social learning, selective equilibrium sharing, social media, platform design, endogenous virality
\end{abstract}
\makeatother

\providecommand{\definitionname}{Definition}
\providecommand{\lemmaname}{Lemma}
\providecommand{\propositionname}{Proposition}
\providecommand{\theoremname}{Theorem}

\onehalfspacing

\newpage


\section{Introduction}

People are often exposed to information spreading through societies, and learning outcomes depend substantially on what content spreads widely. A leading example is social-media platforms, where the content viewed by users is largely mediated by platform-generated news feeds. Whether a story spreads widely or fades from view depends jointly on the algorithms that curate these feeds and on users' endogenous actions, such as retweeting, sharing, and upvoting.

How does the design of the news feed affect how users learn on such platforms? Consider a platform deciding how much to push widely shared (or highly upvoted) content into users' news feeds. On the one hand, a
news feed that primarily shows users widely shared stories can create
a social version of  confirmation bias: incorrect stories that become popular early on  can shape later users' beliefs, even though most subsequent information points the other way.   One might expect such  feedback loops  with naive users, but we show they can also arise in an equilibrium model with rational users. The idea is that when stories supporting an incorrect position are shared more, later users tend to see these incorrect stories in their news feeds due to the stories' popularity, and hence  form incorrect beliefs through Bayesian updating. If users derive utility from sharing accurate content and thus share stories that agree with their beliefs, they will rationally share  these false stories and further increase their popularity.  Users have less exposure   to
the true stories: even if these stories are more numerous, they are shared less than the false stories and therefore shown less by the news-feed algorithm.

But on the other hand, selecting news stories based on their popularity
may help aggregate more information. Seeing a particular story in a
news feed that selects widely shared content gives a user more information than the realization of a
single signal. The popularity of this story also tells the user about the past sharing decisions
of their predecessors, and thus lets the user draw inferences about
the many stories that these predecessors saw in their news feeds. In some circumstances,  seeing just a few stories in a news feed that primarily shows viral content can lead to strong Bayesian beliefs about the state of nature, even if individual stories are imprecise signals about the state. This is because   sophisticated users can use the selection of these stories to infer much more about sharing on the platform.

To formalize and explore these tradeoffs, we develop a social-learning model where agents sample and interact with others' signals.  We will draw on mathematical techniques from a stochastic approximation literature to track the dynamics of signals and interactions in a society. We now describe a set of modeling assumptions that will let us apply these techniques to learning dynamics under equilibrium behavior.


A large number of agents arrive in turn and learn about a binary state. Each agent receives a conditionally independent binary signal about the state (corresponding to a news story) and observes a sample of signals from predecessors (corresponding to a news feed). The sampling rule interpolates between choosing a uniform sample of the past signals and  choosing each signal with probability proportional to its \emph{popularity score}, which increases as agents \emph{endorse} the signal.   Sampling rules are parametrized by a \emph{virality weight}  $\lambda$ that captures the weight placed on  popularity: higher $\lambda$ corresponds to showing more popular signals.  Agents are Bayesians and know the sampling rule, so they appropriately account for  selection in the signals they see.\footnote{An alternative approach would be to assume agents are naive and fail to account for this selection. Many of the main forces we highlight in our equilibrium framework would also appear in this behavioral model.}  Agents then choose which of these sampled signals to endorse. We assume agents prefer to endorse signals that match the true state, given their endogenous beliefs. This simple utility specification, which one might think is conducive to learning, can nevertheless generate rich learning dynamics such as persistent learning failures.

We next describe our results. The evolution of the system is described by a stochastic process in $[0,1]$ we call \emph{viral accuracy}, which measures the relative popularity of the signals that match the true state in each period. We show viral accuracy almost surely converges to a steady-state value, but there can be multiple steady states and which steady state is reached can depend on the realizations of private signals and sampling. In equilibrium, there is always an \emph{informative steady state} where most sampled signals match the state. But when the virality weight is high enough, equilibrium also induces a \emph{misleading steady state}, where most sampled signals do not match the state (so viral accuracy is less than $\frac12$). At a misleading steady state, agents tend to see false signals, and therefore believe in the wrong state and endorse these false signals. The misleading steady states correspond to the socially generated confirmation bias described above.

These misleading steady states emerge when $\lambda$ crosses a threshold, which we call the \emph{critical virality weight} $\lambda^*$. Misleading steady states always exist in equilibrium  when the virality weight is at or above this threshold, but not below it.  A key finding is that this emergence is \emph{discontinuous}: at the  threshold virality level $\lambda^*$ where the misleading steady state first appears, the probability that learning converges to this bad steady state is strictly positive. As a consequence, expected accuracy jumps downward at this threshold. Below the critical virality weight, however, the unique informative steady state becomes monotonically more accurate as $\lambda$ increases. This result formalizes the intuition mentioned above that a more viral sampling rule helps aggregate more information.  Increasing $\lambda$ therefore leads to a trade-off between facilitating more information aggregation and preventing the possibility of a misleading steady state in equilibrium.

After characterizing equilibrium steady states, we give two consequences for social-media platforms. First, the model predicts a popularity score distribution, which corresponds to an empirical distribution of the number of shares (or likes, retweets, etc.) on social-media platforms. At steady state, these popularities converge to a stationary distribution, which we solve for explicitly. These distributions have power-law tails whose thickness is a simple function of model parameters and the realized steady state; empirical work on distributions of shares on social media platforms also finds power-law structure (e.g., \cite{kwak2010twitter} and \cite{garg2025political}).

Second, we ask a natural question with potential implications for the regulation of platforms: to what extent can changes to platforms improve accuracy? We  describe a content-neutral change to algorithms that leads to better learning outcomes: letting the virality weight $\lambda$ vary over time. Consider generating initial agents' news feeds with a low virality weight  but later agents' news feeds with a high virality weight. We show there is a simple equilibrium that achieves high viral accuracy without producing misleading steady states. Intuitively, one way to improve learning is to let independent information accumulate early in the discussion of a new issue and then exploit the advantages of showing viral content later in the discussion.


We close this introduction by describing the techniques underlying our modeling and analysis. We study a model where society always converges to a steady-state distribution of sampled signals, but the model allows the possibility of multiple steady states. Agents do not know whether they are at a misleading or informative steady state, and instead form Bayesian beliefs based on the probabilities of reaching the various steady states. We therefore cannot begin our model at a steady state, as these probabilities would be indeterminate,  and must instead track the stochastic evolution of the society from its initial conditions. To analyze this evolution, we rely on stochastic approximation techniques from mathematics (rather than more traditional tools from the social-learning literature, such as martingale convergence theorems applied to the public belief process).

Our paper applies these stochastic approximation  tools to an equilibrium model where agents respond optimally to the evolution of a stochastic process. The same techniques have been used in economics to study dynamics under behavioral heuristics (e.g., \cite{benaim2003deterministic} in evolutionary game theory or \cite*{arieli2022aggregate} in naive social learning). By contrast, applying these tools to a setting where agents use equilibrium strategies is more complex (see also \cite{iijima2025mean} in the evolutionary game theory setting).  A key challenge is that there is no closed-form expression for the probabilities of reaching different steady states, even under a fixed strategy. We are nevertheless able to characterize the possible steady states under equilibrium behavior qualitatively. To do so, we show that outcomes under a specific simple strategy (namely, sharing signals that match the majority of one's observations) tell us about  equilibrium outcomes (which cannot be characterized directly). In particular, a misleading steady state exists with positive probability when agents choose equilibrium sharing strategies if and only if one exists when agents follow this simple strategy.

\subsection{Related Literature}

We begin by describing connections with observational social-learning models. There is a large literature beginning with \cite{banerjee1992simple} and \cite*{bikhchandani1992theory}, and we mention a few relevant threads. Perhaps closest to our model, several papers  assume agents observe a random sample of predecessors' actions (including \cite{banerjee2004word},  \cite{kabos2021welfare}, and \cite*{levy2022stationary}). Our techniques, meanwhile, are based on the same mathematics literature as \cite*{arieli2022aggregate}, who model the distribution of actions taken by agents as a generalized P{\'o}lya urn. Finally, the basic insight that early independent information can improve social learning (Section~\ref{sec:time_varying}) also appears in other settings, including models where a subset of agents act with no observations (\cite{sgroi2002optimizing}, \cite*{peres2020fragile}).

A high-level distinction is that  agents observe signals directly in our model, rather than actions incorporating signals, but the observed signals are endogenously selected. At a  more theoretical level, one can interpret our model as interpolating between a version of observational learning and a case where agents observe unbiased signals (see Section~\ref{sec:discussion}). This leads to several new dynamics relative to the classical herding literature. First, agents learn imperfectly in the long run even without herding-type behavior, so we can quantitatively compare how long-run learning outcome changes across different sampling parameters  (e.g., Proposition \ref{prop:informative}). Such comparisons are key to the main trade-off between more information aggregation and misleading steady states. Second, misleading steady states can persist in our model even when new private information continues to arrive and play a pivotal role in later agents' sharing decisions.\footnote{For example, when agents see the same number of positive and negative signals in the sample  (which happens with positive probability in every steady state when the sample size is even),  they always share the sampled signals that match their private signals.} By contrast, classical results on information cascades rely sharply on later agents' private signals having no impact on any agents' actions.

Our model also relates to a recent literature on learning from shared signals. As we discuss in detail below, existing work focuses on the dissemination of a single signal, or on settings where signals are shared once with network neighbors but not subsequently re-shared.  Our model differs on both of these dimensions. First, we allow many signals about the same
state to circulate simultaneously. These signals interact: a user's social information consists of the multiple signals, so the probability that they share a given signal depends on whether their other signals  corroborate it or contradict it.\footnote{\cite*{jackson2023grapevineWP} study a model with many signals circulating where transmission is exogenous and the main friction instead comes from messages mutating.} Second, we allow signals to be shared widely and shown to many users. The combination of these two model features generates the social version of confirmation bias that we outlined earlier.


\cite*{bowen2021learning} study a model where signals are selectively shared at most once with network neighbors, but agents are misspecified and partially neglect this selection. This bias leads to mislearning,  and it also generates polarization in social networks with echo chambers. By contrast, we focus on rational agents who make endogenous sharing decisions in equilibrium. \cite{bowen2021learning} note that ``the Internet has also brought an abundance of
information, which should lead people to learn quickly and beliefs to converge (not diverge)
according to standard economic models.'' Our results imply that even if people observe a large (but finite) amount of information and rationally account for selection, they can converge to a misleading steady state.

Another group of papers in operations research and economics study settings with ``fake news'' where people decide whether to share a story depending on the outcome of a (possibly noisy) fact check (e.g., \cite{papanastasiou2020fake}, \cite{kranton2020social}, and \cite*{merlino2022debunking}) or depending on their prior beliefs about the story's likelihood (e.g., \cite*{bloch2018rumors}, \cite*{acemoglu2021misinformation} and \cite*{hsu2021persuasion}). Most of these papers consider the diffusion of a single signal that can be re-shared through a network, while \cite{kranton2020social} look at the supply-side decisions of information producers when consumers can share their stories at most once with network neighbors.\footnote{\cite*{merlino2022debunking}'s model features one true and one false message.} We focus on a different dimension of platforms. Instead of asking about the network structure that connects users on the platform (e.g., echo chambers) or fact-checking technologies, we analyze the impact of showing users more viral content. 


\cite*{buechel2022misinformation}, like our work,  consider an environment where agents can share and re-share copies of a signal. In this model, agents' sharing behavior resembles the DeGroot heuristic. In particular, their agents' sharing is  independent of beliefs, while we study sharing rules that seek to share correct signals and therefore depend on beliefs.


\section{Model}\label{sec:model}

We consider a finite society with $n$ agents learning in sequence
about an unknown state of nature $\omega\in\{-1,1\}$. Everyone starts
with the common prior that both states are equally likely. Each agent
receives a binary private signal $s_{i}\in\{-1,1\}$ about the state. Call $s_{i}=-1$ a negative signal and $s_{i}=1$
a positive signal. We assume signals are conditionally independent
and symmetric, so that $\mathbb{P}[s_{i}=-1|\omega=-1]=\mathbb{P}[s_{i}=1|\omega=1]=q$
for some  \emph{signal precision} $0.5<q<1$. We also keep track of the
\emph{popularity score} of each signal $s_{i}$, denoted $\rho(s_{i})$.
Each of these $n$ signals $s_1 ,..., s_n$ starts with a score of 1 when it arrives.

After the state of nature $\omega$ realizes but before any agent acts, we  initialize the society's pool of signals  with a set of $n_0 \ge 2$ binary \emph{seed signals} $s_0^{1}, ..., s_0^{n_0}$.  Conditional on $\omega$, the realizations and initial popularity scores of the $n_0$ seed signals   are drawn i.i.d. from a joint distribution $\psi$ which is invariant under permuting the two states. (The seed signals may start with scores larger than 1.) For example, we could begin by drawing $n_0$ signals of precision $q$ with popularity scores of $1$.

We fix a \emph{sample size} $K$ with $2 \le K \le n_0$ and \emph{capacity} $0<C<K$. Each agent $1 \le i \le n$
sees a sample of $K$ signals from society's current pool of signals.
Agents only observe the realizations of the $K$ sampled signals,
and not their popularity scores or arrival times. Then, agent $i$ \emph{endorses}
$C$ out of the $K$ signals from their sample, increasing
each endorsed signal's popularity score by 1, and their signal $s_i$ is added to the pool. Agent $i$ gets utility
$u>0$ for each endorsed signal that matches the state $\omega.$

A \emph{virality weight }$\lambda\in[0,1]$ determines how $i$ samples from the current pool of signals. Each of the $K$ signals in $i$'s sample has a $\lambda$ chance of being drawn with probabilities proportional to the current popularity scores. With the complementary
probability, it is drawn uniformly at random from the pool of signals. We assume for simplicity that all signals are sampled with
replacement (as we approach the steady state, the effect of replacement
vanishes). All draws are independent.

The sampling rule includes two special cases: 
\begin{enumerate}
\item \textbf{Popularity-based sampling} ($\lambda=1)$:\textbf{ }A signal with twice the popularity score of another has twice the probability of being
sampled. 
\item \textbf{Uniform sampling }($\lambda=0$): Predecessors' endorsement decisions
do not affect sampling. 
\end{enumerate}
More generally, sampling rules with $\lambda$ between zero and one
interpolate between these two cases. The virality weight $\lambda$ measures how much the sampling rule selects more popular signals relative to random signals.

The $n$ agents are uniformly randomly placed into the $n$ positions,
and do not know their positions. Before observing their sample, each agent (correctly)
believes that they are in each position $1,\hdots,n$ with equal probabilities.
 The informational
environment is common knowledge.

\subsection{Discussion and Interpretation}\label{sec:discussion}

We begin by explaining connections to social-media platforms. We then discuss our assumptions about agents' behavior and information and compare our model to sequential social learning settings.

\textbf{Social-Media Interpretation.} Our primary interpretation of the model concerns learning on social-media platforms. Signals $s_i$ correspond to news stories that users discover
from external sources and post on the social-media platform
(e.g., X, Reddit, or Facebook). The samples of previous stories represent news feeds shown to users by a social-media platform. What we generically refer to as ``endorsing'' in our model corresponds
to platform-specific user interactions with content, such as retweeting on X, upvoting on Reddit,  re-sharing friends' posts on Facebook,  and so forth.

Signals arrive exogenously and start with a popularity score of $1$ in our model,  meaning that agents always post the stories they discover. We make this assumption to ensure that new private information continues to arrive and spread on the platform, but could easily adapt our results to other assumptions about the information arrival process.

The news-feed algorithm determines what content is shown to users. It can focus on showing more viral content (larger
$\lambda$) or more ``random'' content (smaller $\lambda$). Displaying
``random'' content could represent, for example, showing a user
the most recent stories that their friends posted without regard for
the stories' popularity score. The virality of the news feed is a design choice that social-media companies devote substantial attention to in practice. Over the years, different iterations of the X (Twitter) feed gave different levels of emphasis to the trending or most popular tweets on the platform. Reddit's ordering algorithm for displaying posts on the front page similarly evolved over many years.\footnote{A 2009 entry on Reddit's company blog discusses tradeoffs in prioritizing more popular comments, including concerns about feedback loops resembling those we will see in our analysis: ``Once a comment gets a few early upvotes, it's moved
to the top. The higher something is listed, the more likely it is to be read (and voted on), and the more votes the comment gets. It's a feedback loop'' \citep{munroe_2009}.}

\textbf{Discussion of Assumptions.} We next discuss several of our assumptions about agents' behavior and information. We assume that agents are rational and want to endorse signals that match the true state. We will see that even under these assumptions, which we view as relatively conducive to learning, there are often misleading steady states. In the context of our social-media application, this assumption is also motivated by empirical evidence on content-sharing behavior. In laboratory experiments, \citet{pennycook2020fighting,pennycook2021shifting} find people have an intrinsic preference for sharing news from more trustworthy sources, which are more likely to accurately
reflect the state. Our analysis is robust to including other agent objectives, provided agents also  care enough about endorsing accurate signals (as discussed in the conclusion).

We assume an explicit capacity constraint $C$ on how many signals people can endorse.  Even if the agents in our model were not forced to endorse exactly $C$ signals out of the $K$ in their sample, they would still find it optimal to always endorse $C$ signals because there is no penalty for endorsing incorrect signals. This improves the model's tractability, as the analysis is considerably cleaner when the number of signals endorsed does not depend on the sample realization. In our social-media application, the capacity constraint captures the fact that people tend to only interact with a small fraction of the content that they consume.

In our model, agents do not see the current popularity scores or the arrival times of the signals in their samples. This assumption is motivated by the difficulty of inferring the state from the popularity or age of observed posts in the social-media application.  We also assume that people do not know their order in the sequence. This is arguably  more realistic than assuming that everyone knows their precise order. From a technical perspective, it is also the more tractable assumption that lets us focus on analyzing long-run
steady states.\footnote{In our model where agents hold a uniform prior over positions, we will be able to analyze changes in the society over time without needing to account for time-varying strategies. If agents knew their positions, strategies and the signal popularities would  both change over time, and even basic convergence properties would be unclear.}

The functional form of the sampling rule has the convenient property that the total popularity scores of positive and negative signals are a sufficient statistic for the distribution of sampled signals,  but other functional forms are also possible. In particular, one could analyze more extreme sampling rules where sampling probabilities depend more heavily on popularity scores than in popularity-based sampling (e.g., the probability of sampling a signal is proportional to a superlinear function of its popularity).

\textbf{Relationship to Observational Learning.} A slight variant of our model clarifies its relationship with sequential social-learning models in which agents observe predecessors' actions. Suppose  new signals start with a popularity score of $0$ (instead of $1$) and suppose $C=1$. Then we can interpret ``endorsing a negative signal'' and ``endorsing a positive signal'' as the two possible actions in an observational social-learning model with binary signals and binary actions. For each of the $K$ observations in the sample, we can think of  our sampling rule as selecting a uniformly random predecessor and observing their binary action with probability $\lambda$ and observing their binary signal with probability $1-\lambda$. When $\lambda =0$, agents observe predecessors' signals in an unbiased way,  so they learn as if they observe $K+1$ private signals. When $\lambda=1$, the model is similar to an observational social-learning model where agents observe $K$ predecessors chosen uniformly at random. There is a positive probability of an information cascade in which all late enough agents choose the incorrect action (as in \cite*{acemoglu2011bayesian}).

Our analysis essentially interpolates between these two cases. In our model, recent signals can always be sampled (since signals start with a positive score) and so agents never converge to choosing the same action. There is nevertheless a form of herding on incorrect actions, which we show emerges discontinuously at an intermediate value of $\lambda$ (Theorem~\ref{thm:equilibrium_ss_by_lambda}). Because both actions will be taken infinitely often, there is also interesting variation in the action distribution as we vary $\lambda$ (see Proposition~\ref{prop:informative}). Finally, by attaching endorsements to specific signals instead of modeling them as binary actions, we can track the popularity evolution of different signals and study the distribution of popularity scores across signals (see Section~\ref{sec:power_law}). 


\subsection{Strategy, Symmetric BNE, and Limit Equilibrium}

We  define a mixed strategy in the game to be $\sigma:\{-1,1\}\times\{0,...,K\}\to\Delta(\{0,1,...,C\})$,
so that $\sigma(s,k)$ gives the distribution over the number of positive
signals endorsed when the agent has the private signal $s$ and sees a
sample with $k$ positive signals out of $K$.\footnote{Agents cannot distinguish between different positive (or negative) signals in their sample. Moreover, which positive (or negative) signals they endorse does not affect subsequent agents' observations under the family of sampling rules we consider.} We will regard the space of mixed strategies as a subset of $\mathbb{R}^{2(K+1)(C+1)}$ with the standard Euclidean norm. Mixed strategies
must satisfy feasibility constraints in terms of the available numbers
of positive and negative signals to endorse.  For each $0\le k\le K$, the support of $\sigma(s,k)$ is contained in 
\[
\{L_k,L_k+1,\ldots,U_k\},\qquad L_k:=\max\{0,C+k-K\},\quad U_k:=\min\{k,C\},
\] where $L_k$ and $U_k$ are the lowest and highest feasible numbers of positive signals endorsed in a sample with $k$ positive signals. Note that we only need to discuss positive signals
since the agent must always endorse $C$ signals in total.

A simple strategy, which will play a central role in our analysis, is to follow the majority of the signals in the sample as much as possible (breaking ties in favor of the private signal) :
\begin{defn}
The \emph{majority rule} is the pure strategy that endorses the maximal feasible number of signals matching the sample majority. Formally, $\sigma^{\text{maj}}(s,k)(U_k)=1$ if either $k>K/2$ or $k=K/2$ and $s=1$, and $\sigma^{\text{maj}}(s,k)(L_k)=1$ otherwise.
\end{defn}
When $C \le K/2$, the majority rule either endorses $C$ positive signals or $C$ negative signals. For larger $C$, majority rule still always endorses $C$ signals, so it may endorse signals from both sides.  The majority rule need not be an equilibrium strategy in general --- intuitively, it is only optimal when sampled signals are more informative than  private signals. Nevertheless, it turns out the majority rule will help us understand the qualitative properties of equilibrium outcomes even when it is not itself an equilibrium. 

We apply the solution concept of Bayesian Nash equilibrium (BNE).
Note that all possible observations are on-path given any strategy
profile. We focus on player-symmetric and state-symmetric BNE: that
is, a BNE where each agent uses the same strategy $\sigma$, and $\sigma$
treats positive and negative signals symmetrically.\footnote{More precisely, state symmetry means that for every $s\in\{-1,1\}$
and $0\le k\le K,$ we have $\sigma(s,k)(z)=\sigma(-s,K-k)(C-z)$
for each $0\le z\le C.$} We abbreviate this refinement as ``symmetric BNE.''

As we will restrict to state-symmetric strategies throughout, we provide a brief intuition for the implications of this restriction. Asymmetric strategies could allow agents to, for example,  treat positive signals as more informative: they endorse positive signals more than they would endorse negative signals at the corresponding ``mirrored'' observation. If this happens across observations, however, then negative signals would become stronger indicators of the true state, giving a contradiction. We expect that equilibria violating state symmetry, if any exist, would require delicate constructions in which agents treat positive signals as more informative at some observations but negative signals as more informative at others.

We are mainly interested in analyzing the limits of symmetric BNE when
the number of agents in the society grows large and in studying the accuracy of the resulting samples in the long run. Such a limit is well defined because for
fixed parameters $q,K,C,\lambda, n_0, \psi$, the space of strategies stays constant
as the number of agents $n$ grows.
\begin{defn}
For fixed $q,K,C,\lambda, n_0, \psi$ parameters, a mixed strategy $\sigma^{*}$
is a \emph{limit equilibrium} if there exists a sequence of symmetric
BNE $(\sigma^{(j)})_{j=1}^{\infty}$ for finite societies with
$n_{j}$ agents and the same $q,K,C,\lambda, n_0, \psi$ parameters, where $n_{j}\to\infty$ and $\lim_{j\to\infty}\sigma^{(j)}=\sigma^{*}$.
\end{defn}

Symmetric BNE and limit equilibria both exist for all parameter values:

\begin{prop}\label{prop:existence}
For any finite $n$ and parameters $q,K,C,\lambda, n_0, \psi$, there exists
a symmetric BNE. For any parameters $q,K,C,\lambda, n_0, \psi$, there exists
a limit equilibrium.
\end{prop}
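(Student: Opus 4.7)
My plan is to establish both claims via a standard compactness-and-fixed-point argument. For the first claim, I would apply Kakutani's fixed-point theorem to the best-response correspondence on the subspace $\Sigma$ of state-symmetric feasible mixed strategies. The strategy space of $\sigma:\{-1,1\}\times\{0,\ldots,K\}\to\Delta(\{0,1,\ldots,C\})$ is finite-dimensional, and the feasibility constraints together with state symmetry are closed linear constraints, so $\Sigma$ is a nonempty compact convex subset of Euclidean space (the majority rule, for instance, lies in $\Sigma$). Define $B:\Sigma\rightrightarrows\Sigma$ so that $B(\sigma)$ consists of the state-symmetric $\sigma'$ that maximize a deviating agent's expected utility, averaged uniformly over positions $K+1,\ldots,n$ and over the private signal, when every other agent plays $\sigma$.

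To check Kakutani's hypotheses, fix $n$ and observe that for any pair $(\sigma,\sigma')$ the deviator's expected utility is a polynomial in their parameters, since the game tree is finite; continuity of the objective in both arguments thus comes for free. Nonemptiness of $B(\sigma)$ follows by first applying Weierstrass to obtain an unrestricted best response $\sigma^{\dagger}$, then noting that the state-flipped version of $\sigma^{\dagger}$ is also a best response (because the game is invariant under simultaneously swapping $\omega$ and relabeling positive and negative stories), and finally averaging $\sigma^{\dagger}$ with its flip to obtain a state-symmetric best response (using linearity of expected utility in the own mixed strategy). Convex-valuedness of $B$ follows because the argmax of a linear function is convex and $\Sigma$ itself is convex; upper hemicontinuity follows from Berge's theorem of the maximum. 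Kakutani then supplies a fixed point $\sigma^{*}\in B(\sigma^{*})$, which is a symmetric BNE.

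For the second claim, fix $q,K,C,\lambda$ and for each $n$ select a symmetric BNE $\sigma^{(n)}$ from the first part. Since $\Sigma$ does not vary with $n$ and is compact, the sequence $\{\sigma^{(n)}\}$ admits a subsequence converging to some $\sigma^{*}\in\Sigma$, which is by definition a limit equilibrium. I expect the only subtle step to be the symmetrization argument underlying nonemptiness of $B$: it hinges on the fact that the sampling algorithm, the conditional signal distribution, and the utility function are all jointly invariant under the state flip. Everything else reduces to standard continuous-game existence, modulo the observation that, while the expected utility depends on the evolving popularity scores in a way that has no useful closed form, for any finite $n$ it is still a polynomial in the strategy parameters, so no measure-theoretic subtleties arise.
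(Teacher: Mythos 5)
Your proof is correct and follows essentially the same route as the paper's: Kakutani's fixed-point theorem applied to the finite symmetric game for each $n$, then compactness of the ($n$-independent) strategy space to extract a convergent subsequence of symmetric BNE. The paper's version is a two-line sketch, and your symmetrization step (averaging an unrestricted best response with its state-flip to land in the state-symmetric subspace without losing optimality) is precisely the detail it leaves implicit.
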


The proof shows that there is a symmetric BNE for all $n$ 
via a standard
fixed-point argument. Since the space of feasible mixed strategies can be viewed as a compact subset of a finite-dimensional Euclidean space, a limit equilibrium
must exist.

\section{Steady States and Equilibrium Steady States}

We begin this section by analyzing the evolution of a society under an arbitrary fixed  strategy. We show the society converges to a steady state and distinguish informative and misleading steady states. We then apply this analysis to characterize equilibrium strategies and the structure of steady states under equilibrium behavior. Our main result shows there is at least one misleading steady state at equilibrium if and only if the virality weight is above a threshold level.

\subsection{Definition and Characterization of Steady States}

Suppose everyone uses the same strategy $\sigma$, which need not be an equilibrium, and the true state
is $\omega.$ How will the total popularity score of the correct signals that match the state
compare with that of the incorrect signals in the long run? We define the concept of steady states to study this question.

Given the true state of nature $\omega$, let $\mathcal{P}_t$ denote the pool of signals after $t$ agents have acted (including the $n_0$ seed signals and the $t$ signals added by the agents). Let $\rho_t(s)$ be the popularity score of signal $s \in \mathcal{P}_t$ at that time. The \emph{viral
accuracy} of the society is defined to be $$x(t) = \frac{\sum_{s\in \mathcal{P}_t:s=\omega}\rho_t(s)}{\sum_{s\in \mathcal{P}_t}\rho_t(s)}.$$
Viral accuracy measures the relative popularity of the signals that match
the true state.

Imagine a society with infinitely many agents, with all agents 
using the strategy $\sigma.$ This induces a stochastic process $(x(t))_{t=0}^{\infty}$ where $x(t)$ is the
viral accuracy of the society after  $t$ agents have acted (and $x(0)$ is the initial viral accuracy of the seed signals). We refer to limit points reached by this process as steady states:
\begin{defn}
A point $x^{*}$ such that $x(t)\to x^{*}$ with positive probability is a \emph{steady state} of the strategy $\sigma$.
\end{defn}
When viral accuracy converges to a steady state $x^*$, roughly $x^*$ fraction of the total popularity score in the society
is associated with correct signals in all late enough periods. This fraction persists as new signals arrive in each period and agents use $\sigma$ to decide which signals to endorse from their random samples. The next result tells us that for any state-symmetric strategy, viral
accuracy almost surely converges, and the set of steady states $X^{*}$
is finite.
\begin{prop}
\label{prop:conv}Given a state-symmetric strategy $\sigma$, there is a finite set of steady states $X^{*}\subseteq(0,1)$ such that when all agents use $\sigma$, almost surely $x(t)\to x^{*}$
for some $x^{*}\in X^{*}$. 
\end{prop}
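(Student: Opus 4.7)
The plan is to recognize $(x(t))$ as a one-dimensional stochastic approximation with a polynomial mean field and apply the ODE method. Condition on the state $\omega$ and let $P(t), N(t)$ denote the total popularity of correct and incorrect stories, $T(t) := P(t) + N(t)$, and $x(t) = P(t)/T(t)$. After the initial $K$ periods each subsequent agent adds $C+1$ to $T$ deterministically, so $T(t)$ grows linearly in $t$. Let $y(t) := \#\{i \leq t : s_i = \omega\}/t$; by the strong law applied to the i.i.d.\ signals, $y(t) \to q$ almost surely. The key structural observation is that, conditional on the history, the law of $\Delta P_t := P(t+1) - P(t)$ depends on the history only through $(x(t), y(t))$: each news-feed slot is, independently, popularity-weighted with probability $\lambda$ (correct with probability $x(t)$) and uniform with probability $1-\lambda$ (correct with probability $y(t)$), so the news-feed composition is a function of $(x(t), y(t))$ alone and $\sigma$ acts on this composition plus the private signal.

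Rearranging $x(t+1) - x(t) = (\Delta P_t - (C+1) x(t))/T(t+1)$ yields the stochastic-approximation recursion
\[
x(t+1) - x(t) = \frac{1}{(C+1)\, t}\bigl[F(x(t), y(t)) + \xi_{t+1}\bigr] + O(1/t^2),
\]
where $F(x, y) := \mathbb{E}[\Delta P_t \mid x(t) = x,\, y(t) = y] - (C+1) x$ and $\xi_{t+1} := \Delta P_t - \mathbb{E}[\Delta P_t \mid \mathcal{F}_t]$ is a uniformly bounded martingale-difference sequence. Because the count of correct stories in a news feed is $\mathrm{Binomial}(K,\, \lambda x + (1-\lambda) y)$, and the number shared is obtained by averaging $\sigma$ against these binomial probabilities, $F(\cdot, y)$ is a polynomial in $x$ of degree at most $K$ for each fixed $y$. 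Setting $f(x) := F(x, q)$ and using $y(t) \to q$ a.s., the recursion is asymptotically governed by the polynomial mean field $f$ on $[0,1]$. The boundary behavior is immediate from the fact that a fresh story matches $\omega$ with probability $q > 1/2$ regardless of $x$: one gets $\mathbb{E}[\Delta P_t \mid x=0, y=q] \geq q$ and $\mathbb{E}[\Delta P_t \mid x=1, y=q] \leq q + C$, so $f(0) \geq q > 0$ and $f(1) \leq q - 1 < 0$. In particular $f \not\equiv 0$, and its zero set in $[0,1]$ lies in $(0,1)$ and is finite, being the zero set of a nontrivial polynomial.

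I would then invoke a standard stochastic-approximation convergence theorem (e.g., \citealp{benaim2003deterministic}) to conclude that $x(t)$ is almost surely an asymptotic pseudo-trajectory of the autonomous ODE $\dot x = f(x)/(C+1)$, and hence converges almost surely to an internally chain-transitive invariant set of this ODE. In one dimension such sets are either equilibria or intervals of equilibria; since $f$ has only finitely many isolated zeros, these sets are singletons, yielding $x(t) \to x^*$ almost surely for some random $x^*$ lying in the finite set $X^* := \{x \in (0,1) : f(x) = 0\}$. The main obstacle is the technical verification of the hypotheses of the stochastic-approximation theorem: uniformly bounding the martingale noise $\xi_{t+1}$ and showing $F$ is Lipschitz on $[0,1]^2$ (both of which follow from the compactness of the signal space and the polynomial form of $F$), and showing that replacing $y(t)$ by its a.s.\ limit $q$ contributes only an asymptotically negligible perturbation to the drift, which uses $y(t) \to q$ together with summability against the $1/t$ step size. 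A secondary technical point is that the transient behaviour during the first $K$ periods (when $T$ has not yet begun linear growth) is irrelevant for the asymptotics because $\sum_{t \leq K} 1/T(t)$ is bounded.
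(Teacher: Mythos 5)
Your proposal is correct and follows essentially the same route as the paper: decompose the evolution of viral accuracy into a polynomial drift plus bounded martingale noise, invoke the ODE method to get convergence to an internally chain-transitive set, use the sign argument in one dimension to reduce such sets to zeros of the mean field, and use the boundary conditions $f(0)>0$, $f(1)<0$ to get a nontrivial polynomial with finitely many roots in $(0,1)$. The only cosmetic difference is that you treat the signal-frequency coordinate $y(t)\to q$ as a vanishing perturbation of a one-dimensional recursion, whereas the paper runs the stochastic approximation jointly in the two-dimensional process $(x(t),z(t))$ and then restricts to the slice $z=q$; both are standard and equivalent.
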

When $X^*$ contains at least two elements, the
limit steady state $x^{*}\in X^{*}$ is random  and  can depend on early signal realizations and the sampling process. The proof of the result, as well as subsequent results, relies on stochastic approximation tools. We largely defer discussion of proof intuitions to Section~\ref{sec:techniques}, where we sketch proofs of the main results and describe the key techniques underlying the analysis.

In light of Proposition \ref{prop:conv}, we write $\pi(\cdot\mid\sigma)$ for the distribution over steady states
generated by a state-symmetric strategy $\sigma$.  A substantial challenge in analyzing our model is that we cannot obtain closed-form expressions for $\pi(\cdot \mid \sigma)$ as these probabilities depend on a complicated stochastic process. We focus instead on understanding when the support of  $\pi(\cdot \mid \sigma)$, which is the set of steady states, contains certain values of $x$. We will see this question is already highly non-trivial, and the answers will have interesting implications for understanding equilibrium.

Our next result will characterize the support of the distribution $\pi(\cdot\mid\sigma)$ over steady states in terms of the fixed points of an \emph{inflow accuracy function}, which is a deterministic approximation of the change in the stochastic viral-accuracy process. 
Suppose today's viral accuracy is $x,$ and exactly $q$ fraction
of the signals in the pool are correct. A new agent increases
the total popularity score in the society by $C+1$, as they add a new signal and endorse
$C$ existing signals. We define the inflow accuracy function  $\phi_{\sigma}(x)$
to be the expected fraction of the incoming $C+1$ popularity score that
is allocated to signals matching the state. 
\begin{defn}
\label{def:inflow_acc}The \emph{inflow accuracy function} is 
\[
\phi_{\sigma}(x):=\frac{q+\sum_{k=0}^{K}P_{k}(x,\lambda)\cdot[q\cdot\mathbb{E}[\sigma(1,k)]+(1-q)\cdot\mathbb{E}[\sigma(-1,k)]]}{1+C}
\]
where $P_{k}(x,\lambda):=\mathbb{P}[\text{Binom}(K,\lambda x+(1-\lambda)q)=k]$
and $\text{Binom}(K,p)$ is the binomial distribution with $K$ trials
and success probability $p$.
\end{defn}
To understand the formula in the definition, note that $\lambda x+(1-\lambda)q$ is the \emph{sampling
accuracy}: the probability of each sampled signal being correct, given viral accuracy $x$ and virality weight
$\lambda$. We can use sampling accuracy to express the probability of getting $k$ positive signals out of $K$ in the sample
when $\omega=1$ for every $0\le k \le K$, then consider how the strategy $\sigma$ combines the private signal $s_i$ and the number of positive
signals in the sample to make an endorsement decision. Finally, $\phi_{\sigma}(x)$ also takes into account that the agent's private signal $s_i$, which starts with a popularity score of 1, has
$q$ chance of matching the state. While $\phi_{\sigma}(x)$ is defined in terms of the expected fraction of the new popularity score assigned to correct signals when $\omega=1$, the symmetry of the environment and of $\sigma$  implies that it also describes the same fraction  when $\omega=-1$.

We always have $\phi_{\sigma}(0)>0$ and $\phi_{\sigma}(1)<1.$ The
idea is that if $x\approx0$ and almost all of the popularity score
are associated with the wrong signals, then the arrival of new signals tends to increase $x$, as a majority of these signals
match the state. If on the other
hand $x\approx1$, then these new signals will on average lower $x,$
since they have a non-zero probability of mismatching the state. So
$\phi_{\sigma}$ must have a fixed point by continuity.

A fixed point of the inflow accuracy function $\phi_{\sigma}$ is
a natural candidate for a steady state induced by $\sigma,$ as it
intuitively represents a level of viral accuracy that tends to be
exactly maintained on average by the inflow of new popularity score,
in a society with sufficiently many signals so that approximately $q$
fraction of them match the true state. The next result establishes
this formally, provided the fixed point is not unstable from
both sides. This lets us extend the argument from the previous paragraph to existence of a steady state: the left-most (right-most) fixed point must be stable on the left (right).
\begin{thm}
\label{thm:ss_half_stable}We have $\pi(x^{*}\mid\sigma)>0$ if $\phi_{\sigma}(x^{*})=x^{*}$ and there exists some
$\epsilon>0$ so that either (a) $\phi_{\sigma}(x)<x$ for all $x\in(x^{*},x^{*}+\epsilon)$,
or (b) $\phi_{\sigma}(x)>x$ for all $x\in(x^{*}-\epsilon,x^{*}).$ Conversely, for $x^{*}\in[0,1],$ we have $\pi(x^{*}\mid\sigma)>0$ only if $\phi_{\sigma}(x^{*})=x^{*}$. 
\end{thm}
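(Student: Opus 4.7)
The plan is to view the viral-accuracy sequence $(x(t))$ as a stochastic approximation driven by the ODE $\dot{x} = \phi_\sigma(x) - x$. Let $T(t) = K + (t-K)(C+1)$ be the total popularity score on the platform after agent $t$ has acted, and let $q'(t)$ denote the empirical fraction of original stories $s_1, \ldots, s_t$ matching the state. A direct recursion yields
\[
x(t+1) - x(t) = \gamma_t \bigl[\phi_\sigma(x(t)) - x(t)\bigr] + \gamma_t r_t + \gamma_t M_{t+1},
\]
where $\gamma_t = (C+1)/(T(t) + C + 1) \sim 1/t$, the $M_{t+1}$ are bounded martingale differences centering the fraction of incoming popularity going to correct stories around its conditional mean, and $r_t$ captures the gap between $\phi_\sigma(x(t))$ (built from the population fraction $q$) and the true conditional mean (built from the sample fraction $q'(t)$). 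By Hoeffding's inequality plus Borel--Cantelli, $q'(t) \to q$ a.s.\ at a polynomial rate, so $\sum_t \gamma_t |r_t| < \infty$ a.s.; the martingale sum $\sum_t \gamma_t M_{t+1}$ converges a.s.\ by $L^2$ convergence since $\sum_t \gamma_t^2 < \infty$.

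For the converse direction, suppose $\pi(x^* \mid \sigma) > 0$. Telescoping the decomposition on the positive-probability event $\{x(t) \to x^*\}$, the left side converges and both noise sums converge by the above, so the drift sum $\sum_t \gamma_t(\phi_\sigma(x(t)) - x(t))$ converges. Since $\sum_t \gamma_t = \infty$ and $\phi_\sigma(x(t)) - x(t) \to \phi_\sigma(x^*) - x^*$ by continuity of $\phi_\sigma$, this can only happen if $\phi_\sigma(x^*) = x^*$.

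For the forward direction, by state symmetry it suffices to handle case (b): $\phi_\sigma(x) > x$ on $(x^* - \epsilon, x^*)$. I would proceed in two steps. \emph{Reachability}: for any $\delta \in (0, \epsilon)$, there is a large deterministic $t_0$ and a positive-probability event on which $x(t_0) \in (x^* - \delta, x^* - \delta/2)$. This is established by explicitly constructing a sample path of signals and news-feed draws that forces $x(t_0)$ into the target window, using that every per-period outcome has strictly positive probability and that achievable values of $x(t_0)$ become dense in $(0,1)$ as $t_0$ grows. \emph{Absorption}: conditional on this event, $\mathbb{P}[x(t) \to x^*] > 0$. Set $y(t) = x^* - x(t)$ and the stopping time $\tau = \inf\{t > t_0 : y(t) \le 0\}$. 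On $\{t < \tau\}$, $y(t) \in (0, \delta)$, where the conditional drift of $y$ is $-\gamma_t(\phi_\sigma(x(t)) - x(t)) + O(\gamma_t |r_t|) < 0$ once $t$ is large, so $y(t \wedge \tau)$ is a non-negative supermartingale and converges a.s.\ to some $y_\infty \ge 0$. Standard supermartingale arguments preclude $y_\infty > 0$ on $\{\tau = \infty\}$ (else the cumulative negative drift on $y$ would diverge, inconsistent with convergence against the bounded noise), so $y_\infty = 0$, i.e., $x(t) \to x^*$. To lower-bound $\mathbb{P}[\tau = \infty \mid \mathcal{F}_{t_0}]$, I would apply Doob's maximal inequality to the noise martingale: its accumulated quadratic variation after $t_0$ is $O(1/t_0)$, so for $t_0$ large the probability that the noise excursion exceeds the buffer $\delta/2$ is arbitrarily small, leaving positive probability of $\tau = \infty$.

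The main technical obstacle is the absorption step: the hypothesis only controls $\phi_\sigma$ on one side of $x^*$, so $\delta$ and $t_0$ must be tuned in concert to rule out the process jumping across $x^*$ onto the unexamined side, where $\phi_\sigma - \mathrm{id}$ could be repelling. The reachability step is conceptually straightforward but requires an explicit construction exploiting the discreteness of per-period randomness.
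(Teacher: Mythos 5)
Your setup, the converse direction, and the reduction of the drift error via $z(t)\to q$ are all sound and essentially match the paper's stochastic-approximation framework. The genuine gap is in the absorption step of the forward direction, which is the heart of the theorem. Your plan is to bound $\mathbb{P}[\tau<\infty]$ by the probability that the noise martingale's excursion exceeds a fixed buffer $\delta/2$, via Doob's inequality and the fact that the post-$t_0$ quadratic variation is $O(1/t_0)$. But the buffer is not fixed: writing $y(t)=y(t_0)-D(t)-N(t)$ with $D(t)=\sum_{s\ge t_0}\gamma_s[\phi_\sigma(x(s))-x(s)]\ge 0$ increasing, on any path with $x(t)\to x^*$ from below the cumulative drift $D(t)$ consumes essentially the entire initial gap $y(t_0)$, so at late times the distance to $x^*$ is $y(t)\to 0$ while the remaining noise has magnitude of order $t^{-1/2}$. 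Whether the process crosses $x^*$ is then governed by a race between the decay rate of $y(t)$ (which is dictated by the local order of vanishing of $\phi_\sigma-\mathrm{id}$ at $x^*$) and $t^{-1/2}$; a uniform bound on $\sup_t|N(t)|$ says nothing about this. Indeed, for a one-sided-stable point with $\phi_\sigma(x)-x\sim c(x^*-x)$ and $c>1/2$ to the left, the gap would shrink like $t^{-c}\ll t^{-1/2}$ and the process would cross almost surely, so an argument of your form that never inspects the local behavior of $\phi_\sigma$ at $x^*$ would prove a false statement. The paper's proof (following Pemantle) works precisely because at a touchpoint of a smooth $\phi_\sigma$ the derivative of $\phi_\sigma-\mathrm{id}$ vanishes, so the drift is locally at most $v\cdot(x^*-x)$ with $v<1/2$ (event $\mathscr{C}_2$); this is what permits the construction of stopping times $\tau_n$ and times $T(n)=e^{n(1-r_0)/(\gamma v_1)}$ forcing the gap to stay above $r_0^n>T(n)^{-1/2}$, i.e., to decay strictly slower than the noise, on a positive-probability event. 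Your proposal never invokes this second-order tangency, so the absorption step cannot be completed as written.

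Two smaller issues: your supermartingale claim for $y(t\wedge\tau)$ needs an additional stopping time for $y(t)$ exiting $(0,\epsilon)$ from above, since the drift sign is only controlled on $(x^*-\epsilon,x^*)$; and in the case where $x^*$ is stable from both sides the one-sided stopping-time argument is the wrong tool anyway (crossing $x^*$ is harmless there), which is why the paper treats that case separately by a localization lemma that modifies the dynamics outside a neighborhood of $(x^*,q)$ so that $x^*$ becomes the unique fixed point and then invokes almost-sure convergence.
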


\begin{figure}
\begin{centering}
\includegraphics[scale=0.5]{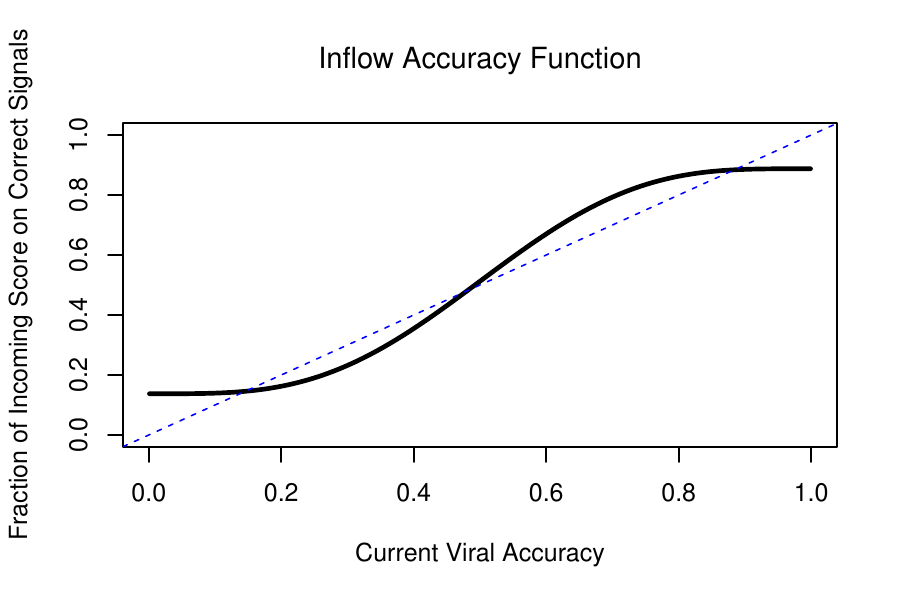}
\par\end{centering}
\caption{\label{fig:maj}The inflow accuracy function for the majority rule
with $K=7,$ $C=3,$ $q=0.55,$ $\lambda=1$.}

\end{figure}

We first discuss fixed points which are stable from both sides. As an example, Figure \ref{fig:maj} plots the inflow accuracy function for the majority rule when $K=7,$ $C=3,$ $q=0.55,$
and $\lambda=1$. There are two fixed points that are stable from both sides, and Theorem \ref{thm:ss_half_stable} implies both are steady states. At the upper fixed point, signals matching the state are more popular. At the lower fixed point, however, incorrect signals are more popular than correct signals. Under the majority rule, such a misleading state is reached with positive probability: if enough initial signals are incorrect, the majority rule will continue endorsing incorrect signals. We will see in Section~\ref{sec:eq_ss} these misleading steady states can also arise under equilibrium behavior.

The more subtle case is a fixed point of $\phi_{\sigma}$ that is unstable from one side (see Figure \ref{fig:touchpoint} for an illustration). A \emph{touchpoint} of $\phi_{\sigma}$ is a fixed point $x^{*}=\phi_{\sigma}(x^{*})$
where exactly one of condition (a) or condition (b) from Theorem \ref{thm:ss_half_stable} holds 
(so $x^{*}$ is only stable from one side). Theorem \ref{thm:ss_half_stable} says that if $\phi_{\sigma}$ has a touchpoint $x^*$, then viral
accuracy converges to $x^*$ with positive probability. It turns out  the convergence of viral accuracy to fixed points (including touchpoints) of its inflow accuracy function  plays a  role in proving the existence of misleading steady states under equilibrium behavior (see Section \ref{sec:equilibrium_intuition}). This convergence also implies that the distribution over steady states is discontinuous in the strategies that agents use and discontinuous in parameters of the model such as $\lambda$ and $q$, and we will discuss some consequences of the discontinuity below.

\begin{figure}
\begin{center}
\includegraphics[scale=0.23]{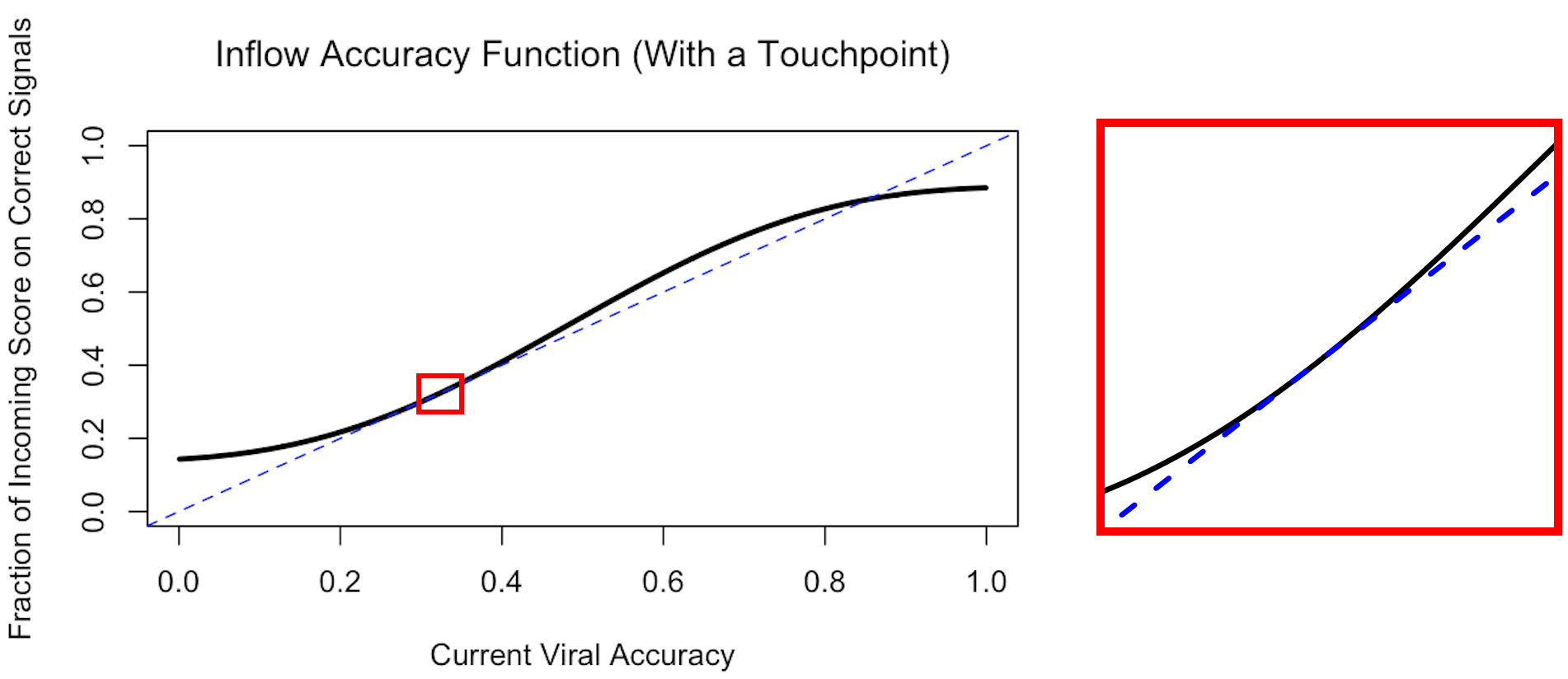}
\end{center}%

\caption{\label{fig:touchpoint}The inflow accuracy function for the majority rule
with $K=7,$ $C=3,$ $q=0.55,$ $\lambda\approx 0.76$. Here $\phi_{\sigma^{\text{maj}}}$ has two fixed points: the left fixed point is a touchpoint that is only stable from the left side (the red box shows a zoomed-in view). The right fixed point is stable from both sides. Theorem \ref{thm:ss_half_stable} implies viral accuracy has a positive probability of converging to each of these two fixed points.}

\end{figure}

\subsection{Informative and Misleading Steady States}

We may classify steady states into two types. One type is an informative
steady state, where sampling accuracy is at least 1/2 and it is more
likely that sampled signals are true. The other type is a
misleading steady state, where the opposite happens.
\begin{defn}
A steady state $x$ is \emph{informative} if $\lambda x+(1-\lambda)q\ge1/2,$
and \emph{strictly informative} if this inequality is strict. A steady
state $x$ is \emph{misleading} if $\lambda x+(1-\lambda)q\le1/2$,
and \emph{strictly misleading} if this inequality is strict.
\end{defn}
Even reasonable strategies like the majority rule $\sigma^{\text{maj}}$  can generate misleading steady
states. Recall from Figure \ref{fig:maj} and the discussion after Theorem~\ref{thm:ss_half_stable} that $\sigma^{\text{maj}}$
has two steady states with the parameters $K=7,$ $C=3,$ $q=0.55,$
and $\lambda=1$. One is informative, but the
other is misleading.

In a misleading steady state, the virality of false signals becomes
self-sustaining. The state might be $\omega=1$ but most people see
negative signals in their samples, as the society's virality weight
implies the popular false signals tend to get shown to agents. This
happens even though there are more positive signals than negative signals in the society. Under the
majority rule $\sigma^{\text{maj}}$, for example, agents will then endorse the negative
signals from their samples, which further perpetuates these signals'
popularity and makes them more likely to be seen by future agents.

We will see that $\phi_{\sigma^{\text{maj}}}$, the inflow accuracy
function associated with the majority rule, plays an important role
in determining the equilibrium steady states of \emph{any} limit 
equilibrium. As a first step in this direction, we observe that the steady states of
the majority rule $\sigma^{\text{maj}}$ satisfy the following useful properties:
\begin{lem}
\label{lem:maj_ss_types}If $x$ is a steady state of $\sigma^{\text{maj}}$,
then it is strictly informative if and only if $x>1/2,$ and strictly
misleading if and only if $x<1/2.$ Also, $x=1/2$ is not a fixed
point of $\phi_{\sigma^{\text{maj}}}$.
\end{lem}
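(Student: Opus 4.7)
My plan is to reduce all three claims to a single key inequality: whenever the sampling accuracy $p := \lambda x + (1-\lambda) q$ satisfies $p \geq 1/2$, we have $\phi_{\sigma^{\text{maj}}}(x) > 1/2$. This inequality holds strictly because the agent's own freshly-posted signal contributes an additional $q > 1/2$ factor that breaks the symmetry at $p = 1/2$. Combined with Theorem~\ref{thm:ss_half_stable} (which forces any steady state to be a fixed point of $\phi_{\sigma^{\text{maj}}}$), this is enough.

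To establish the key inequality, I would first unpack $\phi_{\sigma^{\text{maj}}}(x)$ conditional on $\omega = 1$. Let $B \sim \text{Binom}(K, p)$ denote the number of positive (correct) stories in the news feed. Under the majority rule the agent shares $C$ correct stories when $B > K/2$, or when $B = K/2$ and their private signal is $s = 1$; otherwise they share $C$ incorrect stories. Together with the agent's own posted story (correct with probability $q$), this gives
\[
\phi_{\sigma^{\text{maj}}}(x) = \frac{q + C\,\bigl(q\,\mathbb{P}[B \geq K/2] + (1-q)\,\mathbb{P}[B > K/2]\bigr)}{1+C}.
\]
I then show the conditional share-correct probability is $\geq 1/2$ when $p \geq 1/2$. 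For odd $K$, $\mathbb{P}[B \geq K/2] = \mathbb{P}[B > K/2] = \mathbb{P}[B \geq (K+1)/2] \geq 1/2$. For even $K$, set $\gamma := \mathbb{P}[B = K/2]$; since $p \geq 1/2$ gives $\mathbb{P}[B > K/2] \geq \mathbb{P}[B < K/2]$, we get $\mathbb{P}[B > K/2] \geq (1-\gamma)/2$, so $\mathbb{P}[B > K/2] + q\,\gamma \geq 1/2 + (q - 1/2)\gamma \geq 1/2$. In either parity $\phi_{\sigma^{\text{maj}}}(x) \geq (q + C/2)/(1+C) > 1/2$, using $q > 1/2$ for the last strict step.

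With the key inequality in hand the three parts fall out. For ``$x$ strictly informative $\iff x > 1/2$'': the forward direction is immediate since $q, x > 1/2$ force $p > 1/2$; for the converse, strictly informative gives $p > 1/2$, so the key inequality yields $\phi(x) > 1/2$, hence $x > 1/2$ at any fixed point. For ``$x$ strictly misleading $\iff x < 1/2$'': strictly misleading means $p < 1/2$, and a short algebraic check (splitting on whether $\lambda < 1$ or $\lambda = 1$, and using $q > 1/2$) forces $x < 1/2$; conversely, if a fixed point had $x \leq 1/2$ yet $p \geq 1/2$, the key inequality would give $\phi(x) > 1/2 \geq x$, contradicting $x = \phi(x)$. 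Finally, at $x = 1/2$ we have $p = 1/2 + (1-\lambda)(q - 1/2) \geq 1/2$, so the key inequality gives $\phi_{\sigma^{\text{maj}}}(1/2) > 1/2$, and $1/2$ is not a fixed point. The only delicate point is the even-$K$ tiebreaker, where the bound via $\gamma$ is needed, but across all parities it is the agent's own private signal that supplies the $q > 1/2$ ``tip'' that makes $\phi > 1/2$ strict.
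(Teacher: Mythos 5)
Your proof is correct and follows essentially the same route as the paper's: the whole lemma is reduced to the single inequality that sampling accuracy $\ge 1/2$ forces the majority rule to share at least $C/2$ correct stories in expectation, so that $\phi_{\sigma^{\text{maj}}}(x)\ge\frac{q+C/2}{1+C}>1/2$, and the three claims then follow from the fixed-point characterization of steady states. The paper states this key inequality without the explicit binomial computation; your even-$K$ bookkeeping with $\gamma=\mathbb{P}[B=K/2]$ just fills in that detail.
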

By definition, steady states are  classified as informative or misleading based on their sampling accuracy. The lemma says for the majority rule, we can equivalently classify steady states based on whether viral accuracy is larger than $1/2$. 

The number of  steady states for a fixed strategy depends on $\lambda.$ The
three plots in Figure \ref{fig:inflowplots} show the inflow
accuracy function under majority rule with $q=0.55,$ $K=7,C=3$, and three different virality weights:
$\lambda=0.3,$ $\lambda=0.6$, and $\lambda=0.9$. When $\lambda=0.3$
and $\lambda=0.6$, there is only an informative steady state, and
this steady state is more accurate when $\lambda=0.6.$ But when $\lambda=0.9,$
there is both an informative steady state and a misleading steady
state. (In general, Lemma \ref{lem:maj_shape} in the Appendix proves that the inflow accuracy function of majority rule is always concave or S-shaped as in these plots.)

\begin{figure}
    \centering
\includegraphics[scale=0.58]{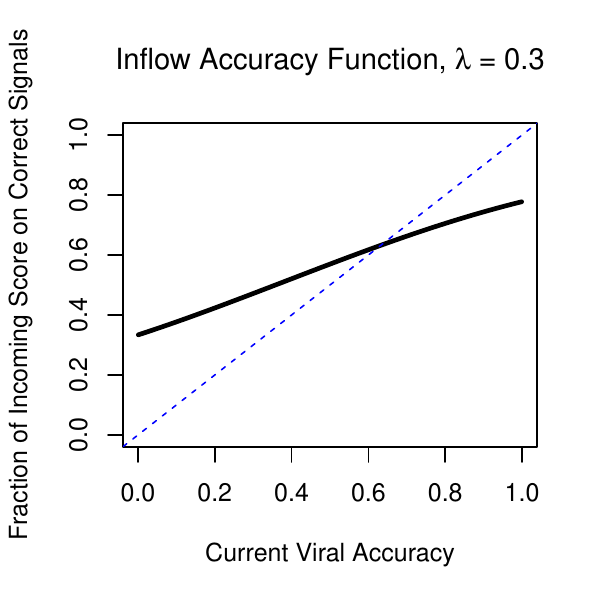}\includegraphics[scale=0.58]{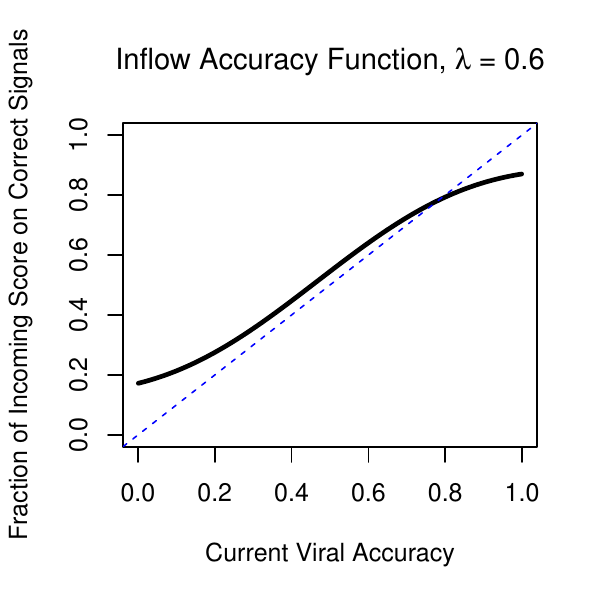}\includegraphics[scale=0.58]{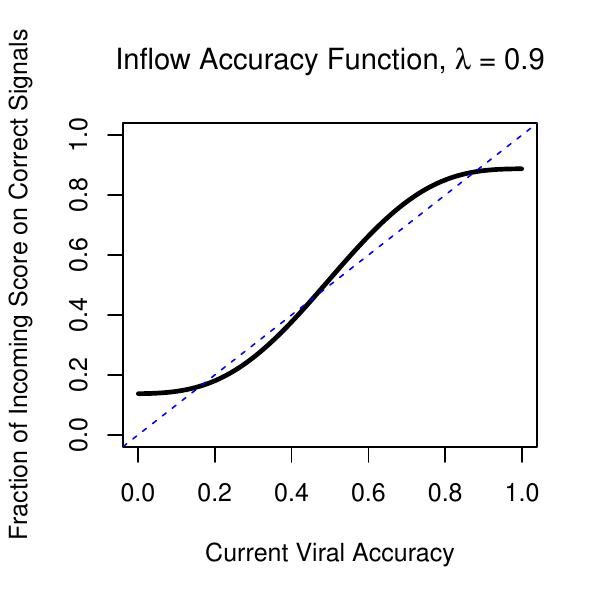}
    \caption{The inflow accuracy function for the majority rule with $q=0.55,$ $K=7,C=3$, and $\lambda \in \{0.3,0.6,0.9\}$. With $\lambda=0.3$ and $\lambda=0.6,$ there is a single informative steady state. With $\lambda=0.9,$ a misleading steady state appears.}
    \label{fig:inflowplots}
\end{figure}

\subsection{Equilibrium Steady States}\label{sec:eq_ss}

So far, we have discussed steady states associated with arbitrary
strategies. We are mainly interested in \emph{equilibrium steady states},
i.e., the distribution $\pi(\cdot\mid\sigma^{*})$ when $\sigma^{*}$
is a limit equilibrium strategy.

We now define the critical virality weight, which is the smallest $\lambda$ for which there is a misleading steady state under the majority rule. We will see the set of equilibrium steady states changes sharply around this critical
value of $\lambda$.
\begin{defn}
The \emph{critical virality weight} $\lambda^{*}$ is $$\lambda^{*}:=\inf\{\lambda\in[0,1]:\phi_{\sigma^{\text{maj}}}(x^*)=x^*\text{ for some } x^* \in [0,1/2]\},$$
provided this set is non-empty. Otherwise, we let $\lambda^{*}=\infty.$
\end{defn}

Depending on the values of the parameters $q,K,C,$ it turns out that
either $\sigma^{\text{maj}}$ only has strictly informative steady
states for any virality weight (so $\lambda^{*}=\infty$), or there is some
smallest $0<\lambda^{*}\le1$ where a fixed point in $[0,1/2]$ first
appears for $\phi_{\sigma^{\text{maj}}}.$ For instance, for $q=0.55, K=7, C=3$, Figure \ref{fig:touchpoint} shows that $\lambda^* \approx 0.76$. 

The next theorem characterizes  which values of $\lambda$ induce a misleading equilibrium steady state. The answer is independent of the selection of the limit equilibrium.

\begin{thm}
\label{thm:equilibrium_ss_by_lambda}For $0<\lambda\le\lambda^{*}$,
the unique limit equilibrium is $\sigma^{\text{maj}}$. At every $\lambda < \lambda^*$, $\sigma^{\text{maj}}$
only has one equilibrium steady state, and it is strictly higher than
$q$ (thus, strictly informative). For $\lambda\ge\lambda^{*}$, 
every limit equilibrium induces at least one strictly misleading
steady state.

\end{thm}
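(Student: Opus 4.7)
The theorem has three sub-claims: a structural description of the equilibrium steady state of $\sigma^{\text{maj}}$ at $\lambda<\lambda^{*}$, uniqueness of $\sigma^{\text{maj}}$ as a limit equilibrium for $0<\lambda\le\lambda^{*}$, and existence of a strictly misleading steady state for every limit equilibrium when $\lambda\ge\lambda^{*}$. I would prove them in that order, leaning on Theorem~\ref{thm:ss_half_stable} and Lemma~\ref{lem:maj_ss_types} as the main tools. The core computation for the middle claim is $\phi_{\sigma^{\text{maj}}}(q)>q$: at $x=q$ the sampling accuracy also equals $q$, so the news feed consists of $K$ independent draws of accuracy $q>1/2$, and a Condorcet-jury-theorem argument gives that the news-feed majority is correct with probability strictly greater than $q$; the expected count of correctly shared stories under $\sigma^{\text{maj}}$ thus exceeds $qC$, and together with the agent's own fresh post (correct with probability $q$) one has $\phi_{\sigma^{\text{maj}}}(q)>q$. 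Combined with $\phi_{\sigma^{\text{maj}}}(0)>0$, the definition of $\lambda^{*}$ (no fixed points in $[0,1/2]$ for $\lambda<\lambda^{*}$), Lemma~\ref{lem:maj_ss_types}, and continuity, this yields $\phi_{\sigma^{\text{maj}}}(x)>x$ on $[0,q]$. Together with $\phi_{\sigma^{\text{maj}}}(1)<1$ there is at least one fixed point in $(q,1)$; uniqueness in $(q,1)$ would follow from a single-crossing argument based on monotonicity properties of the binomial-sum terms in the formula for $\phi_{\sigma^{\text{maj}}}$.

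For the uniqueness of the limit equilibrium at $0<\lambda\le\lambda^{*}$, I would first verify that $\sigma^{\text{maj}}$ is itself a limit equilibrium: at its steady state $x^{*}>q$ the sampling accuracy $\lambda x^{*}+(1-\lambda)q$ strictly exceeds $q$, so the likelihood ratio from even the smallest news-feed majority dominates that of a single private signal; hence $\sigma^{\text{maj}}$ is the unique best response to beliefs concentrated on $x^{*}$, and it arises as a limit of symmetric BNE as the society grows. For uniqueness, any candidate limit equilibrium $\sigma^{*}$ must have its steady states supported in $(q,1)$---the argument mirrors the Condorcet analysis above, applied to the inflow function of an arbitrary best-response strategy, yielding $\phi_{\sigma^{*}}(q)>q$ and hence steady states above $q$---after which a case-by-case best-response computation forces $\sigma^{*}=\sigma^{\text{maj}}$.

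For the third claim, I would argue by contradiction: suppose a limit equilibrium $\sigma^{*}$ with $\lambda\ge\lambda^{*}$ has no strictly misleading steady state, so all its steady states satisfy $\lambda x+(1-\lambda)q\ge1/2$. The main lemma---and the principal obstacle---is that $\sigma^{*}=\sigma^{\text{maj}}$ under this hypothesis. At each observation $(s,k)$ with $k\ne K/2$, I would compute the posterior on $\omega$ using the belief distribution over steady states and verify that the best response matches the majority action; at $k=K/2$ (if $K$ is even), the news feed is symmetric in $\omega$ regardless of $x$, so only the signal matters, matching the majority-rule tiebreaker. The delicate step is ruling out ``signal-trusting'' deviations at observations $k$ just above or below $K/2$ when the belief may put weight on sampling accuracies in $[1/2,q)$; I would rule these out by a self-consistency argument paralleling the previous paragraph, showing that any such deviation pushes the induced steady state above $q$ and thus invalidates the deviation. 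Once $\sigma^{*}=\sigma^{\text{maj}}$ is established, $\phi_{\sigma^{*}}$ coincides with $\phi_{\sigma^{\text{maj}}}$, which by $\lambda\ge\lambda^{*}$ has a fixed point in $[0,1/2]$; by Lemma~\ref{lem:maj_ss_types} this point is strictly misleading, and by Theorem~\ref{thm:ss_half_stable} (including the touchpoint case at $\lambda=\lambda^{*}$) it is reached with positive probability, yielding the required contradiction.
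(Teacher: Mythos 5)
Your decomposition matches the paper's (the same three sub-claims, the same contradiction structure for the $\lambda\ge\lambda^{*}$ case, and the correct invocation of the touchpoint case of Theorem~\ref{thm:ss_half_stable} at $\lambda=\lambda^{*}$), and your Condorcet computation at $x=q$ is essentially the paper's Lemma~\ref{lem:1_implies_2}. But there is a genuine gap at the heart of both the uniqueness claim and the $\lambda\ge\lambda^{*}$ claim: the passage from a statement about fixed points of $\phi_{\sigma^{*}}$ to a statement about best responses in large finite societies. Your proposal says agents' beliefs are ``concentrated on $x^{*}$'' and that a ``self-consistency argument'' handles the delicate cases, but this is precisely what must be proved. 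An agent in a symmetric BNE with $n$ players does not know their position and must average their posterior about sampling accuracy over all positions $K+1,\dots,n$; to conclude that they follow the news-feed majority you need a uniform-in-time concentration result showing that, once all fixed points of $\phi_{\sigma^{*}}$ have sampling accuracy bounded above $1/2$ (resp.\ above $q$), the process $x(t)$ enters and remains in that region with probability $1-\epsilon''$ for all $t\ge N$, with $N$ independent of $n$ and uniform over strategies near $\sigma^{*}$. This is the paper's Lemma~\ref{lem:drift_right}, a nontrivial martingale-concentration argument, and nothing in your proposal substitutes for it. Relatedly, your remark that a deviation ``pushes the induced steady state above $q$'' misdiagnoses the problem: a unilateral deviation does not move the steady state; the question is whether the majority action is optimal against the belief induced by the candidate $\sigma^{*}$ itself.

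A second, related gap is the circularity in your claim that ``the Condorcet analysis applied to the inflow function of an arbitrary best-response strategy'' yields $\phi_{\sigma^{*}}(q)>q$. The Condorcet bound applies to strategies that follow the news-feed majority; an arbitrary limit equilibrium need not do so a priori, and whether it does depends on beliefs, which depend on $\phi_{\sigma^{*}}$. The paper breaks this circle with Lemma~\ref{lem:3_implies_1}: \emph{any} state-symmetric strategy satisfying the equilibrium monotonicity property $\mathbb{E}[\sigma(1,k)]\ge\mathbb{E}[\sigma(-1,k)]$ has $\phi_{\sigma}\ge\phi_{\sigma^{\text{maj}}}$ throughout the weakly misleading region, so a misleading fixed point for $\sigma^{*}$ forces one for $\sigma^{\text{maj}}$ (via $\phi_{\sigma^{\text{maj}}}(0)>0$ and the intermediate value theorem, plus continuity in $\lambda$ to contradict the infimum definition of $\lambda^{*}$). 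You need this comparison lemma, or something equivalent, before any Condorcet-style bound on $\phi_{\sigma^{*}}$ is available. Two smaller points: $\phi_{\sigma^{\text{maj}}}(q)>q$ alone does not exclude fixed points in $(1/2,q)$ (the function could cross twice), so the bound must be established on the whole interval; and for $\lambda>\lambda^{*}$ the existence of a strictly misleading fixed point of $\phi_{\sigma^{\text{maj}}}$ does not follow directly from the infimum definition of $\lambda^{*}$ --- the paper derives it from the observation that $\phi_{\sigma^{\text{maj}}}$ depends on $(x,\lambda)$ only through $\lambda x+(1-\lambda)q$.
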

This result shows how  virality weight affects the
types of equilibrium steady states: there are only informative equilibrium
steady states when $\lambda<\lambda^{*},$ while there will always
be misleading equilibrium steady states when $\lambda\ge\lambda^{*}$.\footnote{It is easy to check that misleading steady states can indeed arise. In fact, one can show that the threshold $\lambda^*$ is finite whenever $K$ and $C$ are large enough.}
It also shows the majority rule is the only possible limit equilibrium
for non-zero virality weights below the critical virality weight $\lambda^{*}$.\footnote{When $\lambda=0,$ the only possible equilibrium steady state is
$q$, so every sampled signal is exactly as informative as
one's private signal. Equilibrium may not be unique since there is some degree of freedom in tie-breaking: for example, 
when there are $k+1$ positive signals and $k$ negative signals in the sample and one's private signal is negative, the agent is indifferent between endorsing any signal.} For virality weights above $\lambda^{*}$, the majority rule may not be a limit equilibrium,
and there may be multiple limit equilibria. Nevertheless, the result
tells us that every limit equilibrium has a positive probability
of generating a misleading steady state where false signals dominate
samples. Agents are aware of the possibility of a misleading steady state and would like to account for it, but are unsure whether society converged to a misleading steady state or an informative one.

The theorem greatly simplifies checking whether there is a misleading steady state at a limit equilibrium under given parameter values. Without the theorem, checking for misleading steady states would require solving for equilibrium strategies, which is a complicated calculation depending on $\pi(\cdot|\sigma)$ and therefore the entire stochastic process. The theorem says we can instead check for misleading steady states under the majority rule, which are simply roots of a polynomial.

Finally, Theorem \ref{thm:ss_half_stable} and Theorem \ref{thm:equilibrium_ss_by_lambda} together imply a discontinuity in equilibrium learning outcomes at $\lambda^*$. For $\lambda$ just below $\lambda^*$, we converge almost surely to a steady state where a majority of agents believe the true state is more likely. But when $\lambda=\lambda^*$, there is a positive probability of converging to a misleading steady state. The expected accuracy under the limit equilibrium strategy $\sigma^{\text{maj}}$ also discontinuously drops at $\lambda^*$:
\begin{cor}\label{cor:discontinuity}The expected steady-state viral accuracy under the unique limit equilibrium jumps downward at $\lambda^*$:
$\lim_{\lambda \rightarrow (\lambda^*)^- }\mathbb{E}_{\pi(x^* \mid \sigma^{\text{maj}},\lambda)} [x^*] >  \mathbb{E}_{\pi(x^* \mid \sigma^{\text{maj}},\lambda^*)} [x^*].$
\end{cor}
In our model, expected viral accuracy measures expected social welfare, since agents derive utility from endorsing correct signals and viral accuracy measures (up to an affine transformation) the fraction of correct endorsements. The corollary implies that in our social-media application,  outcomes and user welfare can  be very sensitive to design choices or to attempts at influencing platforms, such as misinformation campaigns. 

We now turn to the benefits of higher virality weight $\lambda$. Our next result says that a larger virality weight can lead to a more accurate informative steady state:
\begin{prop}
\label{prop:informative} For $\lambda\in(0,\lambda^{*})$, the unique
steady state $x^*$ at the unique limit equilibrium
under $\lambda$ is strictly increasing in $\lambda.$ 
\end{prop}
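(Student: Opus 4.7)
The plan is to use Theorem \ref{thm:equilibrium_ss_by_lambda} together with the footnote that at $\lambda=0$ the only possible equilibrium steady state is $q$. Combined, these tell us $x^*(0)=q$, and for $\lambda\in(0,\lambda^*)$ the unique limit equilibrium is the majority rule $\sigma^{\text{maj}}$ with $x^*(\lambda)>q$. The key technical step is to show that $\phi_{\sigma^{\text{maj}}}(x,\lambda)$ is strictly increasing in $\lambda$ whenever $x>q$; strict monotonicity of $x^*(\cdot)$ will then follow from an intermediate value argument together with uniqueness of the equilibrium steady state.

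The boundary case is immediate: $x^*(0)=q<x^*(\lambda_2)$ for any $\lambda_2\in(0,\lambda^*)$ directly from the facts above. For the main case, fix $0<\lambda_1<\lambda_2<\lambda^*$. Substituting the majority rule into Definition \ref{def:inflow_acc}, the inflow accuracy function becomes
\[\phi_{\sigma^{\text{maj}}}(x,\lambda)=\frac{q+C\cdot\mathbb{E}[\alpha_Y]}{1+C},\qquad Y\sim\text{Binom}(K,\lambda x+(1-\lambda)q),\]
where $\alpha_k:=q\cdot\mathbf{1}\{k\ge K/2\}+(1-q)\cdot\mathbf{1}\{k>K/2\}$. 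Since $\alpha_k$ is non-decreasing in $k$ and attains both the values $0$ and $1$, and since the Binomial family is strictly first-order stochastically increasing in its success parameter on $(0,1)$, $\mathbb{E}[\alpha_Y]$ is a strictly increasing function of $p:=\lambda x+(1-\lambda)q$. When $x>q$, $p$ itself is strictly increasing in $\lambda$, so $\phi_{\sigma^{\text{maj}}}(x,\lambda)$ is strictly increasing in $\lambda$ at any such $x$.

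Evaluating at $x=x^*(\lambda_1)>q$ then yields $\phi_{\sigma^{\text{maj}}}(x^*(\lambda_1),\lambda_2)>\phi_{\sigma^{\text{maj}}}(x^*(\lambda_1),\lambda_1)=x^*(\lambda_1)$. Combined with $\phi_{\sigma^{\text{maj}}}(1,\lambda_2)<1$ and continuity, the intermediate value theorem produces a fixed point of $\phi_{\sigma^{\text{maj}}}(\cdot,\lambda_2)$ in $(x^*(\lambda_1),1)$; taking $x^{**}$ to be the smallest such fixed point, $\phi_{\sigma^{\text{maj}}}(x,\lambda_2)>x$ throughout $(x^*(\lambda_1),x^{**})$, so condition (b) of Theorem \ref{thm:ss_half_stable} applies and $\pi(x^{**}\mid\sigma^{\text{maj}})>0$. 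Thus $x^{**}$ is an equilibrium steady state of $\sigma^{\text{maj}}$ at $\lambda_2$, and by the uniqueness clause of Theorem \ref{thm:equilibrium_ss_by_lambda} it must equal $x^*(\lambda_2)$, giving $x^*(\lambda_2)>x^*(\lambda_1)$. The main subtlety to handle is that $\phi_{\sigma^{\text{maj}}}(\cdot,\lambda_2)$ might admit additional unreachable fixed points below $x^*(\lambda_1)$, but this does not disrupt the argument, since I only need the existence of a reachable fixed point strictly above $x^*(\lambda_1)$, which uniqueness of the equilibrium steady state then forces to coincide with $x^*(\lambda_2)$.
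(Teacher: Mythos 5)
Your proof is correct and follows essentially the same route as the paper's: both use that $x^*(\lambda_1)>q$ (from Theorem \ref{thm:equilibrium_ss_by_lambda}), that $\phi_{\sigma^{\text{maj}}}(x,\lambda)$ is strictly increasing in $\lambda$ for $x>q$, and then the intermediate value theorem plus uniqueness of the equilibrium steady state at $\lambda_2$ to conclude $x^*(\lambda_2)>x^*(\lambda_1)$. You simply fill in two details the paper leaves implicit — the first-order-stochastic-dominance argument behind the monotonicity in $\lambda$, and the check via Theorem \ref{thm:ss_half_stable} that the new fixed point is a genuine (positive-probability) steady state.
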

In the region of virality weights that do not generate misleading equilibrium steady states, increasing $\lambda$ allows
more information aggregation. This is because a positive signal in $i$'s sample not only tells $i$ about the realization of a single signal, 
but also lets $i$ draw inferences about the hidden information available to $i$'s predecessors who may have chosen
to endorse that positive signal. As $\lambda$ increases, the sampled signals
are closer to indicating a consensus among many agents.

Theorem \ref{thm:equilibrium_ss_by_lambda} and Proposition \ref{prop:informative} together formalize the trade-off in the virality weight $\lambda$ described in the introduction. Increasing $\lambda$ initially increases the steady-state viral accuracy and sampling accuracy. But starting at a critical threshold $\lambda^*$, it discontinuously creates the social form of confirmation bias discussed in the introduction, which we have now formalized in terms of a misleading steady state.

Finally, we note that our analysis has focused on viral accuracy rather than agents' beliefs. One can obtain a lower bound on how many agents hold directionally correct beliefs at a misleading steady state: an agent will only endorse an incorrect signal if they believe the true state has probability at most 50\% or if there are not enough correct signals to endorse. The latter possibility somewhat complicates this analysis, however. We therefore omit the details, but do provide numerical evidence of wrong beliefs in the next subsection.

\subsection{Numerical Illustrations of Equilibrium for $\lambda>\lambda^*$}

Theorem \ref{thm:equilibrium_ss_by_lambda} determines the unique equilibrium for $\lambda \leq \lambda^*$ but does not give a complete characterization when $\lambda>\lambda^*$. We describe a set of numerical simulations that calculate equilibrium in this region and discuss long-run viral accuracy and beliefs. We find sizable jumps in the probability of the misleading steady state, the expected welfare, and expected belief in the true state when virality weight crosses $\lambda^*$. As $\lambda$ increases above $\lambda^*$,  agents rely more on their private signals in response to the possibility of misleading steady states, but nevertheless converge quite often to misleading steady states  where they hold fairly strong wrong beliefs. Another insight is that behavioral responses to increasing the virality weight can be larger than  the mechanical effects. So the equilibrium implications of changing the virality weight may be the opposite of those in a model where agents exogenously use majority rule as a fixed strategy.

We numerically estimate equilibria in an example with signal precision $q=0.55$,  capacity $C=3$, and sample sizes $K\in\{6,8,10\}$. We suppose there are $K$ seed signals each matching the state with probability $q$ and each with an initial popularity score of 1. Our simulation includes all virality weights higher than the respective $\lambda^*$ in a grid of width 0.02, $\lambda \in \{1, 0.98, 0.96, 0.94, ...\}$. For each $K$ and $\lambda$, we first check for a pure-strategy equilibrium. When there is no pure-strategy equilibrium, we check for a mixed-strategy equilibrium of the following form: agents use the majority rule except when they see a sample with $k$ signals on one side and $k-2$ signals on the other. In that case, they will follow their private signal with some probability $0<p<1$ and follow the sample majority with probability $1-p$ (see details in Appendix \ref{subsec:simulation_details}). For all parameters, this procedure finds a single limit equilibrium, which we describe below. The simulation results can be found in Figure \ref{fig:combined} and Appendix Table \ref{tab:eqm-all}.

\begin{figure}
\centering
\includegraphics[width=.96\linewidth]{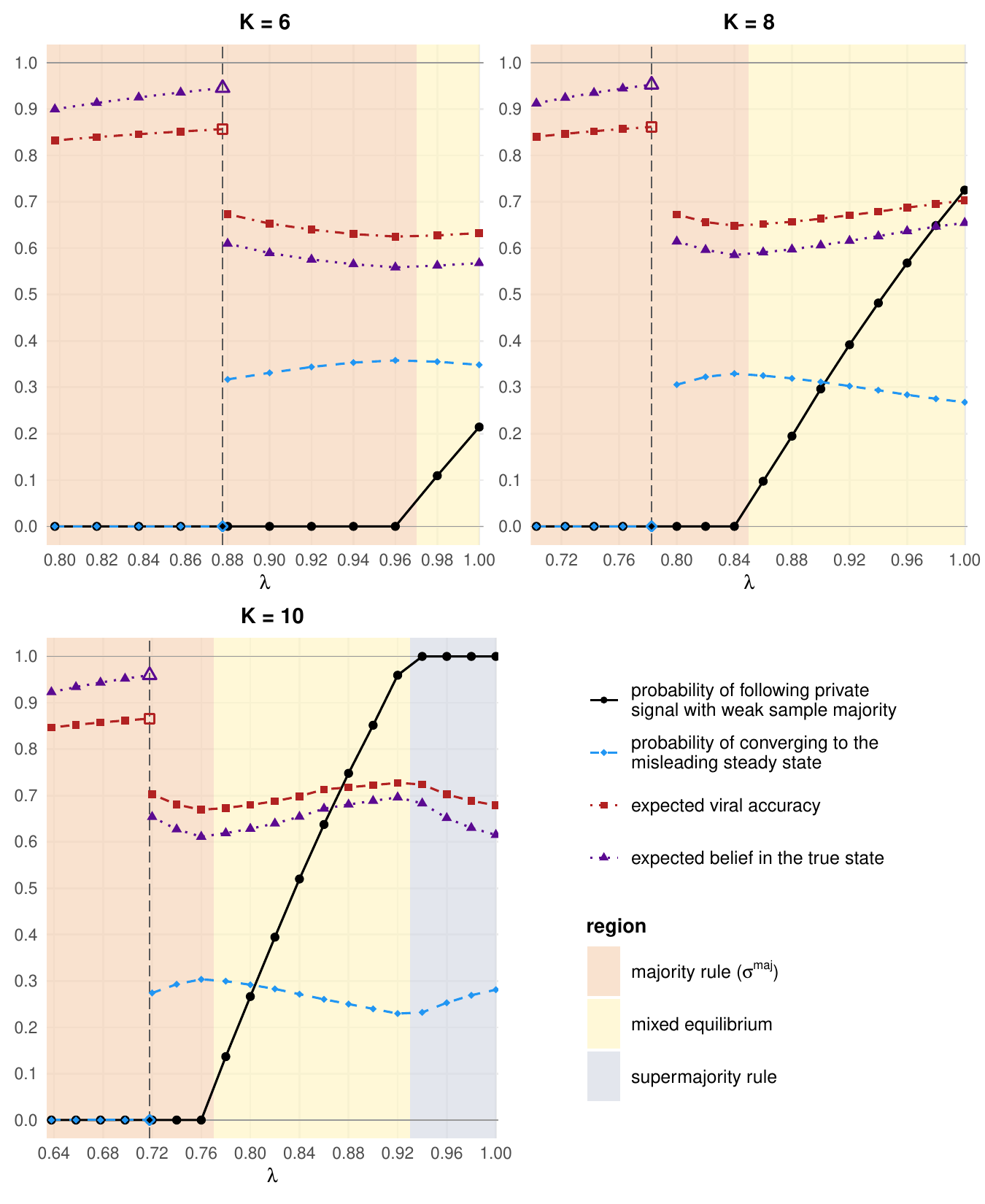}
\caption{Equilibrium quantities as functions of $\lambda$ around the critical virality weight $\lambda^*_K$, for sample sizes $K=6,8,10$. Parameters $q=0.55$, $C=3$ are fixed throughout. The dashed vertical line in each panel marks $\lambda_K^*$. Each panel shows four curves: the probability $p$  of following private signal upon observing a weak sample majority ($k+2$ signals on one side and $k$ signals on the other), the probability of converging to the misleading steady state, the expected steady-state viral accuracy $E[x^{*}]$, and the expected belief in the true state. Background shading marks the equilibrium regime: peach = $\sigma^{\mathrm{maj}}$ (majority rule), yellow = mixed equilibrium, light blue = supermajority rule.}
\label{fig:combined}
\end{figure}

We find three equilibrium regimes for $\lambda>\lambda^*$. For $\lambda$ slightly above $\lambda^*$, the majority rule remains a limit equilibrium. As $\lambda$ continues increasing, there is a second regime where the limit equilibrium is a mixed strategy of the form described above for some interior mixing probability $0<p<1$, with $p$ increasing roughly linearly in $\lambda$. Finally, for $K=10$,  there is a third regime of $\lambda$ near 1 where agents use a pure strategy we call the \emph{supermajority rule}:  if there are at least four more signals on the majority side than the minority side in the sample, then follow the majority; otherwise, follow the private signal. (The degenerate cases of $p=0$ and $p=1$ correspond to majority rule and supermajority rule.)

The intuition behind these equilibria is that the possibility of society being stuck in a misleading steady state makes sampled
signals less informative about the state of nature. When $\lambda$ is not much higher than $\lambda^{*}$, sampled signals are strictly
more informative than private signals, making the majority rule optimal. But for larger $\lambda$, 
a sample
of $k$ positive signals and $k-2$ negative ones 
is either exactly as informative as one positive private signal (in the second regime) or strictly less informative
(in the third regime for $K=10$).


We also estimate equilibrium beliefs given each possible observation. Figure \ref{fig:belief_dist} shows the distributions of beliefs in the true state given sampled signals for $K=10, \lambda=1$, conditional on the informative and misleading steady states. The two conditional distributions are almost non-overlapping and the steady state almost fully determines whether an agent will have more than 50\% belief in the true state.\footnote{Figure \ref{fig:belief_dist} shows beliefs based only on sampled signals and not private signals for visual clarity, but we include the posterior beliefs based on all available information in Appendix Figure \ref{fig:belief_dist_with_signal}.}

\begin{figure}
\centering
\includegraphics[width=\linewidth, trim={0 0 0 2cm}, clip]{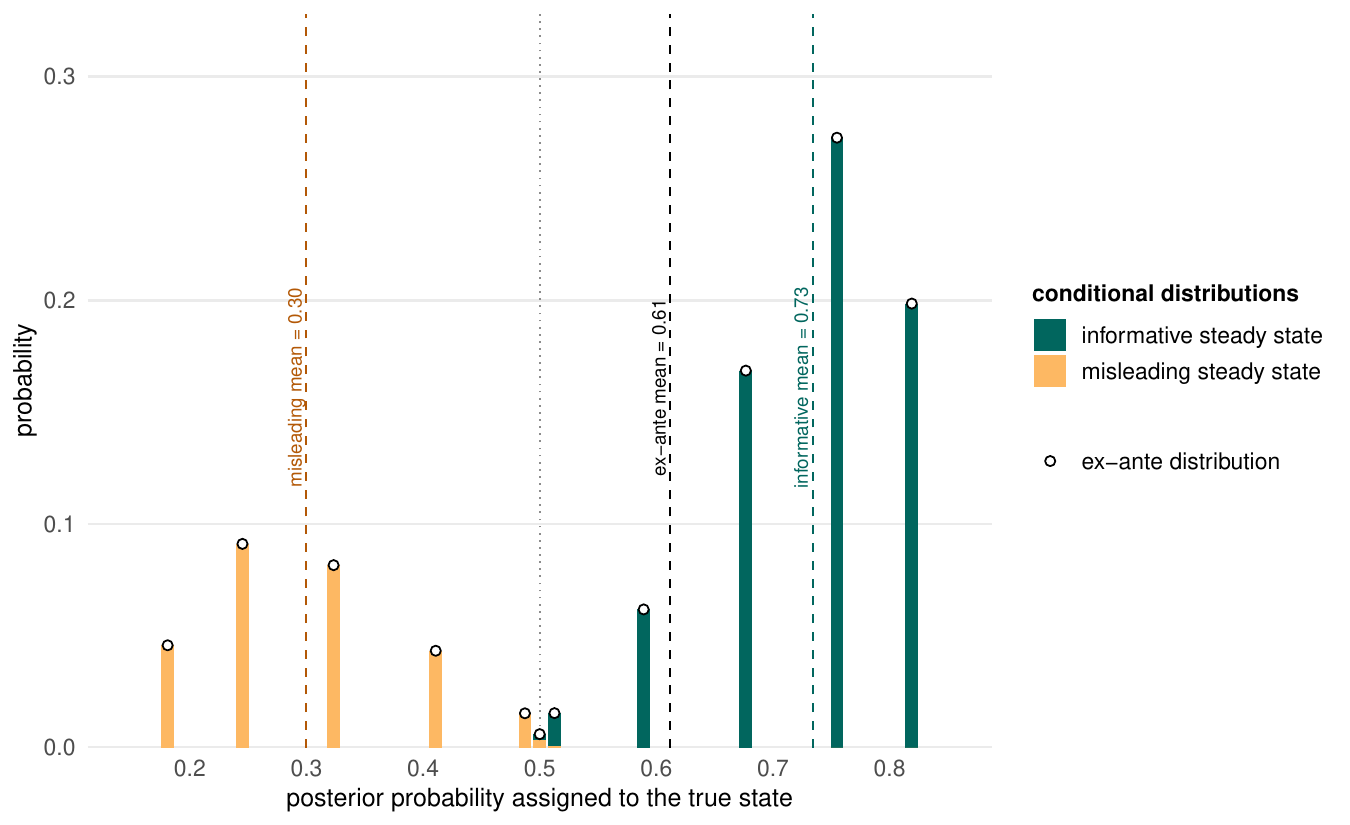}
\caption{Total bar height shows the ex-ante distribution of beliefs in the true state based only on sampled signals, for $K=10$ and $\lambda=1$. Each color is the conditional distribution of beliefs given one steady state. The two conditional distributions are stacked but they are almost non-overlapping.}
\label{fig:belief_dist}
\end{figure}

We make three observations about the simulation results: 
\begin{enumerate}
    \item Society converges to the misleading steady state at a substantial rate (more than 22\% of the time for every $K$ and every $\lambda>\lambda^*$  in the simulation). Moreover, there is a large jump in the probability of converging to the misleading steady state  as $\lambda$ crosses $\lambda^*$. This generates a large downward jump in expected welfare and expected belief in the true state. 
    
    \item When the steady state is misleading, agents form incorrect beliefs: agents have less than $38\%$ expected posterior belief in the true state of nature after observing the sample for all parameters in the simulation. Moreover, for larger $K$, almost all agents have directionally wrong beliefs about which state is more likely to be the true state of nature  (see Figure~\ref{fig:belief_dist} and Appendix Figure~\ref{fig:belief_dist_with_signal}).
    
    \item The mechanical effects of increasing $\lambda$ on welfare and beliefs can be overturned by behavioral responses. For a fixed strategy, higher $\lambda$ increases the probability of converging to the misleading steady state, thus lowering welfare and decreasing expected belief in the true state. But in the mixed-strategy regions, agents respond to higher $\lambda$ by relying more on their private signals. This behavioral effect is larger than the mechanical effect: in each of the three mixed-strategy regions, increasing $\lambda$ increases  expected welfare and expected belief in the true state.
    
\end{enumerate}

\section{Consequences for Platforms}

Before turning to discussing the proofs of our main results, we provide two consequences for the application to social media. First, we solve for an empirically observable object, the distribution of popularity scores across signals. Second, we provide a simple policy that guarantees higher accuracy by changing the virality weight $\lambda$ over time.

\subsection{Popularity Distributions}\label{sec:power_law}

The steady-state structure of signals in the society lets us derive the distribution of popularity scores. In our social-media application, this gives predictions about an empirically observable object: the distribution of shares (or likes) across posts on a platform. This section solves for the popularity distribution of signals, and we find that this distribution has a power-law tail.

We describe the distribution of popularity under a strategy $\sigma$, and it is without loss to assume that $\omega=1$. This distribution will naturally depend on the steady state the society reaches, and we characterize popularity scores conditional on viral accuracy converging to a steady state $x^*$. We will see that the distribution only depends on the strategy $\sigma$ through the steady state $x^*$.

The characterization requires a few definitions. Define 
\begin{equation}
m_{+}:=(1+C)x^{*}-q\mbox{ and } m_{-}:=C+q-(1+C)x^{*}\label{eq:mplus-mminus}
\end{equation}
to be the asymptotic expected number of positive and negative signals
endorsed in each period.
 Let $q_{+}=q$ and $q_{-}=1-q$ be the probabilities of positive and negative private signals, respectively. The distribution of popularity will depend on $x^*$ through the quantities \begin{equation} \delta:=\frac{(1-\lambda)(1+C)}{\lambda} \text{ and }
\gamma_{\theta}:=2+\frac{q_{\theta}(1+\delta)}{m_{\theta}},\label{eq:gamma-def}
\end{equation}
where $\theta \in \{+,-\}.$
\begin{prop}\label{prop:distributions}
Suppose $0<\lambda\le1.$ Suppose agents choose a state-symmetric strategy $\sigma$. Condition
on $\omega=1$ and on viral accuracy converging to a steady state $x^{*}$ of $\sigma$. Then, as $t\to\infty$, the empirical distribution of scores within
each signal type $\theta \in \{+,-\}$ converges (in probability) to the distribution
\begin{equation}
g_{\theta}(r)=(\gamma_{\theta}-1)\,\frac{\Gamma(\gamma_{\theta}+\delta)}{\Gamma(1+\delta)}\cdot\frac{\Gamma(r+\delta)}{\Gamma(r+\gamma_{\theta}+\delta)},\qquad r=1,2,\dots\label{eq:gtheta}
\end{equation}
where $\Gamma$ is the gamma function. The limit distribution has
a power-law tail: there is a constant $c_{\theta}$ such
that $g_{\theta}(r)\sim c_{\theta}\,r^{-\gamma_{\theta}}$ for $r$ large.
\end{prop}

At steady state, agents' observations and therefore their actions are drawn from a fixed distribution. We show that this implies the popularity distribution converges over time to a stationary distribution. The proof pins down this stationary distribution by analyzing a variant of a preferential attachment process.

As long as $\lambda>0$ the proposition predicts a power-law tail for the popularity distribution of signals of each type, with shape parameters that depend on the steady state $x^*$ (and therefore on agent behavior). A number of empirical papers argue the distributions of numbers of shares on social-media platforms fit power-law distributions. See, for example, \cite*{kwak2010twitter} for an early analysis and \cite{garg2025political} for more recent work using maximum-likelihood based tests.

The thickness of the tail distributions is determined by the parameter $\gamma_{\theta}$, which is just a rational function of $x^*$ and model parameters. This lets us compare the distributions of popularity scores for positive and negative signals.

\begin{cor}\label{cor:tails}
If $x^{*}>q$, then $\gamma_{+}<\gamma_{-}$. If $x^{*}<q$, then
$\gamma_{+}>\gamma_{-}$. 
\end{cor}

When the true state is $\omega=1$, positive signals have a heavier tail if $x^* > q$ and negative signals have a thicker tail if $x^*<q$. In particular, this means that the distribution of correct signals has a heavier
tail than the distribution of incorrect signals in the
unique equilibrium steady state when $\lambda<\lambda^{*}$, since
we know from Theorem \ref{thm:equilibrium_ss_by_lambda} that this steady state must be strictly higher than $q$. It also
implies that in every misleading steady state, the popularity distribution
is more heavy-tailed among the incorrect signals than the correct
signals.

\subsection{Changing Virality Weight Over Time}\label{sec:time_varying}



Our model  found a basic trade-off in virality weight: higher  weights can improve the accuracy of informative steady states but can also lead to misleading steady states. A natural question is whether sampling rules outside of the class we considered can circumvent this trade-off and improve accuracy. We now show that a simple modification can do so: letting the virality weight change over time. In our main application to social-media platforms, this corresponds to generating news feeds with different algorithms when an issue first emerges and after the discussion has developed further.

To formalize this, let $\overline{x}$ be the viral accuracy at the informative steady state under the majority rule when $\lambda=1$. We will show that $\overline{x}$ is the best possible steady-state viral accuracy that can be reached with positive probability with any (not necessarily equilibrium) strategy and any \emph{fixed} virality weight $\lambda$. Furthermore, provided $\lambda^* < 1$, Theorem \ref{thm:equilibrium_ss_by_lambda} tells us that any limit equilibrium and any $\lambda$ that attain the steady state  $\overline{x}$ with positive probability also generate a misleading steady state with positive probability.  The following result says if the platform can start with $\lambda = 0$ and switch to $\lambda = 1$ after some time, then there is a limit equilibrium where viral accuracy in later periods is arbitrarily close to  $\overline{x}$ with probability arbitrarily close to 1. Compared to any limit equilibrium for any fixed $\lambda$, this generates strictly higher expected viral accuracy when society is large enough.

\begin{prop}
\label{prop:changing_lambda}Suppose $\lambda=0$ for the first $t_{0}(n)$
periods and then $\lambda=1$ for all subsequent periods. We can choose
a sequence of $t_{0}(n)$ such that $\sigma^{\text{maj}}$ is an equilibrium
for $n$ sufficiently large, and given strategy $\sigma^{\text{maj}}$
we have $x(n)\to\overline{x}$ in probability (as the number of agents $n \rightarrow \infty$). The viral accuracy $\overline{x}$ is the highest viral accuracy at any steady state given any fixed $\lambda$ and any state-symmetric strategy.
\end{prop}

If the platform were to set $\lambda=1$ in all periods, then there is some strictly positive probability that society converges to a misleading steady state. The key idea is that the platform can make the probability of this bad outcome arbitrarily small by showing random news feeds ($\lambda = 0$) to a large enough number of agents in the early periods (with the length of this $\lambda=0$ regime fixed ex-ante). These early agents have a very high probability of generating a viral accuracy higher than $1/2$ because the low virality weight lets independent information accumulate.  The platform then exploits this favorable initial condition and switches to  $\lambda = 1$ to generate stronger (and very likely correct) beliefs.

The proposition shows that by modifying news-feed algorithms to be dynamic in a simple way can improve accuracy. This policy could provide guidance for regulators concerned about content accuracy on platforms. Regulation could, for example, limit how much platforms can show viral content early in the discussion of an issue but ease these restrictions after some time. The policy is also a potential approach for platforms facing regulatory constraints requiring some level of accuracy.

Real-world platforms of course need not be limited to sampling rules from the particular class we consider in this paper (time-varying or otherwise). The proposition demonstrates that showing less viral content early in a learning process and more viral content later can improve accuracy, and we expect this dynamic would extend to other sampling rules. We note that platforms could obtain very high accuracy by learning the true state and then showing users exclusively stories matching the state. This may be a helpful policy in some situations, but may be controversial or difficult to implement in others. Our Proposition \ref{prop:changing_lambda} suggests that high accuracy may also be attainable with carefully designed \emph{content-neutral} algorithms, where sampling probabilities only depend on the signals' popularity scores and not on which state of nature they support.

\section{Overview of Techniques}\label{sec:techniques}

In this section, we provide an overview of the proofs of our main results and the underlying techniques. This includes introducing relevant results from stochastic approximation, which may be unfamiliar to some readers.

In time period $t$, there are $n_0+t$ signals in the society, including the $n_0$ seed signals and the $t$ signals contributed by agents. As the characterization in Section~\ref{sec:power_law} indicates, the popularity distributions are quite complicated objects. A  convenient property of our model is that the distribution of observations for a period $t+1$ agent only depends on two numbers. The first is the viral accuracy $x(t)$, which determines the distribution of signals when they are sampled according to their popularity scores. The second is the fraction $z(t)$ of  signals in the current pool  that match the true state, which determines the distribution of signals when they are sampled uniformly at random. As a result, the period $t+1$ values $x(t+1)$ and $z(t+1)$  only depend  on $x(t)$ and $z(t)$.

Our analysis uses stochastic approximation techniques to analyze the two-dimensional stochastic process $(x(t),z(t))$. The viral accuracy $x(t)$ is the main object of study. The fraction $z(t)$ of state-matching signals  in the pool converges to the signal precision $q$ by the law of large numbers, and we can also bound the rate of this convergence with standard methods. We therefore omit most of the details on the analysis of $z(t)$ here and refer the interested reader to the proofs.

The stochastic approximation techniques used here are similar to a literature on generalized P\'{o}lya urns, but our setting differs in two respects. First, the relevant stochastic process in our model is two-dimensional since we keep track of both the fraction of signals that match the true state and the viral accuracy. Second, signals are endorsed in correlated groups of $C$ signals rather than one at a time.

\subsection{Convergence}

We begin by sketching the proof of Proposition~\ref{prop:conv}, which fixed a state-symmetric strategy $\sigma$ and states that viral accuracy converges to a steady-state value $x^*$. The basic idea is to decompose the changes $(x(t+1)-x(t),z(t+1)-z(t))$ in the state of the system into a deterministic term and a martingale noise term.

To formalize this, let $\xi(t+1)$ be the two-dimensional random vector  whose first coordinate is the fraction of the $C+1$ popularity points added in period $t+1$ that accrue to correct signals (including the new signal's initial popularity point)  and whose second coordinate is a binary indicator for whether the new signal $s_{t+1}$ matches the state. This random vector captures  changes to the system state. If $S_t$ is total popularity score after $t$ agents have acted and $N_t=n_0+t$ is the pool size, then
\[
x(t+1)=x(t)+\frac{C+1}{S_t+C+1}(\xi_1(t+1)-x(t)) \text{ and }
z(t+1)=z(t)+\frac{1}{N_t+1}(\xi_2(t+1)-z(t)).
\]
We express the difference  $\xi(t+1)-(x(t),z(t))$ as the sum of a deterministic function
$$h(x(t),z(t)) = \mathbb{E}[\xi(t+1) \mid x(t), z(t)] - (x(t),z(t))$$
and a martingale difference term $$M(t+1) = \xi(t+1) - \mathbb{E}[\xi(t+1) \mid x(t),z(t)].$$

Given such a decomposition, a stochastic approximation result gives sufficient conditions for the discrete-time stochastic process $(x(t),z(t))$ to eventually be well approximated by the continuous-time (deterministic) differential equation $\dot{\mathbf{r}}(t) = h(\mathbf{r}(t))$ in $\mathbb{R}^2$. The relevant conditions are: $h$ is Lipschitz continuous, the system is suitably bounded, and that the changes to the state between periods become smaller at an appropriate rate. The final condition holds because $(x(t),z(t))$ are averages across all signals, and a single agent's impact on these averages vanishes as $t$ grows. The  intuition behind the stochastic approximation result is that eventually the martingale terms  average out and the deterministic part $h$ dominates in deciding the system's trajectory. 

Once we know that $(x(t),z(t))$ is eventually well-approximated by our continuous-time differential equation, the two possibilities are convergence and cycling. The final step is to rule out cycles in the solution $\mathbf{r}(t)$ to the differential equation. The second coordinate must converge to $q$ by the law of large numbers. The remaining dynamics are only in one dimension and therefore cannot generate cycles. We conclude $x(t)$ converges almost surely.

\subsection{Steady States}

After establishing that the state of the system $(x(t),z(t))$ must converge, we ask which points are the limits. Any limit reached with positive probability must be an equilibrium point of the continuous-time differential equation $\dot{\mathbf{r}}(t) = h(\mathbf{r}(t))=0.$
Since $z(t) \rightarrow q$ and the first coordinate of $h(x,q)$ is $\phi_{\sigma}(x)-x$, these equilibrium points correspond to fixed points of the inflow accuracy function. In economics, stochastic approximation techniques are often used in systems with a single fixed point (for each strategy), but a complication in our setting is that the inflow accuracy function can have multiple fixed points.

Theorem~\ref{thm:ss_half_stable} shows that a fixed point is reached with positive probability if it is stable on at least one side. The proof treats the cases of one-sided and two-sided stability separately, and both adapt results on generalized  P\'{o}lya  urns.\footnote{Another result on generalized P\'{o}lya  urns shows that a fixed point that is not stable on either side cannot be a steady state. Since the proof is surprisingly involved and we do not require this result, we omit its likely generalization.}

The case of a fixed point $x^*$ that is stable on both sides is easier and relies on a simple idea. Whether the stochastic process reaches $x^*$ with positive probability depends only on its local structure near $x^*$. Two-sided stability implies that we can replace the inflow accuracy function $\phi_{\sigma}(x)$, which can have multiple fixed points, with an alternate function $\widetilde{\phi}_{\sigma}(x)$ that matches $\phi_{\sigma}(x)$ in a neighborhood of $x^*$ but only has a single fixed point. The alternate stochastic process must converge to its sole fixed point $x^*$ with probability one, so the original stochastic process must also converge to $x^*$ with positive probability.

The touchpoint (one-sided stability) case is more subtle. One might expect that the stochastic process of viral accuracy should not converge with positive probability to fixed points of $\phi_{\sigma}(x)$ which are unstable from one side, because  random noise in the  process $x(t)$ can bring it to the unstable side of the fixed point and cause it to drift away from the fixed point subsequently. But careful analysis shows that there is a positive probability event where $x(t)$ converges to the touchpoint $x^*$ while always staying on the stable side, with the noise terms never large enough to move the process over  to the unstable side. This is because these noise terms vanish more quickly than the deterministic part $h(\cdot)$ of the stochastic process pushes $x(t)$ toward $x^*$. The proof extends the techniques of \cite{pemantle1991touchpoints}, which shows a similar result for generalized P{\'o}lya urns.

\subsection{Equilibrium}
\label{sec:equilibrium_intuition}

Our proof sketches so far have described behavior under a fixed strategy, but our ultimate goal is to understand equilibrium behavior. We first describe how a steady-state analysis simplifies agents' inference problems and then sketch the proof of Theorem~\ref{thm:equilibrium_ss_by_lambda}.

The non-trivial part of behavior is how agents form beliefs about the state of nature. This is a complicated process in general: an agent must calculate the conditional probability of their observations in each state over all possible positions $t \in \{1,\hdots,n\}$ and all possible realizations of the stochastic process $(x(t),z(t))$. Fortunately, the situation is considerably improved by the convergence of $x(t)$ to a steady state $x^*$ and $z(t)$ to $q$. Intuitively, when we take $n$ to be large, we can approximate an agent's beliefs by assuming that $x(t)$ is equal to a steady-state value drawn from the distribution $\pi(\cdot \mid \sigma)$ and that $z(t)=q$. Under this approximation, the agent's inference problem becomes more manageable. There remains, however, a major obstacle (which is standard in applications of stochastic approximation tools): there is no closed-form solution for the distribution over steady states $\pi(x^* \mid \sigma)$.

Fully describing equilibria is therefore likely intractable, but we can nevertheless  say a fair amount qualitatively. The key to this is Theorem~\ref{thm:equilibrium_ss_by_lambda}, which says that whether there is a misleading steady state under a limit equilibrium is equivalent to whether there is a misleading steady state under the majority-rule strategy (and the same parameters). Proving this reduction consists primarily of establishing two properties of the majority rule.

Property (1) is that if any strategy (from a broad class which must contain all best responses) sustains a misleading steady state, then the majority rule does too.\footnote{This makes use of the capacity constraint model of endorsements. If the number of signals endorsed depended on the realization of the sampled signals, it seems plausible there could be a misleading steady state under a strategy that sometimes endorses fewer signals than the majority rule but no misleading steady state under the majority rule.} Essentially, the majority rule puts as much weight as possible on social information and as little weight as possible on private information. This is ideal for sustaining misleading steady states: agents will endorse wrong signals when their social information is wrong.

Property (2) is that when there are no misleading steady states, the majority rule $\sigma^{\text{maj}}$ is the best response if the number of agents $n$ in the society is sufficiently large. As described above, in the large $n$ limit we can approximate agents' inferences with a steady-state inference problem. (Formalizing this requires some additional stochastic approximation arguments to bound the speed of convergence to the steady state.) Conditional on being at a steady state with viral accuracy $x^*$, a sampled signal is a binary symmetric signal of precision $\lambda x^* + (1-\lambda)q$. So an agent observes $K$ binary symmetric signals of the same (potentially unknown) precision. We show that in the absence of misleading steady states, all of these precisions must be greater than $q$. The sampled signals are more informative than an agent's private signal, so $\sigma^{\text{maj}}$ is optimal.

Given these properties,  establishing our reduction is straightforward. When $\lambda < \lambda^*,$ so that the majority rule does not induce a misleading steady state,  property (1) implies no strategy that can be a best response induces a misleading steady state. When $\lambda \geq \lambda^*$, suppose there is a limit equilibrium $\sigma^*$ that does not induce a misleading steady state. Then property (2) means $\sigma^*=\sigma^{\text{maj}}$. But since $\sigma^{\text{maj}}$ does have a misleading steady state in this region, this gives a contradiction.

Underlying this argument is the fact, from Theorem \ref{thm:ss_half_stable}, that fixed points stable on at least one side must be steady states. Without this result, we could only conclude that there is a fixed point in the  misleading region under any limit equilibrium when $\lambda \geq \lambda^*$. But the fixed point under the limit equilibrium might be a touchpoint (even for $\lambda>\lambda^*$), so we would not know whether it is actually reached with positive probability and cannot conclude that the equilibrium must have a misleading  steady state.

\section{Concluding Discussion}
\label{sec:conclusion}

We have developed a model of learning from sampled signals where rational agents  selectively endorse signals that they believe  more likely to be true, increasing their popularity. To track the evolution of signal popularities in the society, we combine stochastic approximation tools with analysis of equilibrium behavior. We find that a sampling rule that weighs popularity more heavily can help aggregate information, but can also generate misleading steady states where incorrect signals circulate widely. We conclude with some remarks on the broader applicability of the model.


\textbf{Motivations for Endorsing Signals.} We have assumed that agents want to endorse signals that match the true state. As discussed in Section \ref{sec:equilibrium_intuition}, the key property driving our analysis (and in particular the proof of Theorem~\ref{thm:equilibrium_ss_by_lambda}) is that majority rule is a strict best response whenever there are no misleading steady states.
So our results hold for any utility functions satisfying this property: one simple example is if agents want to endorse signals that are more popular. This also implies our characterization results are robust to small deviations from our benchmark utility function or from symmetric priors, since the strict best response property of majority rule must continue to hold under small enough perturbations of the utility function or prior.\footnote{With an asymmetric prior, a caveat is that our results still characterize limit equilibrium restricting to symmetric strategies. These exist for a positive measure set of virality weights when the asymmetry is not too large, but may not exist for all virality weights.} Nevertheless, sufficiently different preferences could lead to different dynamics. In particular, if agents primarily care about influencing the long-run accuracy of sampled signals in the society or an eventual societal decision, then perhaps it is possible for them to avoid misleading steady states.

\textbf{Beyond Social Media.} We view social-media platforms as the main application of our model, but briefly mention two other potential applications. First, our model could also be interpreted as describing communication norms in offline information sharing. The parameter $\lambda$ would then measure how frequently people communicate their personal experiences or private information relative to passing along others' experiences or information. Second, related models have been used to describe product-adoption dynamics when consumers use simple heuristics \citep{smallwood1979product}. Our techniques suggest a path toward introducing equilibrium behavior into such models.

\begingroup
\singlespacing
\setlength{\bibsep}{0pt}
\setlength{\itemsep}{0pt}
\setlength{\parsep}{0pt}
\setlength{\parskip}{0pt}
\bibliographystyle{ecta}
\bibliography{SharingSignals}

\begin{thebibliography}{29}
\newcommand{\enquote}[1]{``#1''}
\expandafter\ifx\csname natexlab\endcsname\relax\def\natexlab#1{#1}\fi

\bibitem[\protect\citeauthoryear{Acemoglu, Dahleh, Lobel, and
  Ozdaglar}{Acemoglu et~al.}{2011}]{acemoglu2011bayesian}
\textsc{Acemoglu, D., M.~A. Dahleh, I.~Lobel, and A.~Ozdaglar} (2011):
  \enquote{Bayesian learning in social networks,} \emph{Review of Economic
  Studies}, 78, 1201--1236.

\bibitem[\protect\citeauthoryear{Acemoglu, Ozdaglar, and Siderius}{Acemoglu
  et~al.}{2024}]{acemoglu2021misinformation}
\textsc{Acemoglu, D., A.~Ozdaglar, and J.~Siderius} (2024): \enquote{A model of
  online misinformation,} \emph{Review of Economic Studies}, 91, 3117--3150.

\bibitem[\protect\citeauthoryear{Arieli, Babichenko, and Mueller-Frank}{Arieli
  et~al.}{2024}]{arieli2022aggregate}
\textsc{Arieli, I., Y.~Babichenko, and M.~Mueller-Frank} (2024):
  \enquote{Sequential naive learning,} \emph{Working Paper}.

\bibitem[\protect\citeauthoryear{Banerjee and Fudenberg}{Banerjee and
  Fudenberg}{2004}]{banerjee2004word}
\textsc{Banerjee, A. and D.~Fudenberg} (2004): \enquote{Word-of-mouth
  learning,} \emph{Games and Economic Behavior}, 46, 1--22.

\bibitem[\protect\citeauthoryear{Banerjee}{Banerjee}{1992}]{banerjee1992simple}
\textsc{Banerjee, A.~V.} (1992): \enquote{A simple model of herd behavior,}
  \emph{Quarterly Journal of Economics}, 107, 797--817.

\bibitem[\protect\citeauthoryear{Bena{\"\i}m and Weibull}{Bena{\"\i}m and
  Weibull}{2003}]{benaim2003deterministic}
\textsc{Bena{\"\i}m, M. and J.~W. Weibull} (2003): \enquote{Deterministic
  approximation of stochastic evolution in games,} \emph{Econometrica}, 71,
  873--903.

\bibitem[\protect\citeauthoryear{Bikhchandani, Hirshleifer, and
  Welch}{Bikhchandani et~al.}{1992}]{bikhchandani1992theory}
\textsc{Bikhchandani, S., D.~Hirshleifer, and I.~Welch} (1992): \enquote{A
  theory of fads, fashion, custom, and cultural change as informational
  cascades,} \emph{Journal of Political Economy}, 100, 992--1026.

\bibitem[\protect\citeauthoryear{Bloch, Demange, and Kranton}{Bloch
  et~al.}{2018}]{bloch2018rumors}
\textsc{Bloch, F., G.~Demange, and R.~Kranton} (2018): \enquote{Rumors and
  social networks,} \emph{International Economic Review}, 59, 421--448.

\bibitem[\protect\citeauthoryear{Borkar}{Borkar}{2023}]{borkar2009stochastic}
\textsc{Borkar, V.~S.} (2023): \emph{Stochastic Approximation: A Dynamical
  Systems Viewpoint [Second Edition]}, Springer.

\bibitem[\protect\citeauthoryear{Bowen, Dmitriev, and Galperti}{Bowen
  et~al.}{2023}]{bowen2021learning}
\textsc{Bowen, T.~R., D.~Dmitriev, and S.~Galperti} (2023): \enquote{Learning
  from shared news: When abundant information leads to belief polarization,}
  \emph{Quarterly Journal of Economics}, 138, 955--1000.

\bibitem[\protect\citeauthoryear{Buechel, Kl{\"o}{\ss}ner, Meng, and
  Nassar}{Buechel et~al.}{2023}]{buechel2022misinformation}
\textsc{Buechel, B., S.~Kl{\"o}{\ss}ner, F.~Meng, and A.~Nassar} (2023):
  \enquote{Misinformation due to asymmetric information sharing,} \emph{Journal
  of Economic Dynamics and Control}, 150, 104641.

\bibitem[\protect\citeauthoryear{Garg and Fetzer}{Garg and
  Fetzer}{2025}]{garg2025political}
\textsc{Garg, P. and T.~Fetzer} (2025): \enquote{Political expression of
  academics on Twitter,} \emph{Nature Human Behaviour}, 9, 1815--1832.

\bibitem[\protect\citeauthoryear{Hill, Lane, and Sudderth}{Hill
  et~al.}{1980}]{hill1980strong}
\textsc{Hill, B.~M., D.~Lane, and W.~Sudderth} (1980): \enquote{A strong law
  for some generalized urn processes,} \emph{The Annals of Probability},
  214--226.

\bibitem[\protect\citeauthoryear{Hsu, Ajorlou, and Jadbabaie}{Hsu
  et~al.}{2021}]{hsu2021persuasion}
\textsc{Hsu, C.-C., A.~Ajorlou, and A.~Jadbabaie} (2021): \enquote{Persuasion,
  news sharing, and cascades on social networks,} \emph{Working Paper}.

\bibitem[\protect\citeauthoryear{Iijima and Oyama}{Iijima and
  Oyama}{2025}]{iijima2025mean}
\textsc{Iijima, R. and D.~Oyama} (2025): \enquote{Mean-field approximation of
  forward-looking population dynamics,} \emph{Journal of Economic Theory},
  106079.

\bibitem[\protect\citeauthoryear{Jackson, Malladi, and McAdams}{Jackson
  et~al.}{2023}]{jackson2023grapevineWP}
\textsc{Jackson, M.~O., S.~Malladi, and D.~McAdams} (2023): \enquote{Rational
  Disagreement and the Fragility of Social Learning: Identification Failures
  with Noisy Communication,} \emph{Working Paper}.

\bibitem[\protect\citeauthoryear{Kabos and Meyer}{Kabos and
  Meyer}{2021}]{kabos2021welfare}
\textsc{Kabos, E. and M.~Meyer} (2021): \enquote{A welfare analysis of a
  steady-state model of observational learning,} \emph{Working Paper}.

\bibitem[\protect\citeauthoryear{Kranton and McAdams}{Kranton and
  McAdams}{2024}]{kranton2020social}
\textsc{Kranton, R. and D.~McAdams} (2024): \enquote{Social connectedness and
  information markets,} \emph{American Economic Journal: Microeconomics}, 16,
  33--62.

\bibitem[\protect\citeauthoryear{Kwak, Lee, Park, and Moon}{Kwak
  et~al.}{2010}]{kwak2010twitter}
\textsc{Kwak, H., C.~Lee, H.~Park, and S.~Moon} (2010): \enquote{What is
  Twitter, a social network or a news media?} in \emph{Proceedings of the 19th
  international conference on World wide web}, 591--600.

\bibitem[\protect\citeauthoryear{L{\'e}vy, P{\k{e}}ski, and Vieille}{L{\'e}vy
  et~al.}{2024}]{levy2022stationary}
\textsc{L{\'e}vy, R., M.~P{\k{e}}ski, and N.~Vieille} (2024):
  \enquote{Stationary social learning in a changing environment,}
  \emph{Econometrica}, 92, 1939--1966.

\bibitem[\protect\citeauthoryear{Merlino, Pin, and Tabasso}{Merlino
  et~al.}{2023}]{merlino2022debunking}
\textsc{Merlino, L.~P., P.~Pin, and N.~Tabasso} (2023): \enquote{Debunking
  rumors in networks,} \emph{American Economic Journal: Microeconomics}, 15,
  467--496.

\bibitem[\protect\citeauthoryear{Munroe}{Munroe}{2009}]{munroe_2009}
\textsc{Munroe, R.} (2009): \enquote{Reddit's new comment sorting system,}
  \emph{The Reddit Blog}.

\bibitem[\protect\citeauthoryear{Papanastasiou}{Papanastasiou}{2020}]{papanastasiou2020fake}
\textsc{Papanastasiou, Y.} (2020): \enquote{Fake news propagation and
  detection: A sequential model,} \emph{Management Science}, 66, 1826--1846.

\bibitem[\protect\citeauthoryear{Pemantle}{Pemantle}{1991}]{pemantle1991touchpoints}
\textsc{Pemantle, R.} (1991): \enquote{When are touchpoints limits for
  generalized P{\'o}lya urns?} \emph{Proceedings of the American Mathematical
  Society}, 113, 235--243.

\bibitem[\protect\citeauthoryear{Pennycook, Epstein, Mosleh, Arechar, Eckles,
  and Rand}{Pennycook et~al.}{2021}]{pennycook2021shifting}
\textsc{Pennycook, G., Z.~Epstein, M.~Mosleh, A.~A. Arechar, D.~Eckles, and
  D.~G. Rand} (2021): \enquote{Shifting attention to accuracy can reduce
  misinformation online,} \emph{Nature}, 592, 590--595.

\bibitem[\protect\citeauthoryear{Pennycook, McPhetres, Zhang, Lu, and
  Rand}{Pennycook et~al.}{2020}]{pennycook2020fighting}
\textsc{Pennycook, G., J.~McPhetres, Y.~Zhang, J.~G. Lu, and D.~G. Rand}
  (2020): \enquote{Fighting COVID-19 misinformation on social media:
  Experimental evidence for a scalable accuracy-nudge intervention,}
  \emph{Psychological Science}, 31, 770--780.

\bibitem[\protect\citeauthoryear{Peres, R{\'a}cz, Sly, and Stuhl}{Peres
  et~al.}{2020}]{peres2020fragile}
\textsc{Peres, Y., M.~Z. R{\'a}cz, A.~Sly, and I.~Stuhl} (2020): \enquote{How
  fragile are information cascades?} \emph{The Annals of Applied Probability},
  30, 2796--2814.

\bibitem[\protect\citeauthoryear{Sgroi}{Sgroi}{2002}]{sgroi2002optimizing}
\textsc{Sgroi, D.} (2002): \enquote{Optimizing information in the herd: Guinea
  pigs, profits, and welfare,} \emph{Games and Economic Behavior}, 39,
  137--166.

\bibitem[\protect\citeauthoryear{Smallwood and Conlisk}{Smallwood and
  Conlisk}{1979}]{smallwood1979product}
\textsc{Smallwood, D.~E. and J.~Conlisk} (1979): \enquote{Product quality in
  markets where consumers are imperfectly informed,} \emph{Quarterly Journal of
  Economics}, 93, 1--23.

\end{thebibliography}
\endgroup
\appendix
\section{Selected Proofs}

\begin{proof}[Proof of Proposition \ref{prop:existence}]
Fixing $n$ gives a symmetric finite game, and agents have a state-symmetric best response whenever all other agents use a state-symmetric strategy $\sigma$. So by Kakutani's fixed point theorem, there exists a symmetric BNE.

Now fix $q,K,C, \lambda, n_0$ and $\psi$. For each $n$, there exists a symmetric BNE $\sigma^{(n)}$. Because the space of strategies $\sigma$ is compact, we can choose a convergent subsequence. The limit of this subsequence is a limit equilibrium.
\end{proof}

\begin{proof}[Proof of Proposition \ref{prop:conv}]
The proof applies a convergence result from stochastic approximation
from Chapter 2 of \citet{borkar2009stochastic}. Suppose agents use
strategy $\sigma$. Without loss of generality, we can condition on
$\omega=1$.

Let 
$Y=\{\mathbf{y}=(x,z) \in [0,1]^2\}.$
Let $S_0:=\sum_{m=1}^{n_0}\rho_0(s_0^m)$ be the initial total seed score and let $N_0^+$ be the number of seed signals with realization $1$ (conditioning on $\omega=1$). After $t$ agents have acted, total popularity score is $S_t=S_0+(C+1)t$ and the pool contains $N_t=n_0+t$ signals. For each $t$, define $\mathbf{y}(t)\in Y$ by
\[
x(t)=\frac{\sum_{s\in\mathcal P_t:s=1}\rho_t(s)}{S_t}
\qquad\text{and}\qquad
z(t)=\frac{N_0^++\sum_{i=1}^t\mathbf 1\{s_i=1\}}{N_t}.
\]
The first entry of $\mathbf{y}(t)=(x(t),z(t))$ measures the total popularity share of signals with realization $1$. The second entry measures the fraction of signals in the current pool which have realization $1$.

Let $\mathbf{\xi}(t+1)$ be the random variable with first entry equal to the fraction of the $C+1$ popularity points added in period $t+1$ that accrue to positive signals (including the new signal's initial popularity point)  and second entry equal to a binary indicator for whether $s_{t+1}=1$. The exact recursions are
\[
x(t+1)=x(t)+\frac{C+1}{S_t+C+1}\bigl(\xi_1(t+1)-x(t)\bigr) \text{ and }
z(t+1)=z(t)+\frac{1}{N_t+1}\bigl(\xi_2(t+1)-z(t)\bigr).
\]
Since $(C+1)/(S_t+C+1)=1/(t+1)+O(t^{-2})$ and $1/(N_t+1)=1/(t+1)+O(t^{-2})$, the step size of the stochastic process is $\frac{1}{t+1}$ plus a summable perturbation.

Following the notation of \citet{borkar2009stochastic}, we write
\[
h(\mathbf{y}(t))=\mathbb{E}\left[\mathbf{\xi}(t+1)\,\middle|\,\mathbf{y}(t)\right]-\mathbf{y}(t)\text{ and }M(t+1)={\mathbf{\xi}(t+1)}-\mathbb{E}\left[{\mathbf{\xi}(t+1)}\,\middle|\,\mathbf{y}(t)\right].
\]
We can then decompose the change in the stochastic process $\mathbf{y}(t)$ into $h(\mathbf{y}(t))$, which depends deterministically on $\mathbf{y}(t)$, a martingale difference term $M(t+1),$ and a summable initial-condition perturbation.  We would like to apply Theorem 2.1 of Chapter 2 of \citet{borkar2009stochastic},
which requires the following assumptions: 
\begin{itemize}
\item[\textbf{(A1)}] $h$ is Lipschitz continuous. 
\item[\textbf{(A2)}] $\sum_{t}\frac{1}{t+1}=\infty$ while $\sum_{t}\frac{1}{(t+1)^{2}}<\infty$. 
\item[\textbf{(A3)}] $\mathbb{E}\left[M(t+1)\mid\mathbf{y}(t)\right]=0$ and $\{M(t)\}$
are square-integrable with $\mathbb{E}\left[\|M(t+1)\|^{2}\mid\mathbf{y}(t)\right]\leq \kappa (1+\|\mathbf{y}(t)\|^{2})$ a.s. for all $t$ and some $\kappa>0$. 
\item[\textbf{(A4)}] $\|\mathbf{y}(t)\|$ remains bounded a.s. 
\end{itemize}
Properties \textbf{(A2)} and \textbf{(A4)} are immediate. For \textbf{(A3)},
the martingale property holds by the construction of $M(t)$ and the
remaining properties hold because $M(t)$ is bounded (independent
of $t$).

Property \textbf{(A1)} remains. Since $-\mathbf{y}(t)$ is Lipschitz
continuous in $\mathbf{y}(t)$, we must check that $\mathbb{E}\left[\mathbf{\xi}(t+1)\mid\mathbf{y}(t)\right]$
is Lipschitz continuous in $\mathbf{y}(t)$.

Write $\sigma_{1}(s,k)$ for the expected number of ``1'' signals that strategy $\sigma$
endorses, when the agent's private signal
is $s$ and $k$ signals in the sample are ``1''.  Write $P_{k}(x,z,\lambda)$ for $\mathbb{P}[\text{Binom}(K,\lambda x+(1-\lambda)z)=k]$,
where $\text{Binom}(n,p)$ is the binomial distribution with $n$
trials and success probability $p$.

For every $t$, the conditional expectation of the random variable $\xi_{1}(t+1)$ equals
\[
\frac{1}{C+1}\left(q+\sum_{0\leq k\leq K}P_{k}(x(t),z(t),\lambda)(q\sigma_{1}(1,k)+(1-q)\sigma_{1}(-1,k))\right).
\]
Indeed, popularity-based sampling selects a positive signal with probability exactly $x(t)$, while uniform sampling selects a positive signal with probability exactly $z(t)$. The finite seed pool therefore changes the recursion only through the initial condition and the summable step-size perturbations described above, which do not affect the asymptotic behavior of the process.

This conditional expectation is a polynomial of degree at most $K$ in $x(t)$
and $z(t)$, and therefore is Lipschitz
continuous on $Y$. The conditional expectation of $\xi_{2}(t+1)$ is constant, and therefore Lipschitz continuous in $Y$ as well.

For $\mathbf{r} = (r_1, r_2)$, we can define a continuous-time differential equation by letting 
\begin{equation}
\dot{\mathbf{r}}(t)=h(\mathbf{r}(t)),t\geq0.\label{eq:diffeq}
\end{equation}
An invariant set $A$ of (\ref{eq:diffeq}) is a set such that $\mathbf{r}(0)\in A$
implies $\mathbf{r}(t)\in A$ for all $t\geq0$. An invariant set
is internally chain transitive if for any $\mathbf{r},\mathbf{r}'\in A$,
$\epsilon>0$ and $T>0$, there exists $\mathbf{r}^{0}=\mathbf{r},\mathbf{r}^{1},\hdots,\mathbf{r}^{n}=\mathbf{r}' \in A$
such that the trajectory of $\mathbf{r}(t)$ starting from $\mathbf{r}(0)=\mathbf{r}^{i}$
meets with an $\epsilon$-neighborhood of $\mathbf{r}^{i+1}$ at some
time $t\geq T$.

By Theorem 2.1 of Chapter 2 of \citet{borkar2009stochastic}, the stochastic
process $\mathbf{y}(t)$ converges to an internally chain transitive
invariant set of equation (\ref{eq:diffeq}). Because $z(t)=(N_0^++\operatorname{Binom}(t,q))/(n_0+t)\rightarrow q$ almost surely, any internally chain transitive invariant set that $\mathbf{y}(t)$ converges to must be contained in $[0,1]\times\{q\}$. 
We claim that at any $\mathbf{r}$ contained in an internally chain
transitive invariant set $A$, we must have $\frac{dr_1(t)}{dt}=0$
when $\mathbf{r}(t)=\mathbf{r}$. Suppose an internally chain transitive invariant set $A$ of (\ref{eq:diffeq})
contains a point $\mathbf{r}$ at which $
\frac{dr_1(t)}{dt}>0.
$
Letting $\mathbf{r}(0)=\mathbf{r}$, we can choose some $t'>0$ such
that $r_1(t')>r_1(0)$ and $
\frac{dr_1(t)}{dt}>0$
at $t=t'$. Now let $\mathbf{r}'=\mathbf{r}(t')$. We have $\mathbf{r}'\in A$
by invariance.

If we consider the trajectory $\mathbf{r}(t)$ beginning with $\mathbf{r}(0)=\mathbf{r}'$,
we cannot have $r_1(t)$ fall below $r_1(0)$ since $\dot{r}_1(0)>0$ and since $A\subseteq[0,1]\times\{q\}$ implies $r_2(t)\equiv q$ along trajectories starting in $A$, 
the sign of $\dot{r}_1(t)$ only depends on $t$ through $r_1(t)$. For $\epsilon>0$ sufficiently small this implies that the trajectory
$\mathbf{r}(t)$ beginning with $\mathbf{r}(0)=\mathbf{r}'$ never
enters an $\epsilon$-neighborhood of $\mathbf{r}$. This contradicts
the assumption that $A$ is internally chain transitive.

If $A$ contains a point $\mathbf{r}$ at which 
$
\frac{dr_1(t)}{dt}<0,$
we obtain a contradiction similarly. This shows  that 
$
\frac{dr_1(t)}{dt}=0
$
at all $\mathbf{r}$ contained in an internally chain transitive invariant
set.

Values of $r_1(t)$ for which $\frac{dr_1(t)}{dt}=0$
correspond to the roots of a  polynomial of degree at most $K$ (which is not identically zero since $\phi_\sigma(0)-0\ge q/(1+C)>0$), and therefore
there are at most finitely many such values. Calling the set of such
values $X^{*}(1)$, since these equilibria are isolated, any internally chain transitive set contained in them must be a singleton; hence $x(t)$ converges almost
surely to some $x^{*}\in X^{*}(1)$.
\end{proof}

\begin{proof}[Proof of Theorem \ref{thm:ss_half_stable}]

We say a fixed point $x^*$ of $\phi_{\sigma}(x)$ is a \emph{touchpoint} if there exists $\epsilon>0$ such that $\phi_{\sigma}(x)<x$ for all $x \neq x^*$ in $(x^*-\epsilon,x^*+\epsilon)$ or $\phi_{\sigma}(x)>x$ for all $x \neq x^*$ in $(x^*-\epsilon,x^*+\epsilon)$.

Case (i): $x^*$ is a touchpoint.

The seed signals only change the initial condition. Conditional on any seed realization with finite positive total score, every  $(x(t),z(t))$ that has positive probability under the original process still has positive probability after some feasible realization of seed signals. Hence, before applying the local arguments below, we may condition on a finite history that brings $x(t)$ into the relevant one-sided neighborhood of $x^*$.

The proof extends the arguments from Theorem 1 of \cite{pemantle1991touchpoints}. Suppose that $\phi_{\sigma}(x)>x$ for all $x\neq x^*$ in $(x^*-\epsilon,x^*+\epsilon)$. The other case is the same.

Fix $v \in (0,\frac12)$ and $v_1 \in (v,\frac12)$. Choose $\gamma>1$ such that $\gamma v_1 < \frac12$. Define $g(r) = re^{(1-r)/(2v_1\gamma)}$. Then $g(1)=1$ and $g'(1) = 1-1/(2v_1\gamma) < 0$, so we can choose $r_0 \in (0,1)$ with $g(r_0)>1$. Also define $T(n) = e^{n(1-r_0)/(\gamma v_1)}.$ Then $g(r_0)^n = r_0^n T(n)^{1/2} > 1.$

Choose $N$ such that $\gamma r_0^N < \epsilon$. Since $T(1)^{1/2}r_0 = g(r_0)>1$, we can find $\chi>0$ such that $T(1)^{1/2-\chi}r_0>1$ and therefore $T(n)^{1/2-\chi}r_0^{n} \rightarrow \infty$. Let $$\tau_N = \inf\{j>T(N): x(j-1) < x^*-r_0^N < x(j) \} \text{ and }\tau_{n+1} = \inf\{j \geq \tau_n: x(j) > x^*-r_0^{n+1}\}$$
for each $n \geq N$ (using the convention that $\tau_N=\infty$ if the inequalities are not satisfied for any $j$). So $\tau_n$ is the first time the stochastic process crosses $x^*-r_0^n$.

We will show the probability that $\tau_n > T(n)$ for all $n \geq N$ is positive. Since we can assume (by the argument in case (ii) below) that $x(t) \rightarrow x^*$ from below whenever $\tau_n > T(n)$ for all $n \geq N$, this will complete the case.

Let $z(t)=(N_0^++\sum_{i=1}^t\mathbf 1\{s_i=1\})/(n_0+t)$ be the fraction of signals in the current pool with realization $1$. 
We first bound the probability that $z(t)$ is far from $q$. Define a function $$\phi_{\sigma,z}(x):=\frac{q+\sum_{k=0}^{K}\mathbb{P}[\text{Binom}(K,\lambda x+(1-\lambda)z)=k]\cdot[q\cdot\mathbb{E}[\sigma(1,k)]+(1-q)\cdot\mathbb{E}[\sigma(-1,k)]]}{1+C}$$
to be the inflow accuracy when a fraction $z$ of signals in the current pool have value $1$.

We begin by defining an event $\mathscr{C}$ under which the number of private signals with positive realization is close to $q$ for $t$ sufficiently large. Let $\mathscr{C}_1$ be the event that for all $n \geq N$ and for all $t \geq T(n)$,
$\phi_{\sigma,z(t)}(x) -x \geq - 1/T(n)^{1/2-\chi}$
on  $(x^*-\epsilon,x^*+\epsilon)$. Because $\phi_{\sigma,z}(x) -x$ is polynomial (in $z$ and $x$) and is non-negative on this interval when $z=q$, this holds for $|z(t)-q|<B/T(n)^{1/2-\chi}$ for some $B>0$. 

Suppose event $\mathscr{C}_1$ holds and $\tau_n > T(n)$. Then we have
\begin{align*}
    \sum_{t=\tau_n}^{j} h_1(\mathbf{y}(t))/(t+1)
    & = \sum_{t=\tau_n}^{j} (\phi_{\sigma,z(t)}(x(t)) - x(t))/(t+1)
    \\ & \geq - \sum_{m = n}^{\infty} \frac{1}{T(m)^{1/2-\chi}} \sum_{T(m) \leq t < T(m+1)} \frac{1}{t+1} \text{ by the definition of }\mathscr{C}_1
    \\ & \geq -\sum_{m=n}^{\infty} \frac{\log(\lceil T(m+1)\rceil)-\log(\lceil T(m)\rceil)}{T(m)^{1/2-\chi}} 
    \\ & \geq -\sum_{m=n}^{\infty} \left(\frac{1-r_0}{\gamma v_1}+1\right) \cdot e^{-m(1/2-\chi)(1-r_0)/(\gamma v_1)}
    \\ & = - \left(\frac{1-r_0}{\gamma v_1}+1\right)\cdot \frac{e^{-n(1/2-\chi)(1-r_0)/(\gamma v_1)}}{1-e^{-(1/2-\chi)(1-r_0)/(\gamma v_1)}}. \stepcounter{equation}\tag{\theequation}\label{eq:ineqsumh}
\end{align*}
We define $\mu =\left(\frac{1-r_0}{\gamma v_1}+1\right)\cdot \frac{1}{1-e^{-(1/2-\chi)(1-r_0)/(\gamma v_1)}}$, so that the right-hand side is $-\mu T(n)^{-(1/2-\chi)}.$

Let $\mathscr{C}_2$ be the event that for all $n \geq N$ and for all $t \geq T(n)$,
\begin{equation}\label{eq:eventc2}\phi_{\sigma,z(t)}(x) -x  \leq v\gamma r_0^n\end{equation}
for all $x \in [x^*-\gamma r_0^n,x^*].$  Because $\phi_{\sigma,z}(x) -x$ is polynomial (in $z$ and $x$) and $\frac{d(\phi_{\sigma,q}(x) -x)}{dx}(x^*)=0,$
we can choose $B'$ such that for all $n \geq N$ we have $\phi_{\sigma,z}(x) -x \leq v\gamma r_0^n$
for $x \in [x^*-\gamma r_0^n,x^*]$ whenever $|z-q|< B' r_0^n$ (since we can bound the entries of the Hessian of $\phi_{\sigma,z}(x) -x$ above by a constant on the rectangle $[x^*-\gamma r_0^N,x^*] \times [q-r_0^N,q+r_0^N]$). Because $T(n)^{1/2-\chi} r_0^n > 1$, this holds for $|z(t)-q|<B'/T(n)^{1/2-\chi}$ for some $B'>0$.

Define the event $\mathscr{C} = \mathscr{C}_1 \cap \mathscr{C}_2$ to be the intersection of these two events. The event $\mathscr{C}$ holds when  $|z(t)-q|<\min(B,B')/T(n)^{1/2-\chi}$ for all $n \geq N$ and all $t \geq T(n)$. Since $z(t)=(N_0^++\operatorname{Binom}(t,q))/(n_0+t)$, the contribution from the seed signals is $O(1/t)$. Thus, after increasing constants and taking $N$ large, the Chernoff bound implies that the probability of $|z(t)-q| > \min(B,B')/T(n)^{1/2-\chi}$ is at most $2e^{-\min(B,B')^2t^{2\chi}/(2q^2)}$.   So the probability that the event $\mathscr{C}$ does not hold for some $n \geq N$ and all $t \geq T(n)$ is at most
$$2\sum_{n=N}^{\infty} \sum_{t=T(n)}^{\infty} 2e^{-\min(B,B')^2t^{2\chi}/(2q^2)}.$$ For $N$ sufficiently large, this sum is approximately $$\sum_{n=N}^{\infty} \frac{1}{\chi}\left(\frac{\min(B,B')^2}{2q^2}\right)^{-\frac{1}{2\chi}} \Gamma\left(\frac{1}{2\chi}, T(n)^{2\chi}\min(B,B')^2/(2q^2)\right) $$
where $\Gamma(s,x)$ is the incomplete Gamma function. Since $\Gamma(s,x)/(x^{s-1}e^{-x}) \rightarrow 1$ as $x \rightarrow \infty$, this sum converges to zero as $N \rightarrow \infty$. Increasing $N$ if necessary, we can conclude that the event $\mathscr{C}$ has positive probability. For the remainder of case (i), we condition on this event $\mathscr{C}$.

Now let $\mathscr{B}$ be the event $\{\inf_{j>\tau_n} x(j) \geq x^*-\gamma r_0^n\}$. We will bound the probability of this event conditional  on $\tau_n > T(n)$. Let $Z_{m,n} = \sum_{t=m}^{n-1} M(t+1)$ be the sum of the martingale parts of the stochastic process. (Here $M(t+1)$ denotes the scaled martingale increment $(\xi_1(t+1)-\mathbb{E}[\xi_1(t+1)\mid\mathbf{y}(t)])/(t+1)$, so that $x(t+1)=x(t)+h_1(\mathbf{y}(t))/(t+1)+M(t+1)$). Because the scaled martingale increments satisfy $|M(t+1)| \le B_0/(t+1)$ for some constant $B_0$ depending only on $C$ and the seed pool, we have \begin{equation}
    \label{eq:l2bound}
\mathbb{E}[Z_{m,\infty}^2]=\sum_{t=m}^{\infty} \mathbb{E}[M(t+1)^2] \leq \sum_{t=m}^{\infty}\left(\frac{B_0}{t+1}\right)^2\leq 
\frac{B_0^2}{m}.\end{equation} 

We have:
\begin{align*}
    \mathbb{P}\left[\mathscr{B}^c\,\middle|\, \tau_n > T(n)\right] & = \mathbb{P}\left[\inf_{j>\tau_n} x(j) < x^*-\gamma r_0^n \,\middle|\, \tau_n > T(n)\right]
    \\ & \leq \mathbb{P}\left[\inf_{j>\tau_n} Z_{\tau_n,j} < -(\gamma-1) r_0^n+\mu T(n)^{-(1/2-\chi)}\, \middle|\, \tau_n > T(n)\right] \text{ by equation }(\ref{eq:ineqsumh})
    \\& \leq \mathbb{E}\left[Z_{\tau_n,\infty}^2 \, \middle|\, \tau_n > T(n)\right]/((\gamma-1) r_0^n-\mu T(n)^{-(1/2-\chi)})^2 \text{ by Doob's $L^2$ max inequality}
    \\& \leq (B_0)^2e^{-n(1-r_0)/(v_1\gamma)}((\gamma-1) r_0^n-\mu T(n)^{-(1/2-\chi)})^{-2} \text{ by  (\ref{eq:l2bound}) and   definition of } T(n).
\end{align*}
(We could use Doob's $L^2$ max inequality since $\{Z_{\tau_n,j}\}_{j\geq\tau_n}$ is a square-integrable martingale with $Z_{\tau_n,j}\to Z_{\tau_n,\infty}$ a.s.)
Recall that $T(n)^{1/2-\chi}r_0^{n} \rightarrow \infty$, so for $n$ sufficiently large $$(\gamma-1) r_0^n-\mu T(n)^{-(1/2-\chi)} \geq \frac{\gamma-1}{2} r_0^n.$$
We conclude that
$\mathbb{P}\left[\mathscr{B}^c\,\middle|\, \tau_n > T(n)\right] \leq (B_0)^2\left(\frac{\gamma-1}{2}\right)^{-2}g(r_0)^{-2n}.$
This bounds the conditional probability of the event $\mathscr{B}$ not holding.

When the event $\mathscr{B}$ does hold and $\tau_n>T(n)$,
\begin{align*}
    \sum_{\substack{T(n) < t <T(n+1)\\ x(t) < x^*}} h_1(\mathbf{y}(t))/(t+1) &= \sum_{\substack{T(n) < t <T(n+1)\\ x(t) < x^*}} (\phi_{\sigma,z(t)}(x(t))-x(t))/(t+1)
    \\ & \leq  \sum_{\substack{T(n) < t <T(n+1)\\ x(t) < x^*}}
    v\gamma r_0^n/(t+1) \text{ by equation (\ref{eq:eventc2})}
    \\ & \leq (\log \lceil T(n+1) \rceil-\log \lceil T(n) \rceil)(v\gamma r_0^n)  \text{ by the harmonic series partial sums}
    \\ & \leq (v\gamma r_0^n)((1-r_0)/(\gamma v_1) + 1/T(n))
    \\ & = (v/v_1)(r_0^n-r_0^{n+1})+v\gamma r_0^n/T(n).
 \end{align*}

Now suppose $\mathscr{B}$ holds and $\tau_n>T(n)$ but $\tau_{n+1} \leq T(n+1)$. Then \begin{align*}
Z_{\tau_n,\tau_{n+1}}&=x(\tau_{n+1})-x(\tau_n)-\sum_{t=\tau_n}^{\tau_{n+1}-1} h_1(\mathbf{y}(t))/(t+1) 
\\ & \geq x(\tau_{n+1})-x(\tau_n)-\sum_{\substack{T(n) < t <T(n+1)\\ x(t) < x^*}} h_1(\mathbf{y}(t))/(t+1)
\\ & \geq r_0^n-r_0^{n+1} - \xi_n -(v/v_1)(r_0^n-r_0^{n+1})-v\gamma r_0^n/T(n) \text{ by the inequality above and definition of }\tau_n
\\ & = r_0^n(1-r_0)(1-v/v_1)-\xi_n - v\gamma r_0^n/T(n),
\end{align*}
where $\xi_n$ is an error term since $x(\tau_n)$ may be larger than $x^*-r_0^n$ and $\widetilde{\xi}_n =\xi_n + v\gamma r_0^n/T(n) $. Since the error term $\xi_n$ is at most $1/T(n)$ and so is lower order than $r_0^n$, we have as $n\rightarrow \infty$
\begin{equation}\label{eq:xismall}\frac{r_0^n(1-r_0)(1-v/v_1)-\widetilde{\xi}_n }{r_0^n(1-r_0)(1-v/v_1)}\rightarrow 1. \end{equation}

Combining our bounds, we have:
\begin{align*}
 & \mathbb{P}[\tau_{n+1}\leq T(n+1)\,|\,\tau_{n}>T(n)]\\
\le & \mathbb{P}\left[\mathscr{B}^{c}\ |\,\tau_{n}>T(n)\right]+\mathbb{P}\left[\mathscr{B},\sup_{j\ge\tau_{n}}Z_{\tau_{n},j}\geq r_{0}^{n}(1-r_{0})(1-v/v_{1})-\widetilde{\xi}_{n}\ |\,\tau_{n}>T(n)\right]\\
\le & (B_0)^{2}\left(\frac{\gamma-1}{2}\right)^{-2}g(r_{0})^{-2n}+\frac{\mathbb{E}[Z_{\tau_{n},\infty}^{2}\,|\,\tau_{n}>T(n)]}{(r_{0}^{n}(1-r_{0})(1-v/v_{1})-\widetilde{\xi}_{n})^{2}}\text{ by Doob's \ensuremath{L^{2}} max inequality}\\
\le & (B_0)^{2}\left(\frac{\gamma-1}{2}\right)^{-2}g(r_{0})^{-2n}+\frac{(B_0)^{2}T(n)^{-1}}{(r_{0}^{n}(1-r_{0})(1-v/v_{1})-\widetilde{\xi}_{n})^{2}}\text{ \text{ by inequality (\ref{eq:l2bound})}}\\
\le & (B_0)^{2}\left(\frac{\gamma-1}{2}\right)^{-2}g(r_{0})^{-2n}+(B_0)^{2}((1-r_{0})(1-v/v_{1}))^{-2}g(r_{0})^{-2n}\cdot\left(\frac{r_{0}^{n}(1-r_{0})(1-v/v_{1})}{r_{0}^{n}(1-r_{0})(1-v/v_{1})-\widetilde{\xi}_{n}}\right)^{2}.
\end{align*}

We claim that the sum of these probabilities converges. The sum of the first terms converges because $g(r_0)>1$. For the second term, recall that the fraction $\frac{r_0^n(1-r_0)(1-v/v_1)}{r_0^n(1-r_0)(1-v/v_1)-\widetilde{\xi}_n }$ converges to $1$. So the sum of the second terms also converges because $g(r_0)>1$. We have
$$\mathbb{P}[\tau_n > T(n) \text{ for all }n \geq N] = \mathbb{P}[\tau_N>T(N)] \prod_{n=N}^{\infty} (1-\mathbb{P}[\tau_{n+1} \leq T(n+1)\,|\, \tau_n>T(n)]).$$
On the right-hand side, each factor in the product is positive and $\sum_{n=N}^{\infty} \mathbb{P}[\tau_{n+1} \leq T(n+1)\,|\, \tau_n>T(n)]$ is finite. By a standard result on infinite products, this implies the product is positive. So the probability that $\tau_n>T(n)$ for all $n \geq N$ is positive, which implies that the probability $\pi(x^*|\sigma)$ of converging to $x^*$ is positive.

Case (ii): There exists $\epsilon>0$ such that $\phi_{\sigma}(x) > x$ for all $x \in (x^*-\epsilon,x^*)$ and $\phi_{\sigma}(x) < x$ for all $x \in (x^*,x^*+\epsilon)$.

Our argument is based on the related result for generalized P{\'o}lya urns  from \cite*{hill1980strong}. We begin with a lemma, which says that suitably changing a stochastic process away from a neighborhood of a fixed point does not affect whether we converge to that fixed point with positive probability:

\begin{lem}\label{lem:positiveprobconv}
Suppose 
$\widetilde{\mathbf{y}}(t+1)=\widetilde{\mathbf{y}}(t)+\frac{1}{t+1}\left( \widetilde{\mathbf{\xi}}(t+1)-\widetilde{\mathbf{y}}(t) \right)$, where the conditionally i.i.d. random variables $\widetilde{\mathbf{\xi}}(t+1)$ have the same conditional distribution as ${\mathbf{\xi}}(t+1)$ in a neighborhood $U$ of $(x^*,q)$, have the same support as ${\mathbf{\xi}}(t+1)$ for all $(x,z) \in (0,1)^2$, and have expectations $\mathbb{E}[\widetilde{\mathbf{\xi}}(t+1)]$ that are Lipschitz continuous in $(x,z)$. Then $x(t)$ converges to $x^*$ with positive probability if and only if $\widetilde{x}(t) = \widetilde{\mathbf{y}}_1(t)$ does.
\end{lem}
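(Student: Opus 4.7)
My plan is to prove the lemma by a coupling argument that exploits the fact that the two processes have identical transition kernels inside $U$. Since the hypotheses on $\widetilde{\mathbf{y}}$ are symmetric to those on $\mathbf{y}$ (same support everywhere, same conditional distribution inside $U$), it suffices to prove one direction: assuming $x(t) \to x^*$ with positive probability, I will show $\widetilde{x}(t) \to x^*$ with positive probability.

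First, observe that the second coordinate $z(t)$ is the empirical mean of i.i.d.\ Bernoulli$(q)$ random variables (private signals are exogenous and independent of the sharing dynamics), so $z(t) \to q$ almost surely by the strong law of large numbers. On the event $\{x(t) \to x^*\}$, which has probability $\pi(x^* \mid \sigma) > 0$, we therefore have $\mathbf{y}(t) \to (x^*, q)$ almost surely, so $\mathbf{y}(t) \in U$ for all large enough $t$. By continuity of probability, there exists a deterministic $T_0$ such that
\[
\mathbb{P}\bigl[x(t) \to x^*,\; \mathbf{y}(t) \in U \text{ for all } t \geq T_0\bigr] > 0.
\]
At time $T_0$, $\mathbf{y}(T_0)$ takes values in a finite set (its coordinates are rationals with denominator determined by $T_0$), so by pigeonhole there is a specific state $\mathbf{y}_0 \in U$ with
\[
p^* := \mathbb{P}\bigl[\mathbf{y}(T_0) = \mathbf{y}_0,\; x(t) \to x^*,\; \mathbf{y}(t) \in U \text{ for all } t \geq T_0\bigr] > 0.
\]

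Next, I will transfer $p^*$ to the modified process. The same-support hypothesis implies that every finite trajectory of transitions that has positive probability under $\mathbf{y}$ also has positive probability under $\widetilde{\mathbf{y}}$ (possibly with a different value, but nonzero). Since some trajectory leading to $\mathbf{y}_0$ at time $T_0$ has positive probability under $\mathbf{y}$, the same trajectory has positive probability under $\widetilde{\mathbf{y}}$, giving $\mathbb{P}[\widetilde{\mathbf{y}}(T_0) = \mathbf{y}_0] > 0$. Now couple the two processes after time $T_0$ starting from the common state $\mathbf{y}_0 \in U$: since the transition distributions of $\xi$ and $\widetilde{\xi}$ agree on $U$, one can realize $\mathbf{y}$ and $\widetilde{\mathbf{y}}$ on a common probability space so that they are identical until the first time either exits $U$. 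On the event that $\mathbf{y}$ stays in $U$ forever after $T_0$ and $x(t) \to x^*$ (which has conditional probability at least $p^*/\mathbb{P}[\mathbf{y}(T_0)=\mathbf{y}_0] > 0$ given $\mathbf{y}(T_0) = \mathbf{y}_0$), the coupled $\widetilde{\mathbf{y}}$ trajectory equals the $\mathbf{y}$ trajectory, so $\widetilde{x}(t) \to x^*$ as well. Combining these gives $\mathbb{P}[\widetilde{x}(t) \to x^*] > 0$, completing the direction.

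The step I expect to be most delicate is verifying the coupling rigorously in the time-inhomogeneous setting: the step sizes $1/(t+1)$ change with $t$, so I must be careful that conditioning on $(\mathbf{y}(T_0), T_0)$ gives a well-defined Markov kernel for the future evolution and that the agreement of $\xi$ and $\widetilde{\xi}$ distributions on $U$ translates into agreement of the one-step transition laws of $\mathbf{y}$ and $\widetilde{\mathbf{y}}$ on $U$ at every time $t \geq T_0$. The Lipschitz-continuity hypothesis on $\mathbb{E}[\widetilde{\xi}(t+1)]$ plays no direct role in this coupling argument; its purpose is to guarantee that the modified process fits the stochastic approximation hypotheses used elsewhere (e.g., so that Proposition~\ref{prop:conv}'s machinery applies to $\widetilde{\mathbf{y}}$), which is why it is included in the lemma's statement.
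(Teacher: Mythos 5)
Your argument is correct and is essentially the paper's own proof: the paper likewise reduces the claim to the existence of a time $T$ and a positively-reachable state $(x(T),z(T))$ from which, with positive probability, $x(t)\to x^*$ while the process remains in $U$ forever, transfers reachability between the two processes via the same-support hypothesis, and identifies their conditional laws on $U$ (your explicit coupling is just the formalization of that last step). The extra details you supply --- the pigeonhole over the finite set of states at time $T_0$, the observation that the one-step transition law depends only on $(x(t),z(t),t)$, and the remark that the Lipschitz hypothesis is needed only for the subsequent application of the stochastic-approximation machinery --- are all consistent with what the paper leaves implicit.
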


\begin{proof}The stochastic process $x(t)$ converges to $x^*$ with positive probability if and only if there exists some $T$ and some $(x(T),z(T))$ reached with positive probability under $\mathbf{y}(t)$ such that starting with initial condition $(x(T),z(T))$, with positive probability $x(t)\rightarrow x^*$ and $(x(t),z(t)) \in U$ for $t \geq T$.

Because the random variables $\widetilde{\mathbf{\xi}}(t)$ have the same support as ${\mathbf{\xi}}(t)$ whenever $x$ and $z$ are interior, the point $(x(T),z(T))$ is reached with positive probability under $\widetilde{\mathbf{y}}(t)$ if and only if it is reached with positive probability under $\mathbf{y}(t)$. Because $\widetilde{\mathbf{\xi}}(t)$ and ${\mathbf{\xi}}(t)$ agree on $U$, starting with initial condition $(x(T),z(T))$, with positive probability $\widetilde{x}(t)\rightarrow x^*$ and $(\widetilde{x}(t),\widetilde{z}(t)) \in U$ for $t \geq T$ if and only if the same holds for $(x(t),z(t))$. These conditions hold for some $(x(T),z(T))$ if and only if $\widetilde{x}(t)$ converges to $x^*$ with positive probability.
\end{proof}
Now choose $\widetilde{\xi}(t)$ satisfying the conditions of the lemma, agreeing with ${\xi}(t)$ in the second coordinate, and such that the unique fixed point of the corresponding function $\widetilde{\phi}_{\sigma}(x)$ is $x^*$. To do so, choose an open neighborhood $U$ of $(x^*, q)$ such that $x^*$ is the unique fixed point of $\phi_{\sigma}(x)$ with $(x,q) \in \overline{U}$. Let $\widetilde{\xi}(t)=\xi(t)$ on the closure $\overline{U}$ of $U$. For each $z$, let $\widetilde{\xi}(t)$ be constant in $x$ outside of the neighborhood $U$.

Then $\widetilde{\xi}(t)$ and $\xi(t)$ have the same support for all interior $x$ and $z$. Lipschitz continuity follows from Lipschitz continuity of the expectations of $\xi(t)$ in $x$ and $z$, which we checked in the proof of Proposition~\ref{prop:conv}.

Since $x^*$ is the unique fixed point of $\widetilde{\phi}_{\sigma}(x)$, by the same argument as in Proposition~\ref{prop:conv}, we have $\widetilde{x}(t) \rightarrow x^*$ almost surely. Note that this step uses Lipschitz continuity of $\mathbb{E}[\widetilde{\mathbf{\xi}}(t+1)]$. So by Lemma~\ref{lem:positiveprobconv}, $x(t) \rightarrow x^*$ with positive probability.

For the converse, if $x(t)\to x^*$ with positive probability, then $h_1(x^*,q)=\phi_\sigma(x^*)-x^*=0$ since $z(t)\to q$ a.s.\ and the drift $h_1$ is continuous. Hence $x^*$ is a fixed point of $\phi_\sigma$.
\end{proof}

\begin{proof}[Proof of Lemma \ref{lem:maj_ss_types}]
If $x>1/2,$ then sampling accuracy is $\lambda x+(1-\lambda)q>1/2$
since $q>1/2$ also. Conversely, suppose $\omega=1$, $x\le 1/2$ and sampling accuracy is at least $1/2$. Under $\sigma^{\mathrm{maj}}$, the expected number of positive signals endorsed is at least $C/2$. To see this, pair every sample realization with $k<K/2$ positive signals with its mirror realization with $K-k$ positive signals, and note the mirror realization has weakly larger probability when sampling accuracy is at least $1/2$. Conditional on the event that there are either $k$ or $K-k$ positive signals in the sample, the average number of positive endorsements is at least $C/2$. If $K$ is even, then the average number of positive endorsements conditional on $K/2$ positive signals in the sample is also at least $C/2$, since tie is broken in the direction of the private signal and $q>1/2$.  Thus $\phi_{\sigma^{\mathrm{maj}}}(x)>1/2\ge x$, so such an $x$ cannot be a fixed point. Therefore, if $x<1/2$ and $\phi_{\sigma^{\text{maj}}}(x)=x,$ the sampling accuracy must be strictly less than $1/2$, and $x=1/2$ is not a fixed point.
\end{proof}

\begin{proof}[Proof of Theorem \ref{thm:equilibrium_ss_by_lambda}]

We begin with four preliminary lemmas. Throughout the proof, write $p(x):=\lambda x+(1-\lambda)q$ for the sampling accuracy associated with viral accuracy $x$.

\begin{lem}
\label{lem:3_implies_1}
Suppose $\sigma$ is state symmetric, $\mathbb{E}[\sigma(1,k)]\ge\mathbb{E}[\sigma(-1,k)]$
for every $0\le k\le K$, and that $\sigma(1,K/2)(U_{K/2})=1$, $\sigma(-1,K/2)(L_{K/2})=1$ if $K$ is even.  If sampling accuracy at $x$ is weakly smaller
than 1/2, then $\phi_{\sigma^{\text{maj}}}(x)\le\phi_{\sigma}(x)$. The inequality is strict if the sampling accuracy is strictly smaller than 1/2 and $\sigma \ne\sigma^{\text{maj}}$.
\end{lem}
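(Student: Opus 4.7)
The plan is to expand the difference $\phi_\sigma(x) - \phi_{\sigma^{\text{maj}}}(x)$ directly from Definition~\ref{def:inflow_acc} and analyze the resulting sum via a pairing argument over $k$ and $K-k$. Define $a_k := \mathbb{E}[\sigma(1,k)] - \mathbb{E}[\sigma^{\text{maj}}(1,k)]$ and $b_k := \mathbb{E}[\sigma(-1,k)] - \mathbb{E}[\sigma^{\text{maj}}(-1,k)]$, so that
\[
(1+C)\bigl(\phi_\sigma(x) - \phi_{\sigma^{\text{maj}}}(x)\bigr) = \sum_{k=0}^{K} P_k(x,\lambda)\bigl[q\,a_k + (1-q)\,b_k\bigr].
\]
State symmetry of $\sigma$ and of $\sigma^{\text{maj}}$ gives $b_k = -a_{K-k}$. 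Because $\sigma^{\text{maj}}$ attains only the extremes $0$ or $C$, we get $a_k \le 0$ for $k > K/2$ and $a_k \ge 0$ for $k < K/2$, with $a_{K/2} = 0$ when $K$ is even thanks to the hypothesis pinning $\sigma$ to the majority rule at the tie. Combining $\mathbb{E}[\sigma(1,k)] \ge \mathbb{E}[\sigma(-1,k)]$ with state symmetry yields $\mathbb{E}[\sigma(1,k)] + \mathbb{E}[\sigma(1,K-k)] \ge C$; since the majority rule's two corresponding values always sum to $C$ for $k \ne K/2$, this becomes the key extra constraint $a_k + a_{K-k} \ge 0$.

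I would then group the sum in pairs $(k, K-k)$ with $k < K/2$; using $b_k = -a_{K-k}$, each pair collapses to $a_k A + a_{K-k} B$, where $A := qP_k - (1-q)P_{K-k}$ and $B := qP_{K-k} - (1-q)P_k$. Writing $p := \lambda x + (1-\lambda)q$, the binomial ratio $P_k/P_{K-k} = (p/(1-p))^{2k-K}$ gives $P_k \ge P_{K-k}$ whenever $p \le 1/2$ and $k < K/2$, hence $A \ge 0$ (since $q > 1-q$) and $A - B = P_k - P_{K-k} \ge 0$. Using $a_{K-k} \le 0$ and $a_k \ge -a_{K-k}$, a short case split finishes the weak inequality: if $B \ge 0$, then $a_k A + a_{K-k} B \ge -a_{K-k} A + a_{K-k} B = -a_{K-k}(A-B) \ge 0$; if $B < 0$, then $a_k A \ge 0$ and $a_{K-k}B \ge 0$ separately. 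Summing the pair contributions (the $k=K/2$ term vanishes) gives $\phi_\sigma(x) \ge \phi_{\sigma^{\text{maj}}}(x)$.

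For the strict inequality under $p < 1/2$ and $\sigma \ne \sigma^{\text{maj}}$: because $\sigma^{\text{maj}}(s,k)$ is a point mass at $0$ or $C$, equality of means forces equality of distributions, so there must exist some $a_{k^*} \ne 0$; by state symmetry, without loss of generality $k^* < K/2$ with $a_{k^*} > 0$. Under $p < 1/2$ strictly (with $p > 0$, which is the only non-degenerate case), all $P_k > 0$ and $P_{k^*} > P_{K-k^*}$ strictly, so $A > 0$ strictly; running the case split on $B$ for the pair containing $k^*$ promotes every weak inequality into a strict one, making that pair's contribution strictly positive while the others remain $\ge 0$.

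The main obstacle will be that the coefficient $B = qP_{K-k}-(1-q)P_k$ has no definite sign under $p \le 1/2$, so a naive term-by-term comparison against the majority rule fails: the potentially negative $a_{K-k}B$ term must be absorbed. This is exactly where the additional hypothesis $\mathbb{E}[\sigma(1,k)] \ge \mathbb{E}[\sigma(-1,k)]$ enters, through the inequality $a_k \ge -a_{K-k}$, which allows the rewriting $a_k A + a_{K-k} B \ge -a_{K-k}(A - B)$ that turns the lower bound into a product of two quantities with matching signs.
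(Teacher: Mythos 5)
Your proof is correct and follows essentially the same route as the paper's: both pair the terms $k$ and $K-k$, use state symmetry to reduce everything to the positive-signal shares, invoke the hypothesis $\mathbb{E}[\sigma(1,k)]\ge\mathbb{E}[\sigma(-1,k)]$ to control the $q$ versus $1-q$ asymmetry, and finish with $P_k(x,\lambda)\ge P_{K-k}(x,\lambda)$ for $k<K/2$ when sampling accuracy is at most $1/2$ (your deviation variables $a_k$, $b_k$ and the case split on the sign of $B$ are just a different bookkeeping of the paper's single chain of inequalities). The only caveat is the degenerate point $p=0$ (i.e., $\lambda=1$ and $x=0$), where the strict claim can fail; you correctly set this aside as degenerate, and the paper's own proof implicitly excludes it as well since the lemma is only ever applied at fixed points, which satisfy $x\ge q/(1+C)>0$.
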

\begin{proof}
Fix $k<K/2$ and write $P_j=P_j(x,\lambda)$,
$A=\mathbb E[\sigma(1,k)]$, and $B=\mathbb E[\sigma(-1,k)]$.
By state symmetry,
$\mathbb E[\sigma(1,K-k)]=C-B$, $\mathbb E[\sigma(-1,K-k)]=C-A$.
The contribution of the pair $k,K-k$ to the numerator of $\phi_\sigma$ in the expression from Definition~\ref{def:inflow_acc} is
\[
\begin{aligned}
T_\sigma(k)
&=P_k(qA+(1-q)B)
+P_{K-k}(q(C-B)+(1-q)(C-A))\\
&=P_{K-k}C
+(A-B)(qP_k-(1-q)P_{K-k})
+B(P_k-P_{K-k}).
\end{aligned}
\]
The corresponding contribution under majority rule is
\[
T_{\mathrm{maj}}(k)=P_kL_k+P_{K-k}U_{K-k}
=P_{K-k}C+L_k(P_k-P_{K-k}),
\]
where the last equality uses $U_{K-k}=C-L_k$. Therefore
\[
T_\sigma(k)-T_{\mathrm{maj}}(k)
=
(A-B)(qP_k-(1-q)P_{K-k})
+(B-L_k)(P_k-P_{K-k}).
\]
Since $p(x)\le 1/2$ and $k<K/2$, we have $P_k\ge P_{K-k}$ and $qP_k-(1-q)P_{K-k}>0$. Hence this difference is weakly positive for every $k<K/2$. The term for $k=K/2$, when $K$ is even, is the same for $\sigma$ and $\sigma^{\text{maj}}$ by assumption. This proves $\phi_{\sigma^{\text{maj}}}(x)\le\phi_{\sigma}(x)$. If $p(x)<1/2$, then $P_k>P_{K-k}$ for every $k<K/2$, so $T_\sigma(k)-T_{\mathrm{maj}}(k)=0$ requires $A=B=L_k$ for every $k<K/2$; by state symmetry and the  assumption on $\sigma(\cdot,K/2)$, this is exactly $\sigma=\sigma^{\text{maj}}$.
\end{proof}

The proofs of the next two lemmas are in the Online Appendix.

\begin{lem}
\label{lem:1_implies_2}Suppose $\sigma$ is state symmetric and $\sigma(1,k)(U_k)=1$ for every $k\ge K/2.$
Then, $\phi_{\sigma}$ does not have any fixed point $x$ with $\lambda x+(1-\lambda)q > 1/2$  and  $x \le q$.
\end{lem}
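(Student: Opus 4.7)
The aim is to show $\phi_\sigma(x) > x$ for every $x$ satisfying $p := \lambda x + (1-\lambda)q > 1/2$ and $x \le q$, which is equivalent to the lemma's conclusion. Since $\phi_\sigma$ depends on $x$ and $\lambda$ only through $p$, and $x \le q$ iff $p \le q$, I work with $p$ ranging over $(1/2, q]$; on this region $p \ge x$. State symmetry combined with the hypothesis forces $\mathbb{E}[\sigma(1,k)] = C$ for $k \ge K/2$ and $\mathbb{E}[\sigma(-1,k)] = 0$ for $k \le K/2$, so the only free parameters are $a_j := \mathbb{E}[\sigma(1,j)]$ for $j < K/2$, with $\mathbb{E}[\sigma(-1,K-j)] = C - a_j$ by symmetry and $a_j \in [\max(0, C-K+j), \min(j, C)]$ by feasibility.

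The claim reduces to the sufficient inequality $\mathbb{E}[\text{positive shared}] \ge pC$, with strict inequality at $p = q$: combined with $x \le q \le p$, this yields $\phi_\sigma(x) \ge (q + pC)/(1+C) > x$ (the boundary case $x=q$ is handled by the strict version below). Writing
\[
\mathbb{E}[\text{positive shared}] - pC = C(A + qB - p) + \sum_{j<K/2} a_j\bigl(qP_j - (1-q)P_{K-j}\bigr),
\]
where $A := \mathbb{P}[\mathrm{Binom}(K,p) > K/2]$ and $B := P_{K/2}$ if $K$ is even (else $0$), the identity $A + qB - p = (1/K)\sum_{j<K/2} j(P_{K-j} - P_j) + (q - 1/2)B$ shows that the baseline term $C(A+qB-p)$ is strictly positive for $p > 1/2$, using $P_{K-j} > P_j$ and $q > 1/2$.

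The crux is controlling the perturbation sum, since $c_j := qP_j - (1-q)P_{K-j}$ can be negative for $j$ with $K - 2j \ge 2$ as $p$ approaches $q$ (the regime where the Bayesian posterior given $s=1,\,k=j$ actually favors $\omega=-1$ because the feed's informativeness outweighs the private signal, so a strategy with $a_j > 0$ drags inflow accuracy downward). I would bound each per-$j$ contribution $(jC/K)(P_{K-j}-P_j) + a_j c_j$ from below by rewriting it as $P_{K-j}(jC/K - a_j(1-q)) - P_j(jC/K - a_j q)$, substituting $a_j = \min(j,C)$ when $c_j < 0$, and verifying that the sum of these worst-case contributions plus $C(q-1/2)B$ remains nonnegative on $(1/2, q]$. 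This reduces to a family of polynomial inequalities in $p$ and constitutes the main technical obstacle.

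For the strict inequality at $p=q$ (forced whenever $x=q$), I would compare with the strategy $\sigma^{+}$ that always shares the private signal, for which $\phi_{\sigma^{+}}(x)\equiv q$:
\[
\phi_\sigma(q) - q = \frac{1}{1+C}\sum_{j<K/2}(C - a_j)\bigl((1-q)P_{K-j}(q) - qP_j(q)\bigr).
\]
A direct computation gives $(1-q)P_{K-j}(q) - qP_j(q) = \binom{K}{j}q^{j+1}(1-q)^{j+1}\bigl(q^{K-2j-1} - (1-q)^{K-2j-1}\bigr) \ge 0$ for $j < K/2$, strictly positive whenever $K - 2j \ge 2$. The term at $j=0$ satisfies $K - 2j - 1 = K - 1 \ge 1$ (strictly positive bracket for $K \ge 2$) and has $C - a_0 = C \ge 1$, since feasibility forces $a_0 = 0$, giving $\phi_\sigma(q) > q$.
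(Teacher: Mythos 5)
Your setup is sound: the reduction to showing $\mathbb{E}[\text{positive shared}]\ge pC$ (strictly when $p=q$), the decomposition into the baseline $C(A+qB-p)$ plus the perturbation $\sum_{j<K/2}a_jc_j$, the identity for $A+qB-p$, and the endpoint computation at $p=q$ are all correct (modulo the typo $x\le q\le p$, which should read $x\le p\le q$). But there is a genuine gap at exactly the step you yourself flag as ``the main technical obstacle'': you never establish that the sum of worst-case contributions is nonnegative on $(1/2,q)$, and that statement \emph{is} the lemma. Worse, the route you sketch cannot be salvaged termwise: the per-$j$ quantity $\frac{jC}{K}(P_{K-j}-P_j)+\min(j,C)\,c_j$ can be strictly negative --- for instance $K=10$, $C=1$, $q=0.55$, $j=1$, $p=0.53$ gives $0.1(P_9-P_1)+0.55P_1-0.45P_9<0$ --- so one must control the full sum over $j$, and the resulting ``family of polynomial inequalities'' ranges over all of $K$, $C$, $q$ and $p$ simultaneously. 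That is not a finite verification, and you offer no argument for it.

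The paper dissolves this difficulty with one preliminary step that your decomposition skips. Since $\mathbb{E}[\sigma(1,k)]\ge\mathbb{E}[\sigma(-1,k)]$ for every $k$ (immediate from your own observations that $\mathbb{E}[\sigma(1,k)]=C$ for $k\ge K/2$ and $\mathbb{E}[\sigma(-1,k)]=0$ for $k\le K/2$), and since $q\ge p>\tfrac12$, one may first weaken the outer weights from $(q,1-q)$ to $(p,1-p)$:
\[
\mathbb{E}[\text{positive shared}]\ \ge\ p\sum_kP_k\,\mathbb{E}[\sigma(1,k)]+(1-p)\sum_kP_k\,\mathbb{E}[\sigma(-1,k)].
\]
Now the pairing $(1-p)P_{K-j}\ge pP_j$ for $j<K/2$ (your own bracket computation with $q$ replaced by $p$), applied to the terms $(1-p)P_{K-j}(C-a_j)$, makes the dependence on the $a_j$ cancel exactly via $a_j+(C-a_j)=C$, yielding the bound $pC$ with strictness supplied by the $j=0$ term (where $C-a_0=C\ge1$ and the exponent $K-1\ge1$). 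This is precisely your $p=q$ endpoint argument run at general $p$, and it replaces the unproven polynomial inequalities entirely. As written, your proposal establishes the lemma only in the boundary case $p=q$.
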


\begin{lem}\label{lem:maj_shape}
The inflow accuracy function for the majority rule is convex-then-concave as a function of sampling accuracy $p=\lambda x+(1-\lambda)q$: its second derivative has at most one sign change on $(0,1)$, and any such sign change is from positive to negative. In particular, $\phi_{\sigma^{\mathrm{maj}}}$ is concave on the region where sampling accuracy is at least $1/2$.
\end{lem}

\begin{lem}
\label{lem:drift_right} 

For each $\epsilon',\epsilon''>0$, $p\in(0,1),$ strategy $\sigma^{*}$ and $0\le\overline{\lambda}\le1$ with $\phi_{\sigma^{*}}^{\overline{\lambda}}(x)-x\ge2\epsilon'$ for every $x$ with $\overline{\lambda}x+(1-\overline{\lambda})q\le p+2\epsilon'$ (where $\phi_{\sigma^{*}}^{\overline{\lambda}}$ is the inflow accuracy function with virality weight $\overline{\lambda})$, there is some $N$ and some $\delta>0$ so that for every $\sigma$ with $\|\sigma-\sigma^{*}\|_2<\delta$ and $\lambda$ with $|\lambda-\overline{\lambda}|<\delta,$ we have $\mathbb{P}_{\sigma,\lambda}[\lambda x(t)+(1-\lambda)q\ge p+\epsilon'/2]>1-\epsilon''$ for every $t\ge N$. 
\end{lem}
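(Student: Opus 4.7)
The plan is to use the stochastic-approximation decomposition from the proof of Proposition~\ref{prop:conv}, combined with continuity of the drift in the parameters and a Lyapunov argument that exploits the positive drift in the ``bad'' region. Recall from that proof that $x(t+1) - x(t) = (h_1(\mathbf{y}(t)) + M_1(t+1))/(t+1)$, where $h_1(\mathbf{y}(t)) = \phi^\lambda_{\sigma, z(t)}(x(t)) - x(t)$ is the drift (allowing $z(t) \neq q$) and $M_1(t+1)$ is a bounded martingale difference. I assume $\overline{\lambda} > 0$; the case $\overline{\lambda} = 0$ is trivial, since sampling accuracy is then $\approx q$ and the hypothesis is only consistent with $q > p + 2\epsilon'$.

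\textbf{Continuity and concentration.} The map $(\sigma, \lambda, x, z) \mapsto \phi^\lambda_{\sigma, z}(x) - x$ is polynomial in $(x, z, \lambda)$ and linear in $\sigma$, hence jointly continuous on a compact parameter set. Uniform continuity together with the hypothesis yields $\delta > 0$ such that whenever $\|\sigma - \sigma^*\|_2 < \delta$, $|\lambda - \overline{\lambda}| < \delta$, and $|z - q| < \delta$,
\[ \phi^\lambda_{\sigma, z}(x) - x \geq \epsilon' \text{ for every } x \text{ with } \lambda x + (1-\lambda)q \leq p + \epsilon'. \]
Since $z(t)$ is a sample mean of i.i.d.\ Bernoullis with mean $q$ and does not depend on strategies, a Chernoff bound gives $N_1$ so that the events $\mathcal{E}_t := \{|z(s) - q| < \delta \text{ for all } N_1 \leq s \leq t\}$ satisfy $\mathbb{P}[\bigcap_{t \geq N_1} \mathcal{E}_t] \geq 1 - \epsilon''/2$ uniformly in $(\sigma, \lambda)$.

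\textbf{Lyapunov argument.} Define $\theta_1 > \theta_2$ by $\lambda \theta_i + (1-\lambda) q = p + \epsilon'$ and $p + \epsilon'/2$ respectively, and let $V(x) := ((\theta_1 - x)^+)^2$. Then $V$ is $C^1$ and convex with $V'(x) = -2(\theta_1 - x)^+$, and its second derivative is bounded by $2$, giving the Taylor bound $V(y) \leq V(x) + V'(x)(y-x) + (y-x)^2$. Substituting $y = x(t+1)$, taking $\mathbb{E}[\cdot \mid \mathbf{y}(t)]$ (which kills $M_1$), and using that on $\mathcal{E}_t$ with $x(t) < \theta_1$ we have $V'(x(t)) h_1 \leq V'(x(t)) \epsilon' = -2\epsilon' \sqrt{V(x(t))} \leq -2\epsilon' V(x(t))$ (the last step because $V \leq 1$), I obtain the Robbins--Monro recursion
\[ v_{t+1} \leq v_t \left(1 - \frac{2\epsilon'}{t+1}\right) + \frac{C_1}{(t+1)^2}, \qquad v_t := \mathbb{E}\bigl[V(x(t)) \mathbf{1}_{\mathcal{E}_t}\bigr], \]
for $t \geq N_1$, with $C_1$ bounding the conditional second moment of $h_1 + M_1(t+1)$. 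Iterating shows $v_t \to 0$. Markov's inequality, combined with $\{x(t) \leq \theta_2\} = \{V(x(t)) \geq (\theta_1 - \theta_2)^2\}$, yields $\mathbb{P}[x(t) \leq \theta_2, \mathcal{E}_t] < \epsilon''/2$ for $t \geq N$ sufficiently large; combining with the tail bound above gives $\mathbb{P}[\lambda x(t) + (1-\lambda)q > p + \epsilon'/2] > 1 - \epsilon''$.

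\textbf{Main obstacle.} The key technical choice is the squared Lyapunov function $V$: its gradient vanishes at the threshold $\theta_1$, so the Taylor remainder stays under control when $x(t)$ crosses $\theta_1$ from above (where $V$ and $V'$ both vanish but the martingale noise could otherwise push $V$ back up), and the elementary inequality $\sqrt{V} \geq V$ on $[0,1]$ converts the drift bound $V' h_1 \leq -2\epsilon'\sqrt{V}$ into a geometric-decay recursion on $v_t$. Uniformity over the $\delta$-neighborhood of $(\sigma^*, \overline{\lambda})$ is automatic because the constants $\delta, \epsilon', N_1, C_1$ depend only on $(\sigma^*, \overline{\lambda}, q)$ and the polynomial structure of $h_1$.
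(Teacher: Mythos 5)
Your proof is correct, but it takes a genuinely different route from the paper's. The paper argues pathwise over time blocks $[N_1,N_2]$ with $N_2\ge bN_1$: using the logarithmic accumulation of the drift $\sum_t \epsilon'/(4(t+1))$ against a martingale maximal inequality (Theorem 14 of Borkar's Appendix C), it shows the process must cross a higher threshold $p'$ somewhere in the block with high probability, and then that once it has crossed it cannot fall back below $p+\epsilon'/2$ by time $N_2$. You instead run a Lyapunov/mean-recursion argument: with $V(x)=((\theta_1-x)^+)^2$ you obtain $v_{t+1}\le v_t(1-2\epsilon'/(t+1))+C_1/(t+1)^2$ for $v_t=\mathbb{E}[V(x(t))\mathbf{1}_{\mathcal{E}_t}]$, conclude $v_t\to 0$ by the standard Robbins--Monro/Chung lemma, and finish with Markov's inequality. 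Both proofs share the same preliminary reductions (uniform continuity of $\phi^{\lambda}_{\sigma,z}(x)-x$ in all arguments to get a drift of at least $\epsilon'$ on a $\delta$-neighborhood of $(\sigma^*,\overline{\lambda},q)$, and a Chernoff bound on $z(t)$), and your choice of the squared hinge $V$ is the right one: its derivative vanishes at $\theta_1$, so the drift inequality holds trivially on $\{x(t)\ge\theta_1\}$ and the Taylor remainder is uniformly controlled. Your route is arguably cleaner for the statement as written, since the lemma only asks for a bound on the time-$t$ marginal and the mean recursion delivers exactly that with constants uniform over the neighborhood (note $\theta_1-\theta_2=\epsilon'/(2\lambda)\ge\epsilon'/2$ and $v_{N_1}\le 1$); the paper's block argument implicitly yields more (trajectory control after the crossing time), which is not needed here. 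Two small points you should spell out: the hypothesis forces $\overline{\lambda}+(1-\overline{\lambda})q>p+2\epsilon'$ (otherwise it would require $\phi^{\overline{\lambda}}_{\sigma^*}(1)\ge 1+2\epsilon'$, which is impossible), and this is what guarantees $\theta_1<1$, hence $V\le 1$ and the step $\sqrt{V}\ge V$, for all $\lambda$ near $\overline{\lambda}$; and for $\overline{\lambda}>0$ one should shrink $\delta$ so that $\lambda$ stays bounded away from zero, as the paper also does.
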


\begin{proof}

When $\overline{\lambda}=0$, the hypothesis requires $q>p+2\epsilon'$ (otherwise the condition $\overline{\lambda}\,x+(1-\overline{\lambda})q\le p+2\epsilon'$ reduces to $q\le p+2\epsilon'$, and since $\phi_{\sigma^{*}}^{0}(x)$ is constant in $x$ and bounded by 1, the assumption $\phi_{\sigma^{*}}^{0}(x)-x\ge 2\epsilon'$ for all $x\in[0,1]$ cannot hold). Given $q>p+2\epsilon'$, for any $\delta>0$ small enough and $\lambda\in[0,\delta)$ we have $\lambda x(t)+(1-\lambda)q\ge q(1-\delta)>p+\epsilon'/2$ for all $x(t)\in[0,1]$, regardless of $\sigma$ and $t$. So the conclusion holds with $N=1$. For the remainder of the proof, assume $\overline{\lambda}>0$.

Because $\phi^{\lambda}_{\sigma}(x)$ is polynomial in $\lambda$, $\sigma$, and $x$, there exists $\delta>0$ such that $\phi_{\sigma}^{\lambda}(x)-x\ge\epsilon'$ for every $x$ with $\lambda x+(1-\lambda)q\le p+\epsilon'$ when $\|\sigma^*-\sigma\|_2<\delta$ and $|\overline{\lambda}-\lambda|<\delta$. Shrinking $\delta$ if necessary, we can also assume that $\lambda$ is bounded away from zero when $|\overline{\lambda}-\lambda|<\delta$ (which is possible since $\overline{\lambda}>0$).

For the remainder of the proof, fix $\sigma$ and $\lambda$ in these neighborhoods. We will observe at the end of the proof that the bounds we will prove are uniform in the choice of $\sigma$ and $\lambda$.

Let $p'>p+\epsilon'$ be the largest number in $(0,1)$ such that \begin{equation}\label{eq:definep'}\phi_{\sigma}(x)-x \geq \epsilon'/2\end{equation} for all $x$ satisfying $\lambda x + (1-\lambda)q \leq p'$. Let $N_1<N_2$ be positive integers with $N_2 \geq b N_1$ for some integer $b>1$. We will first show that for $N_1$ and $N_2$ large enough, the probability that $\lambda x(t) + (1-\lambda)q < p'$ for all $t \in [N_1,N_2]$ is small. We will then show that if $\lambda x(t_1) + (1-\lambda)q > p'$ for some $N_1 \leq t_1 < N_2$, then the probability that $\lambda x(N_2) + (1-\lambda)q < p+\epsilon'/2$ is small.

Since $z(t)=(N_0^++\operatorname{Binom}(t,q))/(n_0+t)$, the contribution of the seed signals to $z(t)$ is $O(1/t)$. By the Chernoff bound applied to the agents' signals and compactness of the set of strategies $\sigma$ under consideration, we can choose a constant $B>0$ independent of $\sigma$ such that \begin{equation}\label{eq:chernoff}
    \max_{x\in [0,1]} |\phi_{\sigma,z(t)}(x)-\phi_{\sigma}(x)| < \epsilon'/4
\end{equation} with probability at least $1-2e^{-Bt}$ for $t$ sufficiently large.

Recall that we can decompose $\mathbf{y}(t)$ as a stochastic-approximation recursion with drift $h(\mathbf{y}(t))$, martingale term $M(t+1)$, and perturbations from the seed signals. The scaled martingale increments satisfy $|M(t)|<B_1/t$ for some constant $B_1$ and all large $t$. So by Theorem C.7 from Appendix C of \cite{borkar2009stochastic}, for any $\alpha>0$ and any $t_1$ and $t_2$,
\begin{equation}\label{eq:martingaleconcentration}
    \mathbb{P}\left(\sup_{t_1<t<t_2} \left|\sum_{i=t_1}^{t}M(i)\right|>\alpha \right) \leq 4e^{-\frac{\alpha^2}{\sum_{i=t_1}^{t_2} B_1^2/i^2}}.
\end{equation}

Consider the event $E$ that $\lambda x(t) + (1-\lambda)q < p'$ for all $N_1 \leq t \leq N_2$. Suppose inequality (\ref{eq:chernoff}) holds for all $N_1 \leq t < N_2$. Then we have \begin{align*}x(N_2)-x(N_1) &  = \sum_{t=N_1}^{N_2-1} \frac{\phi_{\sigma,z(t)}(x(t)) - x(t)}{t+1}+ \sum_{t=N_1}^{N_2-1}M(t+1) 
\\ & = \sum_{t=N_1}^{N_2-1} \frac{\phi_{\sigma,z(t)}(x(t)) - \phi_{\sigma}(x(t))}{t+1}+\sum_{t=N_1}^{N_2-1} \frac{\phi_{\sigma}(x(t)) - x(t)}{t+1}+ \sum_{t=N_1}^{N_2-1}M(t+1) 
\\ & \geq \sum_{t=N_1}^{N_2-1}\epsilon'/4 \cdot \frac{1}{t+1}+\sum_{t=N_1}^{N_2-1}M(t+1) \text{ by inequalities }(\ref{eq:definep'})\text{ and  } (\ref{eq:chernoff})
\\ & \geq (\epsilon'/4)(\log(N_2+1)-\log(N_1+1))+ \sum_{t=N_1}^{N_2-1}M(t+1).\end{align*}
When event $E$ holds, the right-hand side must be at most $p'/\lambda$. Taking $b$ and therefore $N_2/N_1$ sufficiently large, we can assume that $(\epsilon'/4)(\log(N_2+1)-\log(N_1+1)) > 2p'/\lambda$ (since $\log((bN_1+1)/(N_1+1))\to\log b$ as $N_1\to\infty$).  By equation (\ref{eq:martingaleconcentration}), the absolute value of the sum of martingales is greater than $p'/\lambda$ with probability at most $$4e^{-\frac{(p'/\lambda)^2}{\sum_{i=N_1}^{N_2} B_1^2/(i+1)^2}}\leq 4e^{-\frac{(p'/\lambda)^2N_1N_2}{ 2B_1^2(N_2-N_1)}} < 4e^{-\frac{(p'/\lambda)^2N_1}{ 2B_1^2}}.$$
Along with the Chernoff bound, this gives an upper bound on the probability of event $E$.

If event $E$ does not hold, there exists some $N_1 \leq t \leq N_2$ such that $\lambda x(t) + (1-\lambda)q \geq p'$. Choose $t_1$ so that $t_1-1$ is the largest such $t$.  Since $\lambda x(t_1-1)+(1-\lambda)q\ge p'$ and $|x(t_1)-x(t_1-1)|\le B_2/(t_1+1)$ for some constant $B_2$ (bounded one-step increment with a finite seed pool), we have $\lambda x(t_1)+(1-\lambda)q\ge p'-\lambda B_2/(t_1+1)$.

Suppose $\lambda x(N_2) + (1-\lambda)q \leq p+\epsilon'/2$. For $N_1$ sufficiently large, this implies $t_1 \leq N_2$. So we must have
$$x(N_2)-x(t_1) \leq \frac{(p+\epsilon'/2)-(p'-\lambda B_2/(t_1+1))}{\lambda} = \frac{(p+\epsilon'/2)-p'}{\lambda}+\frac{B_2}{t_1+1} < -\frac{\epsilon'}{4\lambda}$$
for $N_1$ (and hence $t_1\ge N_1$) sufficiently large, since $(p+\epsilon'/2-p')/\lambda < -\epsilon'/(2\lambda)$ and $B_2/(t_1+1)<\epsilon'/(4\lambda)$. On the other hand, when inequality (\ref{eq:chernoff}) holds for all $N_1 \leq t < N_2$,
\begin{align*} x(N_2)-x(t_1) &  = \sum_{t=t_1}^{N_2-1} \frac{\phi_{\sigma,z(t)}(x(t)) - x(t)}{t+1}+ \sum_{t=t_1}^{N_2-1}M(t+1) 
\\ & = \sum_{t=t_1}^{N_2-1} \frac{\phi_{\sigma,z(t)}(x(t)) - \phi_{\sigma}(x(t))}{t+1}+\sum_{t=t_1}^{N_2-1} \frac{\phi_{\sigma}(x(t)) - x(t)}{t+1}+ \sum_{t=t_1}^{N_2-1}M(t+1) 
\\ & \geq
\sum_{t=t_1}^{N_2-1}\epsilon'/4 \cdot  \frac{1}{t+1}+\sum_{t=t_1}^{N_2-1}M(t+1) \text{ by inequalities }(\ref{eq:definep'})\text{ and  } (\ref{eq:chernoff})
\\ & \geq (\epsilon'/4)(\log(N_2+1)-\log(t_1+1)) + \sum_{t=t_1}^{N_2-1}M(t+1).\end{align*}
Applying equation (\ref{eq:martingaleconcentration}) with $\alpha = \epsilon'/(4\lambda)$, the absolute value of the sum of martingales is greater than $\epsilon'/(4\lambda)$ with probability at most
$$4e^{-\frac{(\epsilon'/(4\lambda))^2}{\sum_{i=t_1}^{N_2} B_1^2/(i+1)^2}}
\leq 4e^{-\frac{(\epsilon')^2N_2t_1}{32\lambda^2B_1^2(N_2-t_1)}}
\leq 4e^{-\frac{(\epsilon')^2t_1}{32\lambda^2B_1^2}}.$$
When this does not hold and the Chernoff bounds apply, $x(N_2)-x(t_1)$ is greater than $-\epsilon'/(4\lambda)$ and therefore $\lambda x(N_2) + (1-\lambda) q > p + \epsilon'/2$ for $N_1$ sufficiently large (using $\lambda x(t_1)+(1-\lambda)q\ge p'-o(1)$ and $p'>p+\epsilon'$). This gives an upper bound on the probability that $\lambda x(N_2) + (1-\lambda) q \leq p + \epsilon'/2$.

We conclude that 
$$\mathbb{P}_{\sigma}[\lambda x(N_2)+(1-\lambda)q< p+\epsilon'/2] \leq 4e^{-\frac{(p'/\lambda)^2N_1}{ 2B_1^2}}+\sum_{t=N_1+1}^{N_2-1} 4e^{-\frac{(\epsilon')^2t}{32\lambda^2 B_1^2 }} + 2\sum_{t=N_1}^{N_2-1} e^{-Bt}$$
for $N_1$ sufficiently large. Because the second and third terms are geometric series, we can choose $N_1$ sufficiently large so that this probability is less than $\epsilon''$ for all $N_2 \geq b N_1$.  Because $\lambda$ is bounded away from zero, we can make this choice uniformly in $\lambda$ and $\sigma$ (subject to the constraints $|\lambda-\overline{\lambda}|<\delta$ and $\|\sigma-\sigma^{*}\|_2<\delta$). So for $N_1$ sufficiently large, we have $\mathbb{P}_{\sigma}[\lambda x(t)+(1-\lambda)q\ge p+\epsilon'/2]>1-\epsilon''$
for $t \geq N = bN_1$.
\end{proof}

We can now prove Theorem \ref{thm:equilibrium_ss_by_lambda}. We first record two observations that will be used in Steps 2 and
3 of the proof. Let $r(t):=\lambda x(t)+(1-\lambda)z(t)$ be the probability
that a sampled signal matches the state after period $t.$ The first
observation is that there exists a constant $\bar{p}<1$ such that
for every $\eta$ and every sufficiently large $t,$ $\mathbb{P}[r(t)\le\bar{p}\mid\omega=1]\ge1-\eta$
uniformly over all strategies and virality weights. This is because
the arrivals of new signals in each period affect both $x(t)$ and
$z(t)$, and the law of large numbers implies that almost surely the
fraction of correct signals converges to $q<1.$ 

Second, let $d_{K}=2(\left\lfloor K/2\right\rfloor +1)-K\in\{1,2\}.$
Fix $a$ with $1/2<a<\bar{p}$ and $L$ with $L<(a/(1-a))^{d_{K}}$.
Then there exists $\eta>0$ so that for every random variable $R$
valued in $[0,1]$, if we have $\mathbb{P}[a\le R\le\bar{p}]\ge1-\eta$
then we must also have $\frac{\mathbb{E}[R^{k}(1-R)^{K-k}]}{\mathbb{E}[R^{K-k}(1-R)^{k}]}>L$
for every integer $k$ with $K/2<k\le K.$ This is because on the
event $\{a\le R\le\bar{p}\}$, we get $R^{k}(1-R)^{K-k}=(\frac{R}{1-R})^{2k-K}R^{K-k}(1-R)^{k}\ge(a/(1-a))^{d_{K}}\cdot R^{K-k}(1-R)^{k}$.
Also, let $m:=\min_{K/2<k\le K}\min_{r\in[a,\bar{p}]}r^{K-k}(1-r)^{k},$
so $m>0$ since its domain of minimization is compact and the minimand
is strictly positive on it. On the complement of the event $\{a\le R\le\bar{p}\},$
$R^{k}(1-R)^{K-k}$ is at least 0 and $R^{K-k}(1-R)^{k}$ is at most
1. So, $\frac{\mathbb{E}[R^{k}(1-R)^{K-k}]}{\mathbb{E}[R^{K-k}(1-R)^{k}]}\ge\frac{(a/(1-a))^{d_{K}}(1-\eta)m}{(1-\eta)m+\eta}$,
which converges to $(a/(1-a))^{d_{K}}>L$ as $\eta\to0.$ 

\textbf{Part 1}: Fix $0<\lambda\le\lambda^{*}$ and suppose $\sigma^{*}$
is a limit equilibrium. 

\textbf{Step 1}: Either $\sigma^* = \sigma^\text{maj}$, or all fixed points of $\phi_{\sigma^{*}}$ are strictly
informative.

We verify that $\sigma^{*}$ satisfies the hypotheses of Lemma \ref{lem:3_implies_1}.
Note $\sigma^{*}$ is the limit of a sequence of symmetric BNEs
$(\sigma^{(i)})$, where every $\sigma^{(i)}$ is state symmetric.
Also, in the $i$-th finite society under the equilibrium $\sigma^{(i)},$
belief about $\{\omega=1\}$ must be weakly higher after observing
$k$ positive signals and $s=1$ than $k$ positive signals and $s=-1$
for every $0\le k\le K.$ So by optimality of $\sigma^{(i)}$, we
have $\mathbb{E}[\sigma^{(i)}(1,k)]\ge\mathbb{E}[\sigma^{(i)}(-1,k)]$
for every $i$ and every $0\le k\le K.$ The limit $\sigma^{*}$ must
also satisfy state symmetry and $\mathbb{E}[\sigma^{*}(1,k)]\ge\mathbb{E}[\sigma^{*}(-1,k)]$
for every $0\le k\le K.$ Also, when $K$ is even, by the state symmetry of the equilibrium
$\sigma^{(i)}$ we know that a sample with $K/2$ positive signals
in society $i$ generates an equilibrium posterior belief that both
states are equally likely. Thus, optimality of $\sigma^{(i)}$ implies
$\sigma^{(i)}(1,K/2)(U_{K/2})=1$ and $\sigma^{(i)}(-1,K/2)(L_{K/2})=1$. The limit
$\sigma^{*}$ must then also satisfy $\sigma^{*}(1,K/2)(U_{K/2})=1$, $\sigma^{*}(-1,K/2)(L_{K/2})=1$.

If $\phi_{\sigma^{*}}$ has a strictly misleading fixed point and $\sigma^* \ne \sigma^\text{maj}$, that
is some $x\in[0,1]$ with $\lambda x+(1-\lambda)q<1/2$ and such that
$\phi_{\sigma^{*}}(x)=x$, then by Lemma \ref{lem:3_implies_1} we
get $\phi_{\sigma^{\text{maj}}}(x)<x.$ But we also have $\phi_{\sigma^{\text{maj}}}(0)>0,$
which means $\phi_{\sigma^{\text{maj}}}$ has a strictly misleading
fixed point in $(0,x)$ by the intermediate-value theorem, and further
$\phi_{\sigma^{\text{maj}}}$ will continue to have a nearby fixed
point for nearby values of $\lambda.$ Since $x\le1/2,$ this implies
for some $\lambda'<\lambda^{*},$ $\phi_{\sigma^{\text{maj}}}$ has
a fixed point in $[0,1/2],$ which contradicts the definition of $\lambda^{*}$.

If there is some $x$ with $\lambda x+(1-\lambda)q=1/2$ and such
that $\phi_{\sigma^{*}}(x)=x$, then by Lemma \ref{lem:3_implies_1}
we get $\phi_{\sigma^{\text{maj}}}(x)\le x.$ But since the sampling
accuracy at $x$ is exactly 1/2, every sample is as likely as its
mirror image, so the majority rule is expected to endorse at least $C/2$
correct signals out of $C$, hence $\phi_{\sigma^{\text{maj}}}(x)>1/2$
after accounting for the arrival of new signals that tend to match
the true state. This is a contradiction. Thus, every fixed point of
$\phi_{\sigma^{*}}$ must be strictly informative unless $\sigma^* = \sigma^\text{maj}$.

\textbf{Step 2}: If $\lambda < \lambda^*$, $\phi_{\sigma^{*}}$ only has fixed points in $(q,1].$  If $\lambda = \lambda^*$, either $\sigma^* = \sigma^\text{maj}$ or $\phi_{\sigma^{*}}$ only has fixed points in $(q,1].$

We first show  all fixed points of $\phi_{\sigma^{*}}$ are strictly
informative, except when $\lambda = \lambda^*$ and $\sigma^* = \sigma^\text{maj}$. If $\lambda < \lambda^*$ and $\sigma^* = \sigma^\text{maj}$, by definition of $\lambda^*$ all fixed points of $\phi_{\sigma^{*}}$ are strictly informative. And if $\sigma^* \ne \sigma^\text{maj}$, then by \textbf{Step 1},  all fixed points of $\phi_{\sigma^{*}}$ are strictly
informative. If $\lambda = \lambda^*$ and  $\sigma^* \ne \sigma^\text{maj}$, again  \textbf{Step 1} implies  all fixed points of $\phi_{\sigma^{*}}$ are strictly
informative.

We  verify that,  except when $\lambda = \lambda^*$ and $\sigma^* = \sigma^\text{maj}$, $\sigma^{*}$ is such that $\sigma^{*}(1,k)(U_k)=1$
for every $k\ge K/2,$ and thus satisfies the hypotheses of Lemma
\ref{lem:1_implies_2}. Since all fixed points of $\phi_{\sigma^{*}}$
are strictly informative, there exists some $\epsilon'>0$ so that
$\phi_{\sigma^{*}}(x)-x>2\epsilon'$ for every $x$ where $\lambda x+(1-\lambda)q\le0.5+2\epsilon'$.

Shrinking $\epsilon'$ if necessary, let $a_{0}:=0.5+\epsilon'/4<\bar{p}$.
Because $(a_{0}/(1-a_{0}))^{d_{K}}>1,$ the second observation above
gives some $\eta>0$ so that whenever $\mathbb{P}[a_{0}\le R\le\bar{p}]\ge1-\eta,$
we get $\frac{\mathbb{E}[R^{k}(1-R)^{K-k}]}{\mathbb{E}[R^{K-k}(1-R)^{k}]}>1$
for each $K/2<k\le K.$ Apply Lemma \ref{lem:drift_right} with $p=1/2$ and with probability
tolerance $\eta/4$ to obtain $N$ and $\delta.$ By the law of large
numbers and accounting for the finite seed pool, choose $N'$ large
enough so that for all $t\ge N'$, $\mathbb{P}[|z(t)-q|>\epsilon'/4]<\eta/4$
and $\mathbb{P}[r(t)>\bar{p}]<\eta/4$, the latter being possible
by the first observation above. Since $\sigma^{(i)}\to\sigma^{*}$,
for all large enough $i$ we get $\parallel\sigma^{(i)}-\sigma^{*}\parallel<\delta$
and $\max\{N,N'\}/n_{i}<\eta/4$. 

Let $R$ be the sampling accuracy at a uniformly random position in
 society $i$. At every position after $\max\{N,N'\},$ except on an
event with probability less than $3\eta/4,$ Lemma \ref{lem:drift_right} and construction
of $N'$ implies that sampling accuracy is at least $\lambda x(t)+(1-\lambda)q-\epsilon'/4\ge a_{0}$
and at most $\bar{p}$. The positions earlier than $\max\{N,N'\}$
have a total probability less than $\eta/4$. So, $\mathbb{P}[a_{0}\le R\le\bar{p}\mid\omega=1]\ge1-\eta.$ 

This implies after observing $k$ positive signals in the sample,
$\omega=1$ is strictly more likely if $k>K/2$. Hence, by optimality,
$\sigma^{(i)}(1,k)(U_k)=1$ for every $k>K/2.$ Also, for any belief
about sampling accuracy, a sample with $k=K/2$ is uninformative,
so if $K/2$ is an integer then $\sigma^{(i)}(1,K/2)(U_{K/2})=1$ by optimality.
Thus we see for all large enough $i,$ $\sigma^{(i)}(1,k)(U_k)=1$ for
every $k\ge K/2,$ hence the same must hold for the limit $\sigma^{*}$.

Combining Lemma \ref{lem:1_implies_2} (which rules out steady states at or lower than $q$ with a sampling accuracy strictly higher than 1/2) with the argument at the beginning of \textbf{Step 2} (which rules out steady states with sampling accuracy 1/2 or lower), we have completed this step. 

\textbf{Step 3}: $\sigma^{*}=\sigma^{\text{maj}}$.

By \textbf{Step 2}, we just need to establish this when $\phi_{\sigma^{*}}$ only has fixed points in $(q,1]$.
By state symmetry
it suffices to show that $\sigma^{*}(-1,k)(U_k)=1$ for every $k>K/2.$
Since $\phi_{\sigma^{*}}$ only has fixed points in $(q,1]$, there
exists some $\epsilon'>0$ so that $\phi_{\sigma^{*}}(x)-x\ge2\epsilon'$
for every $x$ where $\lambda x+(1-\lambda)q\le q+2\epsilon'$. 
Shrinking $\epsilon'$ if necessary, let $a_{q}:=q+\epsilon'/4<\bar{p}$.
Because $d_{K}\ge1$ and $a_{q}>q$, $(a_{q}/(1-a_{q}))^{d_{K}}>q/(1-q).$
Applying the second observation above with $L=q/(1-q)$ gives some
$\eta>0$ so that whenever $\mathbb{P}[a_{q}\le R\le\bar{p}]\ge1-\eta,$
we get $\frac{\mathbb{E}[R^{k}(1-R)^{K-k}]}{\mathbb{E}[R^{K-k}(1-R)^{k}]}>q/(1-q)$
for each $K/2<k\le K.$ Apply Lemma \ref{lem:drift_right} with $p=q$ and with probability
tolerance $\eta/4$. Repeating the construction in Step 2 (with $a_{q}$
in place of $a_{0}$) shows that, for all large enough society index
$i$, the random sampling accuracy of a uniformly randomly positioned
agent satisfies $\mathbb{P}[a_{q}\le R\le\bar{p}]\ge1-\eta$. This
implies after observing $k$ positive signals in the sample, posterior
belief in $\omega=1$ is strictly more than $q$ if $k>K/2$. 
By optimality, $\sigma^{(i)}(-1,k)(U_k)=1$.
So we also have in the limit $\sigma^{*}(-1,k)(U_k)=1$ for every $k>K/2.$

\textbf{Part 2}: For $\lambda < \lambda^*$, $\sigma^{\text{maj}}$ has a unique steady state.

By Part~1 and the definition of $\lambda^*$, all fixed points of $\phi_{\sigma^{\mathrm{maj}}}$ are strictly informative when $\lambda<\lambda^*$. Let $g(x):=\phi_{\sigma^{\mathrm{maj}}}(x)-x$. Since $g(0)>0$ and $g(1)<0$, there is at least one fixed point. Let $x^*$ be the smallest one. Then $x^*>1/2$, and $g$ is concave on $[x^*,1]$ by Lemma~\ref{lem:maj_shape}. Since $g$ is positive to the left of $x^*$ and $g(x^*)=0$, the right derivative of $g$ at $x^*$ is weakly negative. Concavity then implies $g(x)<0$ for every $x>x^*$ (unless $g$ were identically zero on a nontrivial interval, which is impossible because $g(1)<0$ and $g$ is a nonzero polynomial). Hence there is no second fixed point, so the steady state is unique. Combining this with Lemma~\ref{lem:1_implies_2} applied to $\sigma^{\mathrm{maj}}$ gives $x^*>q$.

\textbf{Part 3}: Suppose $\lambda^* < \infty$. When $\lambda=\lambda^{*}$, the unique equilibrium $\sigma^{\mathrm{maj}}$ (from Part~1) has a fixed point in $[0,1/2)$: the set $\{\lambda\in[0,1]:\exists\, x\in[0,1/2],\,\phi_{\sigma^{\mathrm{maj}}}(x,\lambda)=x\}$ is closed, so the infimum $\lambda^{*}$ is attained, and Lemma~\ref{lem:maj_ss_types} upgrades the fixed point from $[0,1/2]$ to $[0,1/2)$. The leftmost-root argument below then produces a strictly misleading steady state.

Now, suppose $\lambda>\lambda^{*}$ and suppose $\sigma^{*}$
is a limit equilibrium.

\textbf{Step 1}: $\phi_{\sigma^{*}}$ must have a weakly misleading
fixed point.

If not, then there exists some $\epsilon>0$ so that $\phi_{\sigma^{*}}(x)-x>\epsilon$
for every $x$ where $\lambda x+(1-\lambda)q\le0.5+\epsilon$. By
repeating the arguments in Part 1, Steps 2 and 3, we conclude $\sigma^{*}=\sigma^{\text{maj}}$.

But we show $\sigma^{\text{maj}}$ has a strictly misleading fixed
point for every $\lambda>\lambda^{*}$. By the definition of $\lambda^{*}$,
we can choose some $\lambda'$ with $\lambda^{*}\leq\lambda'<\lambda$
such that there exists a strictly misleading fixed point $x'$ under
$\sigma^{\text{maj}}$ at $\lambda'$ (we get ``strictly'' because
by Lemma \ref{lem:maj_ss_types}, 1/2 is not a fixed point of $\sigma^{\text{maj}}$
and all fixed points in $[0,1/2)$ are strictly misleading). We rewrite
the inflow accuracy function $\phi_{\sigma}(x)$ as $\phi_{\sigma}(x,\lambda)$
to make explicit its dependence on $\lambda$.

Observe $\phi_{\sigma}(x,\lambda)$ only depends on $x$ and $\lambda$
through the value of $\lambda x+(1-\lambda)q$. We can define $x$
by 
$
\lambda x+(1-\lambda)q=\lambda'x'+(1-\lambda')q.
$
Since $\lambda'<\lambda$ and $x'<q$, this equality implies that
$x>x'$. For $x'$ to be a strictly misleading fixed point under the
majority rule we must have $\lambda'x'+(1-\lambda')q<\frac{1}{2}$,
and therefore $x<\frac{1}{2}$ as well.

So 
$
\phi_{\sigma^{\text{maj}}}(x,\lambda)=\phi_{\sigma^{\text{maj}}}(x',\lambda')=x',
$
where the second equality holds because $x'$ is a fixed point under
$\sigma^{\text{maj}}$ and $\lambda'$. So we conclude that $\phi_{\sigma^{\text{maj}}}(x,\lambda)<x.$
Since $\phi_{\sigma^{\text{maj}}}(0,\lambda)>0,$ by the intermediate
value theorem there is some fixed point of $\phi_{\sigma^{\text{maj}}}$
between $0$ and $x$. Since $x<\frac{1}{2},$ this is a strictly
misleading fixed point, contradiction.

Note that since $\phi_{\sigma^{*}}(0)>0,$ the first weakly misleading
fixed point of $\phi_{\sigma^{*}}$ is stable at least from the left,
so it is also a weakly misleading steady state.

\textbf{Step 2}: $\phi_{\sigma^{*}}$ cannot have a fixed point with
a sampling accuracy of exactly 1/2.

Each $\sigma^{(i)}$, by optimality, has the property that $\mathbb{E}[\sigma^{(i)}(1,k)]\ge\mathbb{E}[\sigma^{(i)}(-1,k)]$
for every $0\le k\le K.$ So we must have $\mathbb{E}[\sigma^{*}(1,k)]\ge\mathbb{E}[\sigma^{*}(-1,k)]$
for each $0\le k\le K.$ Suppose $\lambda x+(1-\lambda)q=1/2$ and
$\phi_{\sigma^{*}}(x)=x.$ For each $0\le k<K/2,$ we get 
{\footnotesize
\begin{align*}
& P_{k}(x,\lambda)\!\left[q\,\E[\sigma^{*}(1,k)]+(1-q)\E[\sigma^{*}(-1,k)]\right]
 +P_{K-k}(x,\lambda)\!\left[q\,\E[\sigma^{*}(1,K-k)]+(1-q)\E[\sigma^{*}(-1,K-k)]\right] \\
=&\, P_{k}(x,\lambda)\Big[q\,\E[\sigma^{*}(1,k)]+(1-q)\E[\sigma^{*}(-1,k)]
 + q\,\E[\sigma^{*}(1,K-k)] + (1-q)\E[\sigma^{*}(-1,K-k)]\Big] 
 \quad \text{since }P_k(x,\lambda)=P_{K-k}(x,\lambda)\\
=&\, P_k(x,\lambda)\Big[C+(2q-1)\big(\E[\sigma^{*}(1,k)]-\E[\sigma^{*}(-1,k)]\big)\Big]\\
\ge&\, P_k(x,\lambda)C 
\quad \text{since }\E[\sigma^{*}(1,k)]\ge\E[\sigma^{*}(-1,k)],\;2q-1>0\\
\ge&\, P_k(x,\lambda)\tfrac{C}{2}+P_{K-k}(x,\lambda)\tfrac{C}{2}.
\end{align*}
}
If $K$ is even, the tie term is also at least $P_{K/2}(x,\lambda)C/2$: state symmetry gives $\E[\sigma^*(-1,K/2)]=C-\E[\sigma^*(1,K/2)]$, while optimality gives $\E[\sigma^*(1,K/2)]\ge \E[\sigma^*(-1,K/2)]$. So
\[
\phi_{\sigma^{*}}(x):=\frac{q+\sum_{k=0}^{K}P_{k}(x,\lambda)\cdot[q\cdot\mathbb{E}[\sigma^{*}(1,k)]+(1-q)\cdot\mathbb{E}[\sigma^{*}(-1,k)]]}{1+C}\ge\frac{q+C/2}{1+C}>1/2
\]
 since $q>1/2.$ But this means $\lambda\phi_{\sigma^{*}}(x)+(1-\lambda)q>1/2,$
contradiction.
\end{proof}

\begin{proof}[Proof of Proposition \ref{prop:informative}]
Let $\lambda<\lambda'<\lambda^{*}$ and suppose that $x^{*}$ is a
steady state under $\lambda$. We want to show that there exists a
steady state $(x')^{*}>x^{*}$ under $\lambda'$.

As in the proof of Part 3 of Theorem \ref{thm:equilibrium_ss_by_lambda},
let $\phi_{\sigma}(x,\lambda)$ be the inflow accuracy function with
its dependence on $\lambda$. Under the  majority rule, for any $0<C<K$ and for any private signal realization, the expected number of positive signals endorsed  is weakly increasing and nonconstant in the number of positive sampled signals. Hence the binomial expectation defining $\phi_{\sigma^{\text{maj}}}$ is strictly increasing in sampling accuracy, so $\phi_{\sigma^{\text{maj}}}(x,\lambda)$
is strictly increasing in $\lambda$ when $x>q$. By Theorem \ref{thm:equilibrium_ss_by_lambda},
we have $x^{*}>q$ and therefore 
$
x^{*}=\phi_{\sigma^{\text{maj}}}(x^{*},\lambda)<\phi_{\sigma^{\text{maj}}}(x^{*},\lambda').
$

Since $\phi_{\sigma^{\text{maj}}}(1,\lambda')<1,$ by the intermediate
value theorem there exists $(x')^{*}\in(x^{*},1)$ such that $
\phi_{\sigma^{\text{maj}}}((x')^{*},\lambda')=(x')^{*}.$ This is a steady state under $\lambda'$ that is greater than $x^{*}.$\end{proof}

\newpage
\setcounter{page}{1}
\begin{center}
{ \Large\textbf{Online Appendix}}{\Large\par}
\end{center}
\section{Omitted Proofs}

\begin{proof}[Proof of Lemma~\ref{lem:1_implies_2}]
Suppose by way of contradiction that such a fixed point $x$ exists. Let $y=\lambda x+(1-\lambda)q$ be the sampling accuracy, and note $x\le y\le q$, with $y>1/2$. Write $a_k=\mathbb{E}[\sigma(1,k)]$ and $b_k=\mathbb{E}[\sigma(-1,k)]$. The hypothesis and state symmetry imply $a_k\ge b_k$ for every $k$: if $k\ge K/2$, then $a_k=U_k\ge b_k$, while if $k<K/2$, then $b_k=C-a_{K-k}=C-U_{K-k}=L_k\le a_k$. Therefore the expected number $S$ of positive signals endorsed when $\omega=1$ satisfies
\begin{equation}
\label{eq:S_bound}
S=q\sum_k P_k a_k+(1-q)\sum_k P_k b_k\ge y\sum_k P_k a_k+(1-y)\sum_k P_k b_k. 
\end{equation} 
For fixed $y$, the right-hand side of Equation (\ref{eq:S_bound}) is minimized, subject to the hypotheses, by setting $a_k=U_k$ for all $k<K/2$ for which $K-2k>1$; when $K-2k=1$ the value of $a_k$ is irrelevant. Indeed, the partial derivative of the right-hand side with respect to $a_k$ is $yP_k-(1-y)P_{K-k}=yP_k(1-(y/(1-y))^{K-2k-1})$. Thus the right-hand side is minimized  by the strategy that endorses $U_k$ positive signals after private signal $1$ and $L_k$ positive signals after private signal $-1$, for any number $k$ of positive signals in the sample.

Let $B\sim\operatorname{Binom}(K,y)$. The difference between this minimum and $yC$ is
\[
y\,\mathbb E[U_B]+(1-y)\mathbb E[L_B]-yC
=
y\,\mathbb E[U_B-C]+(1-y)\mathbb E[L_B].
\]
Since
\[
C-U_B=\sum_{m=0}^{C-1}\mathbf 1\{B\le m\},
\qquad
L_B=\sum_{m=0}^{C-1}\mathbf 1\{B\ge K-m\},
\]
this difference equals
\[
\sum_{m=0}^{C-1}
\Big((1-y)\Pr[B\ge K-m]-y\Pr[B\le m]\Big).
\] 

Expanding the two tails, this is equal to
$$\sum_{m=0}^{C-1}
\left[
\sum_{b=K-m}^{K}(1-y)\Pr[B=b]
-
\sum_{b=0}^{m}y\Pr[B=b]
\right].$$
Let $\widetilde B\sim\operatorname{Binom}(K+1,y)$ and write
$Q_j=\Pr[\widetilde B=j]$. Using
\[
(1-y)\Pr[B=b]=\frac{K+1-b}{K+1}Q_b,\qquad
y\Pr[B=b]=\frac{b+1}{K+1}Q_{b+1},
\]
the expression becomes
\[
\frac{1}{K+1}
\sum_{m=0}^{C-1}
\left[
\sum_{b=K-m}^{K}(K+1-b)Q_b
-
\sum_{b=0}^{m}(b+1)Q_{b+1}
\right].
\]
In the first inner sum, set $c=K+1-b$; in the second, set $c=b+1$. Then
\[
\sum_{b=K-m}^{K}(K+1-b)Q_b
=
\sum_{c=1}^{m+1}cQ_{K+1-c},
\]
and
\[
\sum_{b=0}^{m}(b+1)Q_{b+1}
=
\sum_{c=1}^{m+1}cQ_c.
\]
Therefore the expression is
\[
\frac{1}{K+1}
\sum_{m=0}^{C-1}
\sum_{c=1}^{m+1}
c\left(Q_{K+1-c}-Q_c\right).
\]
Switching the order of summation, each fixed $c\in\{1,\ldots,C\}$ appears for
$m=c-1,\ldots,C-1$, i.e. for $C-c+1$ values of $m$. Hence the expression equals
\[
\frac{1}{K+1}\sum_{c=1}^{C} c(C-c+1)
\Big(\Pr[\operatorname{Binom}(K+1,y)=K+1-c]
-
\Pr[\operatorname{Binom}(K+1,y)=c]\Big).
\]

Since $y>1/2$, the binomial mass is larger at $K+1-c$ than at $c$ whenever $c<(K+1)/2$. For the summand with the index $c>(K+1)/2$, pair it with the summand with index $K+1-c<c$; the  binomial differences in the two summands are opposite, while the coefficient difference is
\[
(K+1-c)(C-K-1+c+1)-c(C-c+1)=(K-C)(2c-K-1)>0,
\]
so sum of the two paired summands is  positive. Unpaired summands are positive, and at least one such summand is strictly positive because $0<C<K$. Thus the minimized right-hand side of Equation (\ref{eq:S_bound}) is strictly larger than $yC$. Hence $S>yC$, and so
\[
\phi_{\sigma}(x)>\frac{q+yC}{1+C}\ge \frac{y+yC}{1+C}=y\ge x,
\]
contradicting that $x$ is a fixed point.
\end{proof}

\begin{proof}[Proof of Lemma~\ref{lem:maj_shape}]
Let $p$ denote sampling accuracy and write
\[
A(p):=\sum_{k=0}^K \binom Kk p^k(1-p)^{K-k}\beta_k,
\qquad 
 \beta_k:=q\,\mathbb{E}[\sigma^{\mathrm{maj}}(1,k)]+(1-q)\,\mathbb{E}[\sigma^{\mathrm{maj}}(-1,k)].
\]
Then $\phi_{\sigma^{\mathrm{maj}}}(x)=(q+A(\lambda x+(1-\lambda)q))/(1+C)$, so its curvature in $x$ is the same as the curvature of $A(p)$ in $p$, up to the positive factor $\lambda^2/(1+C)$ when $\lambda>0$. If $\lambda=0$, the function is constant and the claim is immediate.

Recall $L_k=\max\{0,C+k-K\}$ and $U_k=\min\{k,C\}$. Under the majority rule,
\[
 \beta_k=
 \begin{cases}
 L_k, & k<K/2,\\
 qU_k+(1-q)L_k, & k=K/2 \text{ if } K \text{ is even},\\
 U_k, & k>K/2.
 \end{cases}
\]
Using an identity for the second derivative of the Bernstein polynomial,
\[
A''(p)=K(K-1)\sum_{j=0}^{K-2}\binom{K-2}{j}p^j(1-p)^{K-2-j}\Delta^2 \beta_j,
\]
where $\Delta^2\beta_j=\beta_{j+2}-2\beta_{j+1}+\beta_j$. Let $d_j:=\beta_{j+1}-\beta_j$ for $j=0,\ldots,K-1$. A direct calculation from the definition of majority rule gives the following first-difference sequences. If $K=2h+1$ is odd and $C\le h$, then
\[
(d_j)_{j=0}^{K-1}=(0,\ldots,0,C,0,\ldots,0).
\]
If $K=2h+1$ is odd and $C>h$, writing $r:=K-C$, then
\[
(d_j)_{j=0}^{K-1}=(0,\ldots,0,1,\ldots,1,r+1,1,\ldots,1,0,\ldots,0).
\]
If $K=2h$ is even and $C\le h$, then
\[
(d_j)_{j=0}^{K-1}=(0,\ldots,0,qC,(1-q)C,0,\ldots,0).
\]
Finally, if $K=2h$ is even and $C>h$, writing $r:=K-C$, then
\[
(d_j)_{j=0}^{K-1}=(0,\ldots,0,1,\ldots,1,1+qr,1+(1-q)r,1,\ldots,1,0,\ldots,0).
\]
Because $q>1/2$, each of these sequences is weakly increasing and then weakly decreasing. Hence the signs of $\Delta^2\beta_j=d_{j+1}-d_j$, after zeroes are omitted, consist of some positive signs followed by some negative signs.

Set $t=p/(1-p)$. Since $(1-p)^{K-2}>0$ on $(0,1)$,
\[
\frac{A''(p)}{K(K-1)}=(1-p)^{K-2}\sum_{j=0}^{K-2}\binom{K-2}{j}\Delta^2\beta_j\,t^j.
\]
The coefficients on the non-zero terms of the polynomial in $t$ have at most one sign change. By Descartes' rule of signs, it has at most one positive root. Thus $A''$ has at most one zero in $(0,1)$. Moreover, whenever both signs occur, the first nonzero coefficient is positive and the last nonzero coefficient is negative, so the only possible curvature switch is from convex to concave.

It remains to show that the possible switch occurs weakly before $p=1/2$. At $p=1/2$,
\[
\frac{2^{K-2}A''(1/2)}{K(K-1)}=\sum_{j=0}^{K-2}\binom{K-2}{j}\Delta^2\beta_j.
\]
When $K$ is odd, the positive and negative terms in the first-difference calculation above occur in symmetric pairs with equal binomial weights, so this sum is zero. When $K=2h$ is even, using the convention $\binom{K-2}{m}=0$ for $m\notin\{0,\ldots,K-2\}$, the same calculation gives
\[
\sum_{j=0}^{K-2}\binom{K-2}{j}\Delta^2\beta_j
=\eta(2q-1)\left[\binom{K-2}{h-2}-\binom{K-2}{h-1}\right]\le0,
\]
where $\eta=C$ if $C\le h$ and $\eta=K-C$ if $C>h$. The inequality follows because $\binom{K-2}{h-1}\ge \binom{K-2}{h-2}$. Therefore $A''(1/2)\le0$. Since $A''$ has at most one sign change and any sign change is from positive to negative, $A''(p)\le0$ for every $p\ge1/2$. Thus $\phi_{\sigma^{\mathrm{maj}}}$ is concave whenever sampling accuracy is at least $1/2$.
\end{proof}

\begin{proof}[Proof of Proposition \ref{prop:distributions}]
Throughout, write $\mathcal{F}_{t}$ for the sigma-field generated
by all signals, all sampling randomness, and all endorsement randomness
(including the mixing randomness of $\sigma$) up to time $t$. 
Fix any state-symmetric strategy $\sigma(s,k).$  
Let $\mathcal P_t$ be the signal pool after $t$ agents have acted, including seed signals and agents'  signals, and let $\rho_j(t)$ be the score of signal $j\in\mathcal P_t$ at the end of period $t.$ Let total score at the end of period $t$ be $S_t:=\sum_{j\in\mathcal P_t}\rho_j(t).$ If $S_0$ is the realized initial total seed score, then
\begin{equation}
S_t=S_0+(1+C)t.\label{eq:total-score}
\end{equation}
The pool size is $N_t=n_0+t$.

\textbf{Step 1: Sampling is affine preferential attachment (with a
vanishing $t$-dependence).}

Fix $t\ge0$ and a signal $j\in\mathcal P_t$ with score $\rho_j(t)$.
A sample slot at time $t+1$ samples signal $j$ with probability $\lambda\frac{\rho_j(t)}{S_t}+(1-\lambda)\frac{1}{N_t}$. Define the \emph{time-$t$ attractiveness offset}
\begin{equation}
\delta_t:=\frac{1-\lambda}{\lambda}\cdot\frac{S_t}{N_t}.\label{eq:delta-t}
\end{equation}
Using \eqref{eq:total-score}, this is deterministic conditional on the seed realization and satisfies
\[
\delta_t=\frac{1-\lambda}{\lambda}\frac{S_0+(1+C)t}{n_0+t}=\delta+O\!\left(\frac{1}{t}\right),\qquad\delta=\frac{(1-\lambda)(1+C)}{\lambda}.
\]
The sampling probability can be rewritten exactly as
\begin{equation}
\Pr(\text{sample }j)=\frac{\rho_j(t)+\delta_t}{\sum_{\ell\in\mathcal P_t}(\rho_\ell(t)+\delta_t)}.\label{eq:affine-PA}
\end{equation}

The finite seed pool contributes only finitely many initial nodes. It may affect finite-time probabilities and the scores of those seed signals, but it has zero limiting empirical mass among signals. The argument below therefore tracks score counts for endogenously arriving signals; seed terms enter the normalized recursions only as $O(1/t)$ perturbations. Thus each sample slot is a preferential
attachment draw with attractiveness determined by the affine function  $\rho\mapsto\rho+\delta_{t}$, where $\delta_{t}\to\delta$.

\textbf{Step 2: Probabilities of endorsing signals.}

Define the (time-$t$) affine weights $w_{j}(t):=\rho_{j}(t)+\delta_{t}$. Define the total attractiveness among type-$\theta$ signals $W_{\theta}(t):=\sum_{i:\text{sign}(s_{i})=\theta}w_{i}(t)$ for $\theta\in\{+,-\}.$
At time $t+1$, let $k_{t+1}$ be the number of positive signals in
the size-$K$ sample, let $B_{t+1}:=\mathbf{1}\{s_{t+1}=1\}$, and let
$A_{t+1}\sim\sigma(s_{t+1},k_{t+1})$ be the (random) number of positive
signals endorsed.

The next lemma shows that the within-type identity of a sampled signal
is determined solely by the affine weights, regardless of the strategy.
\begin{lem}
\label{lem:within-type-slot} Fix $t\ge K$ and condition on $\mathcal{F}_{t}$.
For each $1\le J\le K$ and $j$ with $s_{j}=1,$ 
\[
\Pr(\text{signal }j\text{ in slot }J\mid\text{positive signal in slot }J,\mathcal{F}_{t})=\frac{w_{j}(t)}{W_{+}(t)}.
\]
Moreover, let $N_{j}$ be the number of times the positive signal $s_{j}$
appears among the $K$ sampled slots and let $k$ be the number of
sample slots with positive signals. Then for any $k\ge1$, 
\[
\mathbb{E}\!\left[\frac{N_{j}}{k}\,\middle|\,k,\mathcal{F}_{t}\right]=\frac{w_{j}(t)}{W_{+}(t)}.
\]
The same statements hold with $+$ replaced by $-$. 
\end{lem}
\begin{proof}
Conditional on $\mathcal{F}_{t}$, each slot is an independent draw
from \eqref{eq:affine-PA}. For positive signal $j$
\begin{align*}
\Pr(\text{signal }j\text{ in slot }J\mid\text{positive signal in slot }J,\mathcal{F}_{t}) & =\frac{\Pr(\text{signal }j\text{ in slot }J\mid\mathcal{F}_{t})}{\Pr(\text{positive signal in slot }J\mid\mathcal{F}_{t})}\\
 & =\frac{w_{j}(t)/\sum_{\ell\in\mathcal P_t}w_{\ell}(t)}{W_{+}(t)/\sum_{\ell\in\mathcal P_t}w_{\ell}(t)}=\frac{w_{j}(t)}{W_{+}(t)}.
\end{align*}
For the second claim, conditional on $(k,\mathcal{F}_{t})$, the identities
of the $k$ positive sample signals are i.i.d.\ with the above
distribution among the positive signals, by exchangeability of the
$K$ slots. Hence the expected fraction of those $k$ slots that signal
$j$ occupies is precisely $w_{j}(t)/W_{+}(t)$. The argument for
negative signals is identical.
\end{proof}
\textbf{Step 3: Conditional drift of the score counts under a general
strategy.}

Fix a positive signal $j$ and let $\Delta_{j}(t+1)$ be its total
score increment at time $t+1$ (the number of times it is endorsed from
the sample, which may be 2 or larger if it appears multiple times
there). Let $N_{j}$ and $k=k_{t+1}$ be as in Lemma~\ref{lem:within-type-slot}.

Given the realized sample and the endorsement decision $A_{t+1}=a$ with
$1\le a\le k$, the agent selects $a$ of the $k$ positive sample slots
uniformly at random, so
\[
\mathbb{E}[\Delta_{j}(t+1)\mid\text{sample},\,A_{t+1}=a]=a\,\frac{N_{j}}{k}.
\]
When $a=0$ (or when $k=0$, which forces $a=0$ by feasibility),
$\Delta_{j}(t+1)=0$. Since the right-hand side is linear in $a$,
\begin{equation}
\mathbb{E}[\Delta_{j}(t+1)\mid\text{sample},\,s_{t+1}]=\bar{\sigma}(s_{t+1},k)\,\frac{N_{j}}{k},\label{eq:Delta-general-feed}
\end{equation}
with the convention $0\cdot(N_{j}/0):=0$ when $k=0$ (vacuous since
$\bar{\sigma}(s,0)=0$).

Conditional on $(k,\mathcal{F}_{t})$, the identities of the signals
in the sample are independent of the realization of the private
signal $s_{t+1}$ and the randomization of the strategy $\sigma$.
Applying Lemma~\ref{lem:within-type-slot} and the tower property,
conditioning on $(k,\mathcal{F}_{t})$ we can factorize 
\begin{equation}\label{eq:Delta-given-k}
\mathbb{E}[\Delta_{j}(t+1)\mid k,\mathcal{F}_{t}]  =\mathbb{E}[\bar{\sigma}(s_{t+1},k)\mid k]\cdot\frac{w_{j}(t)}{W_{+}(t)}\nonumber  =\bar{a}(k)\cdot\frac{w_{j}(t)}{W_{+}(t)},
\end{equation}
where  $\bar{a}(k):=q\,\bar{\sigma}(1,k)+(1-q)\,\bar{\sigma}(-1,k)\label{eq:abar-k}$ 
is the expected number of positive endorsements when there are $k$ positive
sampled signals, averaged over the private-signal realization. Taking the
expectation over $k$, 
\begin{equation}
\mathbb{E}[\Delta_{j}(t+1)\mid\mathcal{F}_{t}]=a_{t}^{+}\cdot\frac{w_{j}(t)}{W_{+}(t)},\qquad a_{t}^{+}:=\sum_{k=0}^{K}P_{k}(p_t)\,\bar{a}(k),\label{eq:Delta-factor}
\end{equation}
where $p_{t}:=W_{+}(t)/\sum_{\ell\in\mathcal P_t}w_{\ell}(t)$ and $P_k(p):=\binom{K}{k}p^{k}(1-p)^{K-k}$.

\medskip{}

Now let $Z_{+,r}(t)$ be the number of positive signals with score
exactly $r$ at time $t$. Each endorsement of a positive score-$r$ signal
moves it to a higher score class, and each endorsement of a score-$(r-1)$
signal may bring it to score $r$. Also, a new positive signal arrives
with probability $q$ and enters at score $1$. The conditional drift
of $Z_{+,r}$ is 
\begin{equation}
\mathbb{E}\!\left[Z_{+,r}(t+1)-Z_{+,r}(t)\mid\mathcal{F}_{t}\right]=q\,\mathbf{1}_{\{r=1\}}+\frac{a_{t}^{+}}{W_{+}(t)}\Big((r-1+\delta_{t})Z_{+,r-1}(t)-(r+\delta_{t})Z_{+,r}(t)\Big)+\epsilon_{t}^{(r)},\label{eq:master-plus-random}
\end{equation}
where $\epsilon_{t}^{(r)}$ accounts for multi-increment events (a
signal's score increasing by $\ge2$ in a single period due to appearing
in multiple endorsed sample slots). Three mechanisms contribute:

\emph{Inflow overcounting (from class $r-1$):} a score-$(r-1)$ signal
that gains $\ge2$ points is counted 
in the inflow term of class $r$, but it overshoots to score $r+1$
or higher and should not contribute to $Z_{+,r}(t+1)$.

\emph{Outflow overcounting:} a score-$r$ signal that gains $\ge2$
points leaves class $r$ (counted by the $(r+\delta_{t})Z_{+,r}$
term) but arrives in class $r+2$ or higher rather than class $r+1$.

\emph{Skipped-class inflow:} a score-$(r-2)$ or lower signal that
gains $\ge2$ points may enter class $r$ directly, bypassing class
$r-1$.

For fixed score $r$, a signal $j$ with $\rho_{j}(t)=r$ is sampled
into two or more of the $K$ sample slots with probability at most $\binom{K}{2}(w_{j}(t)/\sum_{\ell}w_{\ell}(t))^{2}=O(K^{2}(r+\delta)^{2}/t^{2})$.
Summing over all $Z_{+,r}(t)\le t$ signals of score $r$ gives total
multi-increment probability $O(K^{2}(r+\delta)^{2}/t)$. Each such
event changes $Z_{+,r}$ by at most $O(C)$, so the outflow overcounting
contributes at most $O(K^{2}C(r+\delta)^{2}/t)$ to $\epsilon_{t}^{(r)}$.
For inflow overcounting  and skipped-class inflow, the same bound applies to signals of scores
$1,\dots,r-1$: for each score $r'\le r-1$, multi-increment probability
is $O(K^{2}(r'+\delta)^{2}/t)$. Summing over the finitely many classes
$r'=1,\dots,r-1$ (with $r$ fixed) gives a total inflow overcounting and skipped-class inflow
contribution that is also $O(1/t)$, with constants depending on $r$,
$K$, $C$, and $\delta$. Hence $|\epsilon_{t}^{(r)}|=O(1/t)$ for
each fixed $r$. After dividing by $t+1$ in the normalized recursion
below, this becomes $O(1/t^{2})=o(1/t)$ and is absorbed into the
error term.

An identical argument gives the corresponding equation for the negative
class, with $q$ replaced by $1-q$, $a_{t}^{+}$ replaced by $a_{t}^{-}:=C-a_{t}^{+}$,
and $W_{+}(t)$ replaced by $W_{-}(t)$.

\textbf{Step 4: Coefficients are asymptotically constant.}

\emph{(i) Limit of the expected endorsement inflows via score-mass accounting.}
Let $X_{+}(t):=\sum_{i\in\mathcal P_t:\,s_{i}=1}\rho_{i}(t)$ be total positive
score. For $t\ge0$, each period adds $B_{t+1}$ (a new positive
signal contribution) plus $A_{t+1}$ (positive endorsements) to $X_{+}(t)$,
while total score $S_{t}$ increases by $1+C$. Since $x(t)=X_{+}(t)/S_{t}\to x^{*}$
and $S_{t}/t\to1+C$, the average positive inflow per period must
satisfy 
\begin{equation}
x^{*}=\frac{q+m_{+}}{1+C},\qquad\text{so}\qquad m_{+}=(1+C)x^{*}-q,\label{eq:mass-accounting}
\end{equation}
and then $m_{-}=C-m_{+}$. 

Equivalently, $m_{+}$ can be expressed directly via the strategy
as $m_{+}=\sum_{k=0}^{K}P_{k}^{*}\,\bar{a}(k)$,
where $P_{k}^{*}=\binom{K}{k}(p^{*})^{k}(1-p^{*})^{K-k}$ with $p^{*}:=\lambda x^{*}+(1-\lambda)q$.

\emph{(ii) Convergence of $a_{t}^{+}\to m_{+}$.} On the steady-state
event, $x(t)\to x^{*}$ and the empirical fraction of positive signals
$N_{+}(t)/t\to q$ by the law of large numbers. Combined with $\delta_{t}\to\delta$,
this implies $p_{t}\to p^{*}$, so $P_{k}(t)\to P_{k}^{*}$ and hence
$
a_{t}^{+}=\sum_{k}P_{k}(t)\bar{a}(k)\xrightarrow{}m_{+}=\sum_{k}P_{k}^{*}\bar{a}(k).
$

\emph{(iii) Limit of $W_{\theta}(t)/t$ and the effective attachment
rates.} Write $N_{+}(t):=N_0^++\sum_{i=1}^t\mathbf{1}\{s_{i}=1\}$ and
$N_{-}(t)=n_0+t-N_{+}(t)$. Since $W_{+}(t)=X_{+}(t)+\delta_{t}N_{+}(t)$
and $W_{-}(t)=(S_{t}-X_{+}(t))+\delta_{t}N_{-}(t)$, we have on the
steady-state event: $\frac{X_{+}(t)}{t}\to(1+C)x^{*}$, $\frac{N_{+}(t)}{t}\to q$, $\delta_{t}\to\delta$. Hence, 
\begin{equation}
\frac{W_{+}(t)}{t}\to(1+C)x^{*}+\delta q=m_{+}+q(1+\delta)=:\tau_{+},\qquad\frac{W_{-}(t)}{t}\to m_{-}+(1-q)(1+\delta)=:\tau_{-}.\label{eq:tau-limits}
\end{equation}
For all sufficiently large $t$, $W_{\theta}(t)>0$ since $\tau_{\theta}>0$
(which holds because $m_{\theta}>0$ and $q_{\theta}>0$).

Define the limiting effective attachment rates 
\begin{equation}
\alpha_{\theta}:=\frac{m_{\theta}}{\tau_{\theta}}\in(0,1),\qquad\theta\in\{+,-\}.\label{eq:alpha-def}
\end{equation}

\textbf{Step 5: Convergence of empirical frequencies via a (stochastic) recursion.}

Fix $\theta\in\{+,-\}$ and $r\ge1$. Let $Z_{\theta,r}(t)$ be the
number of type-$\theta$ signals with score exactly $r$ at time $t$,
and define the normalized empirical frequency $
U_{\theta,r}(t):=\frac{Z_{\theta,r}(t)}{t}$ for $t\ge K.$
Let $\Delta_{\theta,r}(t+1):=Z_{\theta,r}(t+1)-Z_{\theta,r}(t)$.
Then $\Delta_{\theta,r}(t+1)$ is bounded in absolute value by $C+1$
(at most one new signal enters score-$1$, and at most $C$ endorsements
move signals across score classes).

Define the martingale difference $\xi_{\theta,r}(t+1):=\Delta_{\theta,r}(t+1)-\mathbb{E}[\Delta_{\theta,r}(t+1)\mid\mathcal{F}_{t}]$, so that $\mathbb{E}[\xi_{\theta,r}(t+1)\mid\mathcal{F}_{t}]=0$ and
$|\xi_{\theta,r}(t+1)|\le2(C+1)$ a.s. From the identity 
\[
U_{\theta,r}(t+1)-U_{\theta,r}(t)=\frac{\Delta_{\theta,r}(t+1)-U_{\theta,r}(t)}{t+1}
\]
and the conditional drift \eqref{eq:master-plus-random} (or its $-$
analogue), we obtain the recursion 
\begin{equation}
U_{\theta,r}(t+1)=U_{\theta,r}(t)+\frac{1}{t+1}\Big(A_{\theta,r}(t)-B_{\theta,r}(t)\,U_{\theta,r}(t)+\xi_{\theta,r}(t+1)\Big)+\varepsilon_{\theta,r}(t),\label{eq:U-recursion}
\end{equation}
where 
$$
A_{\theta,r}(t)  :=q_{\theta}\mathbf{1}_{\{r=1\}}+\underbrace{\frac{t\,a_{t}^{\theta}}{W_{\theta}(t)}}_{=:\alpha_{\theta}(t)}\,(r-1+\delta_{t})\,U_{\theta,r-1}(t), \qquad
B_{\theta,r}(t)  :=1+\alpha_{\theta}(t)\,(r+\delta_{t}),
$$
and $\varepsilon_{\theta,r}(t)=\epsilon_{t}^{(r)}/(t+1)=o(1/t)$ absorbs
the multi-increment error from \eqref{eq:master-plus-random}. By
Step~4 and $\delta_{t}\to\delta$, we have $\alpha_{\theta}(t)\to\alpha_{\theta}$
and $B_{\theta,r}(t)\to1+\alpha_{\theta}(r+\delta)$.

We now record a one-dimensional convergence lemma for recursions of
the form \eqref{eq:U-recursion}.
\begin{lem}
\label{lem:almost-linear} Let $(u_{t})_{t\ge t_{0}}$ be a bounded
sequence, and let $(a_{t})$ and $(b_{t})$ be real sequences with
$b_{t}\to b>0$ and $a_{t}\to a$ as $t\to\infty$.

\noindent\textbf{(Deterministic)} If 
$u_{t+1}=u_{t}+\frac{1}{t+1}\big(a_{t}-b_{t}u_{t}\big)+\varepsilon_{t}$ with $\varepsilon_{t}=o\!\left(\frac{1}{t}\right)$,
then $u_{t}\to a/b$.

\smallskip{}
\noindent\textbf{(Stochastic)} Suppose $(\mathcal{F}_{t})$ is a filtration
and $(\xi_{t+1})$ is a martingale difference sequence with $\mathbb{E}[\xi_{t+1}\mid\mathcal{F}_{t}]=0$
and $|\xi_{t+1}|\le B$ a.s. If 
\[
u_{t+1}=u_{t}+\frac{1}{t+1}\big(a_{t}-b_{t}u_{t}+\xi_{t+1}\big)+\varepsilon_{t}\qquad\text{with}\qquad\varepsilon_{t}=o\!\left(\frac{1}{t}\right),
\]
and $b_{t}\to b>0$, $a_{t}\to a$ almost surely (or deterministically),
then $u_{t}\to a/b$ almost surely (hence also in probability). 
\end{lem}
\begin{proof}
Write $v_{t}:=u_{t}-a/b$. Then 
\begin{equation}
v_{t+1}=v_{t}-\frac{b_{t}}{t+1}v_{t}+\frac{\eta_{t}}{t+1}+\frac{\xi_{t+1}}{t+1}+\varepsilon_{t},\qquad\eta_{t}:=a_{t}-a+\frac{a}{b}(b-b_{t}).\label{eq:v-rec}
\end{equation}

\smallskip{}
\emph{Deterministic case.} If $\xi_{t+1}\equiv0$, then $\eta_{t}\to0$
and \eqref{eq:v-rec} is a stable ``Euler'' discretization of $\dot{v}=-bv$.
A direct product-sum expansion yields 
\[
v_{t}=\Big(\prod_{s=t_{0}}^{t-1}\big(1-\tfrac{b_{s}}{s+1}\big)\Big)v_{t_{0}}+\sum_{s=t_{0}}^{t-1}\Big(\prod_{u=s+1}^{t-1}\big(1-\tfrac{b_{u}}{u+1}\big)\Big)\Big(\frac{\eta_{s}}{s+1}+\varepsilon_{s}\Big).
\]
Since $b_{s}\to b>0$, the product $\prod_{s=t_{0}}^{t-1}(1-b_{s}/(s+1))$
decays as $O(t^{-b/2})$ for large $t$, killing the initial condition.
For the sum, given any $\epsilon>0$, choose $S$ such that $|\eta_{s}|<\epsilon$
and $|\varepsilon_{s}|<\epsilon/s$ for $s\ge S$. The portion $s<S$
is killed by the product decay, and the tail $s\ge S$ is bounded
by $\epsilon\sum_{s\ge S}\prod_{u=s+1}^{t-1}(1-b_{u}/(u+1))/(s+1)\lesssim\epsilon/b$.
Letting $\epsilon\to0$ gives $v_{t}\to0$, i.e.\ $u_{t}\to a/b$.

\smallskip{}
\noindent\emph{Stochastic case.} The recursion \eqref{eq:v-rec} is a Robbins--Monro
stochastic approximation of the form 
\[
v_{t+1}=v_{t}+\frac{1}{t+1}\bigl(h_{t}(v_{t})+\xi_{t+1}\bigr)+\varepsilon_{t},\qquad h_{t}(v):=-b_{t}v+\eta_{t},
\]
with limiting drift $h(v)=-bv$. The ODE $\dot{v}=h(v)=-bv$ has the
globally asymptotically stable equilibrium $v^{*}=0$ (since $b>0$).
We verify the hypotheses for almost-sure convergence of stochastic
approximation (\cite{borkar2009stochastic} Chapter 2):
\begin{itemize}
\item[\textbf{(A1)}] Lipschitz continuity follows from linearity.
\item[\textbf{(A2)}] The step sizes $\gamma_{t}=1/(t+1)$ satisfy $\sum\gamma_{t}=\infty$
and $\sum\gamma_{t}^{2}<\infty$.
\item[\textbf{(A3)}] $\mathbb{E}[\xi_{t+1}\mid\mathcal{F}_{t}]=0$
and $|\xi_{t+1}|\le B$.
\item[\textbf{(A4)}] $(v_{t})$ is bounded since $(u_{t})$ is bounded
by hypothesis.

\end{itemize}
\noindent 
The vanishing measurement error condition from Section 2.2 also holds because $\varepsilon_{t}=o(\gamma_{t})$ and $\sup_{|v|\le M}|h_{t}(v)-h(v)|=|\eta_{t}|+|b_{t}-b|\cdot M\to0$
a.s.\ for each $M$ since $\eta_{t}\to0$ and $b_{t}\to b$. By the convergence theorem for stochastic approximation
with asymptotically stable equilibria (the extension of Theorem~2.1 to vanishing measurement error in Section~2.2 of \cite{borkar2009stochastic}), $v_{t}\to0$
almost surely, i.e.\ $u_{t}\to a/b$ almost surely.
\end{proof}
\textbf{Step 6: Apply Lemma~\ref{lem:almost-linear} by induction
over $r$ to find the stationary distribution.}

Fix $\theta\in\{+,-\}$ and define $q_{\theta}$ and $\alpha_{\theta}$
as above. We show that $U_{\theta,r}(t)\to f_{\theta,r}$ in probability
for each fixed $r$, where $\{f_{\theta,r}\}_{r\ge1}$ solves the
stationary equation system.

\emph{Base case $r=1$.} In \eqref{eq:U-recursion}, $U_{\theta,0}(t)\equiv0$,
so $A_{\theta,1}(t)=q_{\theta}$ and $B_{\theta,1}(t)\to1+\alpha_{\theta}(1+\delta)$.
Lemma~\ref{lem:almost-linear} (stochastic case) yields 
$U_{\theta,1}(t)\xrightarrow{a.s.}f_{\theta,1}:=\frac{q_{\theta}}{1+\alpha_{\theta}(1+\delta)}.$

\emph{Induction step.} Assume $U_{\theta,r-1}(t)\xrightarrow{a.s.}f_{\theta,r-1}$.
Then using $\alpha_{\theta}(t)\to\alpha_{\theta}$ and $\delta_{t}\to\delta$,
we have 
\[
A_{\theta,r}(t)=q_{\theta}\mathbf{1}_{\{r=1\}}+\alpha_{\theta}(t)(r-1+\delta_{t})U_{\theta,r-1}(t)\xrightarrow{a.s.}a_{\theta,r}:=\alpha_{\theta}(r-1+\delta)f_{\theta,r-1},
\]
since $r\ge 2$ in the induction step, $\mathbf{1}_{\{r=1\}}=0$. Also, 
$
B_{\theta,r}(t)=1+\alpha_{\theta}(t)(r+\delta_{t})\xrightarrow{}b_{\theta,r}:=1+\alpha_{\theta}(r+\delta).
$
Applying Lemma~\ref{lem:almost-linear} (stochastic case) to \eqref{eq:U-recursion}
yields 
\[
U_{\theta,r}(t)\xrightarrow{a.s.}f_{\theta,r}:=\frac{a_{\theta,r}}{b_{\theta,r}}=\frac{\alpha_{\theta}(r-1+\delta)f_{\theta,r-1}}{1+\alpha_{\theta}(r+\delta)}.
\]
Rearranging gives the stationary  equation system
\begin{equation}
f_{\theta,r}=q_{\theta}\mathbf{1}_{\{r=1\}}+\alpha_{\theta}\Big((r-1+\delta)f_{\theta,r-1}-(r+\delta)f_{\theta,r}\Big),\qquad r\ge1,\label{eq:stationary}
\end{equation}
with the convention $f_{\theta,0}=0$.

Since $U_{\theta,r}(t)\to f_{\theta,r}$ for each fixed $r$ and $\sum_{r\ge1}U_{\theta,r}(t)=N_{\theta}(t)/t\to q_{\theta}$,
Fatou's lemma gives $\sum_{r\ge1}f_{\theta,r}\le q_{\theta}$. To
establish equality (no escape of mass to infinity), note that for
any $R\ge1$, 
\[
\sum_{r>R}U_{\theta,r}(t)\le\frac{1}{R}\sum_{r\ge1}r\,U_{\theta,r}(t)=\frac{\text{total type-}\theta\text{ score}}{Rt}.
\]
On the steady-state event, the total type-$\theta$ score divided
by $t$ converges to a finite constant (e.g., $(1+C)x^{*}$ for type~$+$),
so the right-hand side is $O(1/R)$ uniformly in large~$t$. Taking
$R\to\infty$ shows $\limsup_{t\to\infty}\sum_{r>R}U_{\theta,r}(t)\to0$,
which together with pointwise convergence gives $\sum_{r\ge1}f_{\theta,r}=q_{\theta}$.

The within-type empirical distribution therefore satisfies $g_{\theta}(r)=\frac{f_{\theta,r}}{q_{\theta}}$.

\textbf{Step 7: Solve the stationary equation system.}

Let $\tau_{\theta}=m_{\theta}+q_{\theta}(1+\delta)$ and $\alpha_{\theta}=m_{\theta}/\tau_{\theta}$.
Define $\gamma_{\theta}:=1+\tau_{\theta}/m_{\theta}=2+\frac{q_{\theta}(1+\delta)}{m_{\theta}}$,
which is \eqref{eq:gamma-def}. From \eqref{eq:stationary}, for $r\ge2$ we obtain the ratio recursion
\[
f_{\theta,r}=f_{\theta,r-1}\cdot\frac{r-1+\delta}{r+\delta+\tau_{\theta}/m_{\theta}}=f_{\theta,r-1}\cdot\frac{r-1+\delta}{r+\gamma_{\theta}+\delta-1}.
\]
Iterating and normalizing gives the Gamma-function form \eqref{eq:gtheta}:
\[
g_{\theta}(r)=(\gamma_{\theta}-1)\,\frac{\Gamma(\gamma_{\theta}+\delta)}{\Gamma(1+\delta)}\cdot\frac{\Gamma(r+\delta)}{\Gamma(r+\gamma_{\theta}+\delta)}.
\]

\textbf{Step 8: Tail behavior.} Using $\Gamma(r+\delta)/\Gamma(r+\gamma_{\theta}+\delta)\sim r^{-\gamma_{\theta}}$
as $r\to\infty$ implies the desired tail distribution. Since $\gamma_{\theta}>2$
whenever $m_{\theta}>0$, the limiting within-type distribution has
finite mean.
\end{proof}

\begin{proof}[Proof of Corollary \ref{cor:tails}]
We have $\gamma_{-}-\gamma_{+}=(1+\delta)\left(\frac{1-q}{m_{-}}-\frac{q}{m_{+}}\right)=(1+\delta)\frac{(1-q)m_{+}-qm_{-}}{m_{+}m_{-}}.$
Since $m_{-}=C-m_{+}$, the numerator becomes $m_{+}-qC.$ Also, $m_{+}-qC=((1+C)x^{*}-q)-qC=(1+C)(x^{*}-q).$
Therefore, $\gamma_{-}-\gamma_{+}=(1+\delta)\frac{(1+C)(x^{*}-q)}{m_{+}m_{-}}$
where $m_{+},m_{-}>0.$ So, the sign of $\gamma_{-}-\gamma_{+}$ is
the same as the sign of $x^{*}-q$. 
\end{proof}

\begin{proof}[Proof of Proposition \ref{prop:changing_lambda}]
We will use the following simple monotonicity property of majority rule. Let $a_k:=q\,\mathbb E[\sigma^{\mathrm{maj}}(1,k)]
+(1-q)\mathbb E[\sigma^{\mathrm{maj}}(-1,k)]$
be the expected number of positive signals endorsed by the majority rule when the
sample contains \(k\) positive signals and \(\omega=1\). The sequence \((a_k)_{k=0}^K\) is
nondecreasing and nonconstant whenever \(0<C<K\). Hence $p\mapsto \sum_{k=0}^K \Pr[\operatorname{Binom}(K,p)=k]a_k$
is strictly increasing in $p$. In particular, for fixed \(x>q\),
\(\phi_{\sigma^{\mathrm{maj}}}^{\lambda}(x)\) is strictly increasing in \(\lambda\).

We first show that we can choose $t_0(n)$ such that $\sigma^{\text{maj}}$ is an equilibrium for $n$ sufficiently large and $x(n) \rightarrow \overline{x}$ in probability. The main step is the following lemma.
\begin{lem}
\label{lem:changing_lambda}Suppose $\lambda=0$ for the first $t_{0}$
periods and then $\lambda=1$ for all subsequent periods. There exists
a number $\bar{t}$ and a function $\bar{n}(t)$ so that for any $t_{0}\ge\bar{t}$
and $n\ge\bar{n}(t_{0})$, $\sigma^{\text{maj}}$ is an equilibrium
in a society with $n$ agents. Given any $\epsilon>0$, there exists
a number $\hat{t}$ and a function $\hat{n}(t)$ so that for any $t_{0}\ge\hat{t}$
and $n\ge\hat{n}(t_{0})$, we have $|x(n)-\overline{x}|<\epsilon$
under strategy $\sigma^{\text{maj}}$ with probability at least $1-\epsilon$.
\end{lem}

\begin{proof}
We first show the second claim. Let $\epsilon>0$. When $\lambda=0$ in all periods, the function
$\phi_{\sigma^{\mathrm{maj}}}(x)$ is constant. Let its unique fixed point
be $\underline{x}$. 
Lemma~\ref{lem:1_implies_2} applied to $\sigma^{\mathrm{maj}}$ implies
$\underline{x}>q$.  We will bound $x(t_{0})-\underline{x}$.

Since $z(t)=(N_0^++\operatorname{Binom}(t,q))/(n_0+t)$, the seed signals' contribution to $z(t)$ is $O(1/t)$. By the Chernoff bound, for any $\delta>0$ we can
choose a constant $B>0$ such that $|z(t)-q|<\delta$
with probability at least $1-2e^{-Bt}$. So we can choose $\hat{t}'$
such that this holds for all $t\geq\hat{t}'$ with probability at
least $1-\epsilon/4$. Now taking $\delta$ sufficiently small and
$\hat{t}$ sufficiently large (compared to $\hat{t}'$), by  Proposition~\ref{prop:conv} (since $\phi^0_{\sigma^{\mathrm{maj}}}$ is constant in $x$ with value $\underline{x}$, the stochastic approximation converges $x(t)\to\underline{x}$ a.s.), we have $|x(t_{0})-\underline{x}|<\epsilon/2$ with
probability at least $1-\epsilon/2$ for any $t_{0}\ge\hat{t}$.

Now for each $t_{0}\ge\hat{t},$ consider the infinite-horizon stochastic
process $x(t)$ that starts with $t_{0}$ periods of $\lambda=0$
and subsequently continued with $\lambda=1$ and $\sigma=\sigma^{\text{maj}}$.
We know $x(t)$ converges almost surely as $t\rightarrow\infty$ from
Theorem 2.1 of Chapter 2 of \citet{borkar2009stochastic}, which applies
as in Proposition~\ref{prop:conv} because $\sum_{t=t_{0}}^{\infty}\frac{1}{t}=\infty.$
We next show the steady state reached is $\overline{x}$ with probability
at least $1-\epsilon$.

We can condition on the event $|x(t_{0})-\underline{x}|<\epsilon/2$,
which occurs with probability at least $1-\epsilon/2$. We claim that
given this event, with probability at least $1-\epsilon/2$ there do not exist any $t_{2}>t_{1}>t_{0}$ such that $x(t_{1})>\underline{x}-\epsilon/2$ and
$x(t_{2})<\underline{x}-\epsilon$.

We will add the superscript $\lambda$ in  $\phi_{\sigma^{\text{maj}}}^{\lambda}$ to clarify the viral weight associated with the inflow accuracy function.  By the monotonicity observation at the start of the proof of Proposition \ref{prop:changing_lambda}, we have $\phi_{\sigma^{\mathrm{maj}}}^{1}(\underline{x})
>
\phi_{\sigma^{\mathrm{maj}}}^{0}(\underline{x})
=
\underline{x},$
where the strict inequality uses $\underline{x}>q$ and $0<C<K$.
So shrinking $\epsilon$ if necessary, we can choose $\delta>0$ so
that 
\begin{equation}
\phi_{\sigma^{\text{maj}}}^{1}(x)>\underline{x}+\delta\label{eq:definedelta}
\end{equation}
for $x\in[\underline{x}-\epsilon,\underline{x}].$ If there exist $t_2>t_1>t_0$ such that $x(t_{1})>\underline{x}-\epsilon/2$
and $x(t_{2})<\underline{x}-\epsilon$,
then take $t_2$ to be the \emph{first} time after $t_1$ that $x(t)<\underline{x}-\epsilon$, so that $x(t)\ge\underline{x}-\epsilon$ for all $t\in[t_1,t_2)$. Increasing $t_1$ if necessary we can assume that $x(t)\le\underline{x}$ for all $t_1\le t\le t_2$. (Increase $t_0$ so that $|x(t+1)-x(t)|<\epsilon/2$ for all $t\ge t_0$, which holds since $|x(t+1)-x(t)|\le B_3/(t+1)\to 0$ for some constant $B_3$ after fixing the finite seed pool.)  Then $x(t)\in[\underline{x}-\epsilon,\underline{x}]$ for all $t\in[t_1,t_2-1]$, so inequality~(\ref{eq:definedelta}) applies at every step.

Applying the shifted binomial Chernoff bound to $z(t)$ again, we can choose a constant
$B>0$ such that 
\begin{equation}
\max_{x\in[0,1]}|\phi_{\sigma^{\text{maj}},z(t)}^{1}(x)-\phi_{\sigma^{\text{maj}}}^{1}(x)|<\delta\label{eq:chernoff2}
\end{equation}
with probability at least $1-2e^{-Bt}$ for $t$ sufficiently large.
Increasing $t_{0}$ if necessary, we can assume the inequality (\ref{eq:chernoff2})
holds for all $t\geq t_{0}$ with probability at least $1-\epsilon/4$.
We also condition on this event.

As in the proof of Lemma~\ref{lem:drift_right}, we can write $\mathbf{y}(t)$ as a stochastic-approximation recursion with drift $h(\mathbf{y}(t))$, martingale difference term $M(t+1)$, and summable perturbations from the initial seed pool. The scaled martingale increments satisfy $|M(t)|<B_4/t$ for some constant $B_4$ and all large $t$. So by Theorem C.7 from Appendix
C of \citet{borkar2009stochastic}, for any $\alpha>0$ and any $t_{1}$,
\begin{equation}
\mathbb{P}\left(\sup_{t_{2}\in(t_{1},\infty)}\left|\sum_{i=t_{1}}^{t_{2}}M(i)\right|>\alpha\right)\leq4e^{-\frac{\alpha^{2}}{\sum_{i=t_{1}}^{\infty}B_4^{2}/i^{2}}}.\label{eq:martingaleconcentration2}
\end{equation}

We have 
\begin{align*}
x(t_{2})-x(t_{1}) & =\sum_{t=t_{1}}^{t_{2}-1}\frac{\phi_{\sigma,z(t)}(x(t))-x(t)}{t+1}+\sum_{t=t_{1}}^{t_{2}-1}M(t+1)\\
 & =\sum_{t=t_{1}}^{t_{2}-1}\frac{\phi_{\sigma,z(t)}(x(t))-\phi_{\sigma}(x(t))}{t+1}+\sum_{t=t_{1}}^{t_{2}-1}\frac{\phi_{\sigma}(x(t))-x(t)}{t+1}+\sum_{t=t_{1}}^{t_{2}-1}M(t+1)\\
 & \geq-\sum_{t=t_{1}}^{t_{2}-1}\delta\cdot\frac{1}{t+1}+\sum_{t=t_{1}}^{t_{2}-1}\delta\cdot\frac{1}{t+1}+\sum_{t=t_{1}}^{t_{2}-1}M(t+1)\text{ by inequalities }(\ref{eq:definedelta})\text{ and }(\ref{eq:chernoff2})\\
 & =\sum_{t=t_{1}}^{t_{2}-1}M(t+1).
\end{align*}

Recall that $x(t_{2})-x(t_{1})\leq-\epsilon/2$. Combined with the display equation above, this requires $\sum_{t=t_1}^{t_2-1}M(t+1)\le x(t_2)-x(t_1)\le -\epsilon/2$, i.e., $\left|\sum_{t=t_1}^{t_2-1}M(t+1)\right|\ge\epsilon/2$. So given $t_{1}$,
inequality (\ref{eq:martingaleconcentration2}) with $\alpha=\epsilon/2$
bounds   the probability that $\sup_{t_2>t_1}\left|\sum_{i=t_1}^{t_2}M(i)\right|>\epsilon/2$ by at most $4e^{-\frac{\epsilon^2}{\sum_{i=t_1}^{\infty}4B_4^2/i^2}}$
Increasing $t_{0}$ if necessary, we can assume that the sum of these
probabilities over all $t_{1}\geq t_{0}$ is at most $\epsilon/4$, proving our claim.

Combining our bounds, we conclude that $x(t)\geq\underline{x}-\epsilon$
for all $t\geq t_{0}$ with probability at least $1-\epsilon$.  By Lemma~\ref{lem:maj_shape}, $\phi_{\sigma^{\mathrm{maj}}}^{1}$ is concave
on the region where sampling accuracy is at least $1/2$; when $\lambda=1$,
this is the region $x\ge 1/2$. Also, as in the proof of
Lemma~\ref{lem:maj_ss_types},
\[
\phi_{\sigma^{\mathrm{maj}}}^{1}(1/2)>1/2
\quad\text{and}\quad
\phi_{\sigma^{\mathrm{maj}}}^{1}(1)<1.
\]
Therefore $\phi_{\sigma^{\mathrm{maj}}}^{1}(x)-x$ has a unique zero on
$[1/2,1]$, which we denote by $\bar{x}$. 
Shrinking $\epsilon$ further if necessary so that $\underline{x}-\epsilon>1/2,$
$\bar{x}$ is the only steady state in $[\underline{x}-\epsilon,1]$.
We must therefore have $x(t)\to\bar{x}$ with probability at least
$1-\epsilon$. Hence $\limsup_{n\to\infty}\mathbb{P}(|x(n)-\bar{x}|<\epsilon)\ge1-\epsilon$.
So there is some $\hat{n}(t_{0})$ such that for all $n\ge\hat{n}(t_{0}),$
we have $|x(n)-\bar{x}|<\epsilon$ under strategy $\sigma^{\text{maj}}$
with probability at least $1-2\epsilon$. Replacing $\epsilon$ throughout
the argument with $\epsilon/2$ gives the stated conclusion. 

To complete the proof, we show the first claim that the majority rule
$\sigma^{\text{maj}}$ is an equilibrium when $t_{0}$ and $n$ are
sufficiently large. When $x(t)=\bar{x}$ and $\lambda=1$, sampled signals have accuracy
$\bar{x}>q$. Hence after any strict positive majority in the sample, the
posterior favors $\omega=1$ even if the private signal is negative; after
any strict negative majority, the posterior favors $\omega=-1$ even if
the private signal is positive; and at a tie the posterior follows the
private signal. Thus the unique optimal feasible number of positive
signals endorsed is exactly the choice under the majority rule.
By continuity, find $\epsilon>0$ so that if the event $\{|x(t)-\overline{x}|<\epsilon\}$
happens with probability at least $1-\epsilon,$ then the majority
rule still gives a strictly higher payoff than any other pure strategy.
Using the second part of the claim just proved, find $\hat{t}$ and
$\hat{n}(t)$ so that for any $t_{0}\ge\hat{t}$ and $n\ge\hat{n}(t_{0})$,
we have $|x(n)-\overline{x}|<\epsilon/2$ under strategy $\sigma^{\text{maj}}$
with probability at least $1-\epsilon/2$. Set $\bar{t}=\hat{t}$.
For each $t_{0}\ge\bar{t},$ let $\bar{n}(t_{0})$ be large enough
so that $\hat{n}(t_{0})/\bar{n}(t_{0})<\epsilon/2.$ When the total
number of agents is $n\ge\bar{n}(t_{0}),$ an agent in a uniformly
random position has at least $1-\epsilon/2$ chance of being in position
$\hat{n}(t_{0})$ or later, and if they are in such positions they
have at least $1-\epsilon/2$ chance of facing a current viral accuracy
$x(t)$ with $|x(t)-\overline{x}|<\epsilon/2$ when all others use
the strategy $\sigma^{\text{maj}}$. Thus $\sigma^{\text{maj}}$ is
the agent's best response.
\end{proof}
We can now complete the proof that we can choose $t_0(n)$ such that $\sigma^{\text{maj}}$ is an equilibrium for $n$ sufficiently large and $x(n) \rightarrow \overline{x}$ in probability. Take any decreasing
sequence $\epsilon^{(k)}\to0$. We will construct two increasing
sequences $t_{0}^{(k)}$ and $n^{(k)}$ inductively. Given
$t_{0}^{(1)},...,t_{0}^{(m)}$ and $n^{(1)},...,n^{(m)}$, we can apply Lemma~\ref{lem:changing_lambda} to
find numbers $t_{0}^{(m+1)}$ and $n^{(m+1)}$ so that for $t_{0}^{(m+1)}$
and for any $n\ge n^{(m+1)}$, $|x(n)-\overline{x}|<\epsilon^{(m+1)}$
under strategy $\sigma^{\text{maj}}$ with probability at least $1-\epsilon^{(m+1)}$
and $\sigma^{\text{maj}}$ is an equilibrium. It is without loss to
assume $t_{0}^{(m+1)}>\max\{t_{0}^{(1)},...,t_{0}^{(m)}\}$ and $n^{(m+1)}>\max\{n^{(1)},...,n^{(m)}\}$
(increasing them if necessary). Now for each $n,$ find the largest
$n^{(k)}$ so that $n\ge n^{(k)}$ and let $t_{0}(n)=t_{0}^{(k)}$
(if $n<n^{(1)},$ then set $t_{0}(n)=0$). This ensures (provided
$n\ge n^{(1)})$ that for this choice of $t_{0}(n)$, we have $\sigma^{\text{maj}}$
as an equilibrium and this equilibrium induces $\mathbb{P}[|x(n)-\overline{x}|<\epsilon^{(k)}]>1-\epsilon^{(k)}$.

We now prove the final statement in the proposition. By Lemma~\ref{lem:1_implies_2}, we have $\overline{x}>q$. Fix any virality weight $\lambda'$ and state-symmetric strategy $\sigma$ and suppose there is a steady state $x^*>\overline{x}$ with time-invariant virality weight $\lambda'$ and strategy $\sigma$. By Theorem~\ref{thm:ss_half_stable}, we have $\phi^{\lambda'}_{\sigma}(x^*)=x^*$.

We claim that
\[
\phi_{\sigma^{\mathrm{maj}}}^{\lambda'}(x^*)\ge
\phi_{\sigma}^{\lambda'}(x^*)=x^*.
\]
Let $r:=\lambda' x^*+(1-\lambda')q$
be the sampling accuracy at \(x^*\), and write
\(P_j=\Pr[\operatorname{Binom}(K,r)=j]\). Since \(x^*>q\), we have
\(r\ge q\), with strict inequality whenever \(\lambda'>0\).

Fix \(k>K/2\). Write $A:=\mathbb E[\sigma(1,k)], B:=\mathbb E[\sigma(-1,k)].$ By state symmetry, $\mathbb E[\sigma(1,K-k)]=C-B,
\mathbb E[\sigma(-1,K-k)]=C-A.$
The contribution of the pair \(k,K-k\) to the numerator of
\(\phi_{\sigma}^{\lambda'}(x^*)\) is
\[
\begin{aligned}
T_\sigma(k)
&=
P_k\big(qA+(1-q)B\big)
+
P_{K-k}\big(q(C-B)+(1-q)(C-A)\big)\\
&=
P_{K-k}C
+
A\big(qP_k-(1-q)P_{K-k}\big)
+
B\big((1-q)P_k-qP_{K-k}\big).
\end{aligned}
\]
Because $\frac{P_k}{P_{K-k}}
=
\left(\frac{r}{1-r}\right)^{2k-K}
\ge
\frac{q}{1-q},
$
and \(q>1/2\), both coefficients
$qP_k-(1-q)P_{K-k}$ and $(1-q)P_k-qP_{K-k}$
are weakly positive. Therefore \(T_\sigma(k)\) is maximized subject to
feasibility by setting $A=B=U_k$. This matches the majority rule in samples with \(k\) positive
signals, while state symmetry then implies the mirror choice
\(C-U_k=L_{K-k}\) in samples with \(K-k\) positive signals.

If \(K\) is even and \(k=K/2\), write $A:=\mathbb E[\sigma(1,K/2)].$
State symmetry gives $\mathbb E[\sigma(-1,K/2)]=C-A.$ The contribution of the $k=K/2$ term to the numerator of
\(\phi_{\sigma}^{\lambda'}(x^*)\) is
\[
P_{K/2}\big(qA+(1-q)(C-A)\big)
=
P_{K/2}\big((2q-1)A+(1-q)C\big),
\]
which is maximized by \(A=U_{K/2}\), again matching the majority-rule choice.
Summing over all $(k, K-k)$ pairs proves the claim.

Because \(x^*>q\), the monotonicity observation gives $\phi_{\sigma^{\mathrm{maj}}}^{1}(x^*)
\ge
\phi_{\sigma^{\mathrm{maj}}}^{\lambda'}(x^*)
\ge
x^*$. Since \(\phi_{\sigma^{\mathrm{maj}}}^{1}(1)<1\), the intermediate-value
theorem gives a fixed point $x^{**}\in[x^*,1)$
of \(\phi_{\sigma^{\mathrm{maj}}}^{1}\). But \(x^{**}\ge x^*>\bar{x}\),
contradicting the fact that \(\bar{x}\) is the unique fixed point of
\(\phi_{\sigma^{\mathrm{maj}}}^{1}\) in \([1/2,1]\).
\end{proof}

\section{\label{subsec:simulation_details}Details of the Equilibrium Simulations
for $\lambda>\lambda^{*}$}

In these simulations, we fix signal precision
$q=0.55$ and  capacity $C=3$. We consider sample sizes $K\in\{6,8,10\}$ and all virality weights higher than the critical virality weight in a grid of width 0.02, $\lambda \in \{1, 0.98, 0.96, ...  \}$. We let $n_0=K$ and suppose each seed is correct with probability $q$ and starts with a score of 1. Equivalently, we can think of adding $K$ initial agents who do not see samples and simply add their private signals to the pool of signals. We describe in detail below the methods for calculating equilibrium and estimating equilibrium beliefs. 

\subsection{Symmetric Pure-Strategy Limit Equilibria}
We first identify any symmetric pure-strategy limit equilibria for each $K, \lambda$ pair. Under any
symmetric pure strategy, the likelihood ratio of $\omega=1$ to $\omega=-1$
after observing $k=K/2$ positive sampled signals is 1, whereas
the likelihood ratio after observing $k=K/2+1,K/2+2,...,K$ positive
signals is the reciprocal of the likelihood ratio after observing
$K-k$ such signals. For each $k\in\{K/2+1,...,K\}$, the likelihood
ratio falls into one of the following three cases: (1) between $\frac{1-q}{q}$
and $\frac{q}{1-q},$ so it is optimal for the agent to follow their private
signal; (2) below $\frac{1-q}{q},$ so it is optimal for the agent to
endorse as many negative signals as possible; (3) above $\frac{q}{1-q},$
so it is optimal for the agent to endorse as many positive signals as
possible. Each of the $3^{K/2}$ assignments of these three cases
to various values of $k\in\{K/2+1,...,K\}$ implies a best-responding
strategy, and a pure-strategy limit equilibrium generates likelihood ratios for which the strategy is a best response. 

We use up to three rounds of simulations with increasing precision to identify symmetric pure-strategy limit equilibria. First, for each $(K, \lambda)$ and each of the $3^{K/2}$ candidate  equilibria,  we conduct 10,000 repetitions of a numerical simulation
with 100,000 agents, where all agents use the candidate strategy. A strategy passes the first round if it best responds to the likelihood ratios that it generates in the simulation. 

In the second round, we test a strategy for parameters $(K, \lambda)$ if it passed the first round for $(K, \lambda), (K, \lambda - 0.02)$, or $(K, \lambda+0.02)$. (This guards against missing equilibria that failed the first round due to simulation noise.) The second round is a high-precision re-run of the first round with 100,000 repetitions and 500,000 agents per simulation. If a strategy passes the second round for parameters $(K, \lambda)$ and all simulated likelihood ratios are more than 5 standard errors away from the decision boundaries, then we identify it as a pure-strategy limit equilibrium under those parameters. If a strategy generates  simulated likelihood ratios that are within 5 standard errors of the decision boundary for at least one of the sample realizations, we do a third round of re-run with $10^6$ repetitions and $5\times 10^6$ agents per simulation. In this scenario, the outcome of the third round of simulations determines whether the strategy is identified as a limit equilibrium. 

We find a unique pure-strategy limit equilibrium for $\lambda \in \{0.88, 0.9, 0.92, 0.94, 0.96\}$ when $K=6$, $\lambda \in \{0.8, 0.82, 0.84\}$ when $K=8$, and $\lambda \in \{0.72, 0.74, 0.76, 0.94, 0.96, 0.98, 1\}$ when $K=10$ (as shown in Appendix Table \ref{tab:eqm-all}). We find no pure-strategy limit equilibrium for the other parameter values. 

\subsection{Mixed-Strategy Limit Equilibria}

For $(K, \lambda)$ pairs where we do not find a symmetric pure-strategy limit equilibrium, we search for a symmetric mixed-strategy limit equilibrium in a one-parameter family: the strategy is almost majority rule, except when the agent sees a sample with $K/2+1$ signals on the majority side, they will follow their private signal
with some probability $p\in(0,1)$ and follow the sample majority with the complementary probability. For this strategy to be optimal, the likelihood ratio after observing $K/2+1$
positive signals must be  exactly equal to $\frac{q}{1-q}$, whereas the likelihood
ratio after observing $k>K/2+1$ positive signals must be  strictly above
$\frac{q}{1-q}$.

For each $(K, \lambda)$ pair, we consider all mixed strategies
with the mixing probabilities $p=0,0.05,...,0.95,1.0.$ For each such mixed
strategy, we conduct 30,000 repetitions of a numerical simulation with
30,000 agents who use this strategy. We use the simulated likelihood ratios under different $p$'s to linearly interpolate the mixing probability $p_\text{center}$ that would set the  likelihood ratio from observing $k=K/2+1$ positive signals to be exactly $\frac{q}{1-q}$.  

We then conduct another set of simulations of higher precision, zooming in on the mixing probabilities near $p_\text{center}$. For each $(K, \lambda)$ pair, the second set of simulations focus on the bracket $[p_\text{center}-0.15, p_\text{center}+0.15] \cap [0,1]$. We evenly place 21 mixing probabilities in the bracket. For
each mixed strategy, we conduct 100,000 repetitions of a numerical simulation
with 60,000 agents. These simulations allow us to estimate the
equilibrium in a society with $t$ agents for each $t\in\{200,201,...,60000\}$
by linearly interpolating the value of $p$ that would set the likelihood
ratio of an observation with $K/2+1$ positive signals to be exactly
$\frac{q}{1-q}$. Then, to estimate limit equilibria, we use constrained non-linear least squares to fit a rational function of
the form $t\mapsto\frac{at+b}{ct+1}$ to approximate the equilibrium
mixing probability $p_{t}$ in a society with $t$ agents, under the constraint $ a\ge 0$, $-1 \le b \le 2$, and  $c \ge 10^{-10}$. We divide
the estimated coefficients $a$ and $c$ in the rational function
to estimate $\lim_{t\to\infty}p_{t}$. This procedure estimates the limit equilibrium mixing probabilities reported in Appendix Table \ref{tab:eqm-all}. 

\subsection{Equilibrium Beliefs}
For each $(K, \lambda)$, we conduct a final set of simulations using the identified equilibrium (pure or mixed) to estimate equilibrium beliefs. For each parameter value, we conduct 200,000 repetitions of a numerical simulation with 150,000 agents who use the equilibrium strategy. We simulate the beliefs that agents would have after sampling $k$ positive signals in societies with $t=$ 100, 200, ... 150,000 agents. We use constrained non-linear least squares to fit a rational function of
the form $t\mapsto\frac{at+b}{ct+1}$ to approximate the belief from seeing $k$ positive sampled signals in a society of $t$ agents,  under the constraint $ a\ge 0$, $-1 \le b \le 2$, and  $c \ge 10^{-10}$. We divide
the estimated coefficients $a$ and $c$ in the rational function
to estimate the asymptotic beliefs as $t\to\infty$. This gives the estimated beliefs from seeing sample majorities of different sizes in Appendix Table \ref{tab:eqm-all}. 

\subsection{Detailed Simulation Results}
\begin{table}[H]
\centering
\scriptsize
\setlength{\tabcolsep}{2pt}
\textbf{$K = 6$} \quad ($\lambda^*_K \approx 0.8776$)\\[2pt]
\begin{tabular}{lcccccccccc}
\toprule
$\lambda$ & $p$ & $x^{*}_M$ & $x^{*}_I$ & $\pi_M$ & $E[x^{*}]$ & $b(\mathrm{maj}=2)$ & $b(\mathrm{maj}=4)$ & $b(\mathrm{maj}=6)$ & $E[b\mid x^{*}_M]$ & $E[b\mid x^{*}_I]$ \\
\midrule
$\lambda \uparrow \lambda^*_K$ & 0.0000 & --- & 0.8568 & 0.0000 & 0.8568 & 0.9536 & 0.9976 & 0.9999 & --- & 0.9451 \\
0.8800 & 0.0000 & 0.2771 & 0.8573 & 0.3170 & 0.6734 & 0.6070 & 0.7311 & 0.8303 & 0.3721 & 0.7154 \\
0.9000 & 0.0000 & 0.2317 & 0.8618 & 0.3312 & 0.6531 & 0.5872 & 0.7034 & 0.8020 & 0.3601 & 0.6978 \\
0.9200 & 0.0000 & 0.2093 & 0.8657 & 0.3438 & 0.6400 & 0.5705 & 0.6792 & 0.7754 & 0.3620 & 0.6818 \\
0.9400 & 0.0000 & 0.1938 & 0.8691 & 0.3534 & 0.6304 & 0.5575 & 0.6591 & 0.7518 & 0.3663 & 0.6682 \\
0.9600 & 0.0000 & 0.1821 & 0.8721 & 0.3580 & 0.6250 & 0.5480 & 0.6433 & 0.7318 & 0.3705 & 0.6574 \\
\rowcolor{mixedrow} 0.9800 & 0.1093 & 0.1880 & 0.8695 & 0.3551 & 0.6275 & 0.5473 & 0.6476 & 0.7401 & 0.3660 & 0.6645 \\
\rowcolor{mixedrow} 1.0000 & 0.2142 & 0.1934 & 0.8675 & 0.3484 & 0.6327 & 0.5473 & 0.6528 & 0.7492 & 0.3605 & 0.6727 \\
\bottomrule
\end{tabular}
\\[10pt]
\textbf{$K = 8$} \quad ($\lambda^*_K \approx 0.7825$)\\[2pt]
\begin{tabular}{lccccccccccc}
\toprule
$\lambda$ & $p$ & $x^{*}_M$ & $x^{*}_I$ & $\pi_M$ & $E[x^{*}]$ & $b(\mathrm{maj}=2)$ & $b(\mathrm{maj}=4)$ & $b(\mathrm{maj}=6)$ & $b(\mathrm{maj}=8)$ & $E[b\mid x^{*}_M]$ & $E[b\mid x^{*}_I]$ \\
\midrule
$\lambda \uparrow \lambda^*_K$ & 0.0000 & --- & 0.8616 & 0.0000 & 0.8616 & 0.9368 & 0.9955 & 0.9997 & 1.0000 & --- & 0.9524 \\
0.8000 & 0.0000 & 0.2356 & 0.8650 & 0.3056 & 0.6727 & 0.5807 & 0.6912 & 0.7898 & 0.8631 & 0.3495 & 0.7267 \\
0.8200 & 0.0000 & 0.2112 & 0.8684 & 0.3225 & 0.6565 & 0.5626 & 0.6652 & 0.7612 & 0.8364 & 0.3498 & 0.7089 \\
0.8400 & 0.0000 & 0.1949 & 0.8714 & 0.3293 & 0.6486 & 0.5481 & 0.6438 & 0.7362 & 0.8116 & 0.3539 & 0.6936 \\
\rowcolor{mixedrow} 0.8600 & 0.0973 & 0.2014 & 0.8692 & 0.3250 & 0.6521 & 0.5483 & 0.6484 & 0.7442 & 0.8211 & 0.3483 & 0.7029 \\
\rowcolor{mixedrow} 0.8800 & 0.1946 & 0.2082 & 0.8673 & 0.3192 & 0.6569 & 0.5460 & 0.6506 & 0.7508 & 0.8297 & 0.3448 & 0.7108 \\
\rowcolor{mixedrow} 0.9000 & 0.2964 & 0.2170 & 0.8656 & 0.3116 & 0.6635 & 0.5444 & 0.6551 & 0.7604 & 0.8412 & 0.3407 & 0.7216 \\
\rowcolor{mixedrow} 0.9200 & 0.3918 & 0.2251 & 0.8646 & 0.3026 & 0.6711 & 0.5442 & 0.6598 & 0.7693 & 0.8513 & 0.3366 & 0.7326 \\
\rowcolor{mixedrow} 0.9400 & 0.4817 & 0.2328 & 0.8642 & 0.2936 & 0.6788 & 0.5435 & 0.6638 & 0.7773 & 0.8601 & 0.3330 & 0.7435 \\
\rowcolor{mixedrow} 0.9600 & 0.5680 & 0.2409 & 0.8644 & 0.2837 & 0.6875 & 0.5440 & 0.6680 & 0.7844 & 0.8675 & 0.3301 & 0.7545 \\
\rowcolor{mixedrow} 0.9800 & 0.6488 & 0.2482 & 0.8652 & 0.2752 & 0.6954 & 0.5437 & 0.6697 & 0.7880 & 0.8715 & 0.3292 & 0.7632 \\
\rowcolor{mixedrow} 1.0000 & 0.7253 & 0.2558 & 0.8666 & 0.2677 & 0.7031 & 0.5432 & 0.6689 & 0.7879 & 0.8718 & 0.3307 & 0.7694 \\
\bottomrule
\end{tabular}
\\[10pt]
\textbf{$K = 10$} \quad ($\lambda^*_K \approx 0.7179$)\\[2pt]
\begin{tabular}{lcccccccccccc}
\toprule
$\lambda$ & $p$ & $x^{*}_M$ & $x^{*}_I$ & $\pi_M$ & $E[x^{*}]$ & $b(\mathrm{maj}=2)$ & $b(\mathrm{maj}=4)$ & $b(\mathrm{maj}=6)$ & $b(\mathrm{maj}=8)$ & $b(\mathrm{maj}=10)$ & $E[b\mid x^{*}_M]$ & $E[b\mid x^{*}_I]$ \\
\midrule
$\lambda \uparrow \lambda^*_K$ & 0.0000 & --- & 0.8657 & 0.0000 & 0.8657 & 0.9236 & 0.9932 & 0.9994 & 1.0000 & 1.0000 & --- & 0.9593 \\
0.7200 & 0.0000 & 0.2696 & 0.8661 & 0.2740 & 0.7026 & 0.5888 & 0.7001 & 0.7993 & 0.8721 & 0.9206 & 0.3532 & 0.7634 \\
0.7400 & 0.0000 & 0.2244 & 0.8694 & 0.2929 & 0.6805 & 0.5653 & 0.6673 & 0.7650 & 0.8415 & 0.8960 & 0.3400 & 0.7417 \\
0.7600 & 0.0000 & 0.2031 & 0.8723 & 0.3035 & 0.6691 & 0.5484 & 0.6427 & 0.7368 & 0.8139 & 0.8721 & 0.3411 & 0.7242 \\
\rowcolor{mixedrow} 0.7800 & 0.1368 & 0.2128 & 0.8699 & 0.2995 & 0.6731 & 0.5473 & 0.6464 & 0.7452 & 0.8245 & 0.8827 & 0.3369 & 0.7351 \\
\rowcolor{mixedrow} 0.8000 & 0.2664 & 0.2222 & 0.8681 & 0.2918 & 0.6797 & 0.5454 & 0.6488 & 0.7523 & 0.8337 & 0.8918 & 0.3339 & 0.7453 \\
\rowcolor{mixedrow} 0.8200 & 0.3945 & 0.2333 & 0.8668 & 0.2828 & 0.6877 & 0.5444 & 0.6527 & 0.7609 & 0.8440 & 0.9014 & 0.3315 & 0.7572 \\
\rowcolor{mixedrow} 0.8400 & 0.5199 & 0.2483 & 0.8660 & 0.2715 & 0.6983 & 0.5447 & 0.6582 & 0.7709 & 0.8550 & 0.9111 & 0.3315 & 0.7706 \\
\rowcolor{mixedrow} 0.8600 & 0.6375 & 0.2791 & 0.8659 & 0.2604 & 0.7131 & 0.5453 & 0.6628 & 0.7788 & 0.8635 & 0.9184 & 0.3457 & 0.7830 \\
\rowcolor{mixedrow} 0.8800 & 0.7479 & 0.2718 & 0.8663 & 0.2501 & 0.7176 & 0.5456 & 0.6656 & 0.7839 & 0.8690 & 0.9230 & 0.3322 & 0.7935 \\
\rowcolor{mixedrow} 0.9000 & 0.8515 & 0.2631 & 0.8673 & 0.2399 & 0.7224 & 0.5461 & 0.6669 & 0.7863 & 0.8715 & 0.9250 & 0.3184 & 0.8022 \\
\rowcolor{mixedrow} 0.9200 & 0.9594 & 0.2530 & 0.8685 & 0.2296 & 0.7272 & 0.5485 & 0.6689 & 0.7880 & 0.8729 & 0.9261 & 0.3026 & 0.8105 \\
0.9400 & 1.0000 & 0.2358 & 0.8713 & 0.2321 & 0.7238 & 0.5347 & 0.6455 & 0.7609 & 0.8483 & 0.9068 & 0.3021 & 0.7951 \\
0.9600 & 1.0000 & 0.1936 & 0.8744 & 0.2530 & 0.7022 & 0.5226 & 0.6198 & 0.7256 & 0.8120 & 0.8754 & 0.2887 & 0.7707 \\
0.9800 & 1.0000 & 0.1768 & 0.8771 & 0.2692 & 0.6885 & 0.5152 & 0.6009 & 0.6974 & 0.7804 & 0.8453 & 0.2940 & 0.7499 \\
1.0000 & 1.0000 & 0.1660 & 0.8792 & 0.2811 & 0.6788 & 0.5127 & 0.5892 & 0.6767 & 0.7549 & 0.8190 & 0.2998 & 0.7341 \\
\bottomrule
\end{tabular}
\medskip{}
\caption{Equilibrium simulations for $q=0.55$, $C=3$, and  different values of $K$ and $\lambda$. The first row of each table, labeled $\lambda \uparrow \lambda^*_K$, is the left limit as virality converges from below to the critical virality weight $\lambda^*_K$ for sample  size $K$. Column $p$ shows the limit equilibrium's probability of following private signal upon observing a sample with $k+2$ signals on one side and $k$ signals on the other side. Columns $x_M^*$ and $x_I^*$ are the misleading and informative steady states, while $\pi_M$ is the probability of converging to the misleading steady state. $E[x^*]$ is the expected steady-state viral accuracy (which is also a normalized version of expected social welfare). Columns $b(\text{maj}=\Delta)$ show beliefs in $\omega=1$ when there are $\Delta$ more positive than negative signals in the sample. $E[b \mid x^*_M]$ and $E[b \mid x^*_I]$ are expected beliefs in the correct state after observing the sample, conditional on the misleading and informative steady states respectively. Mixed-equilibrium rows ($0 < p < 1$) have a light-gray background.}
\label{tab:eqm-all}
\end{table}

\begin{figure}[H]
\centering
\includegraphics[width=\linewidth]{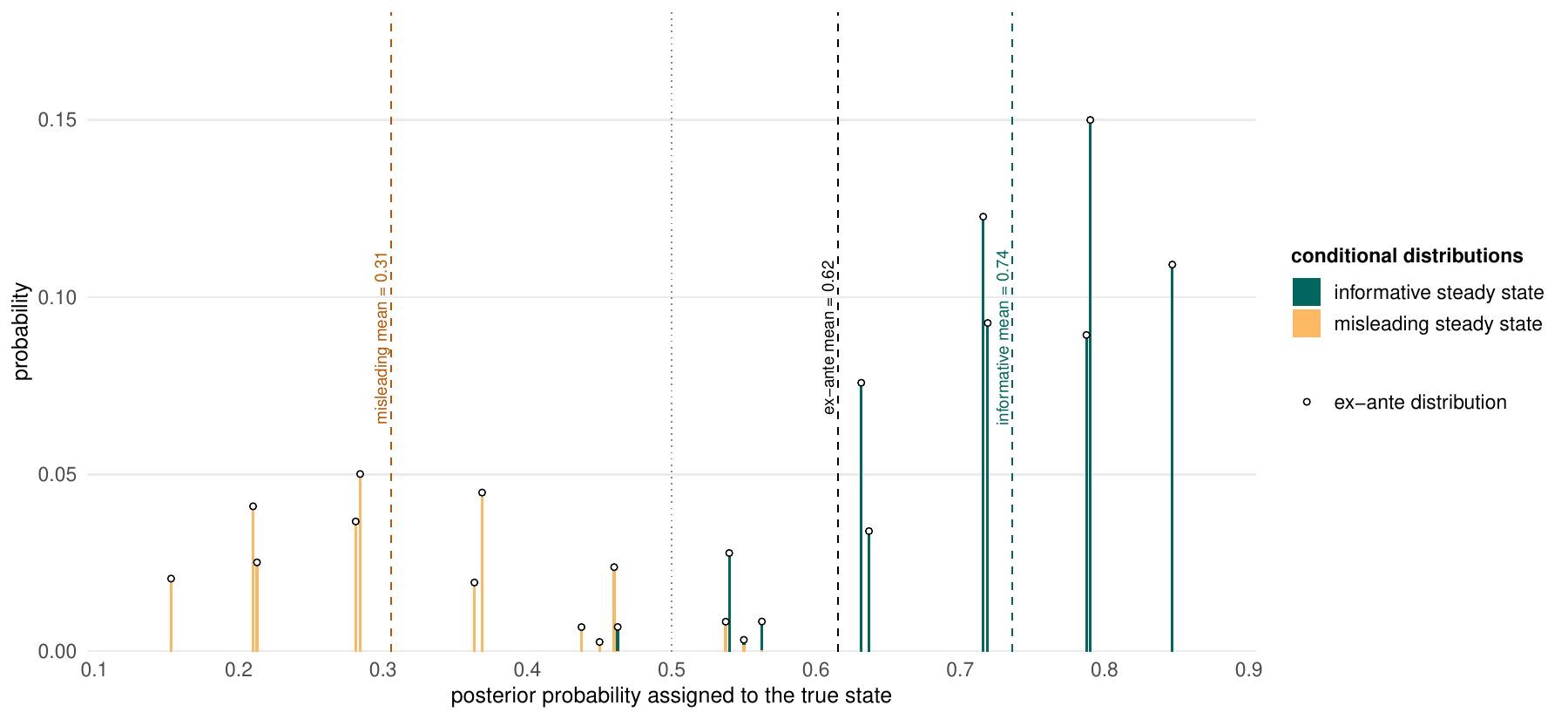}
\caption{Total bar height shows the ex-ante distribution of beliefs in the true state based  on sampled signals and private signals, for $K=10$ and $\lambda=1$. Each color is the conditional distribution of beliefs given one steady state, based on sampled signals and private signals. The two conditional distributions are stacked but they are almost non-overlapping.}
\label{fig:belief_dist_with_signal}
\end{figure}

\end{document}